\newcommand{\newref}[2][]{\hyperref[#2]{#1~\ref*{#2}}}
\renewcommand{\eqref}[1]{\hyperref[#1]{(\ref*{#1})}}
\numberwithin{equation}{section}
\newcommand{\sref}[1]{\newref[Section]{#1}}
\newcommand{\dref}[1]{\newref[Definition]{#1}}
\newcommand{\tref}[1]{\newref[Theorem]{#1}}
\newcommand{\lref}[1]{\newref[Lemma]{#1}}
\newcommand{\pref}[1]{\newref[Proposition]{prop:#1}}
\newcommand{\cref}[1]{\newref[Corollary]{#1}}
\newcommand{\fref}[1]{\newref[Figure]{#1}}
\newcommand{\eref}[1]{\newref[Equation]{#1}}
\newcommand{\clref}[1]{\newref[Claim]{#1}}
\theoremstyle{plain}
\newtheorem{theorem}{Theorem}[section]
\newtheorem{lemma}[theorem]{Lemma}
\newtheorem{claim}[theorem]{Claim}
\newtheorem{proposition}[theorem]{Proposition}
\newtheorem{corollary}[theorem]{Corollary}
\newtheorem{definition}[theorem]{Definition}
\theoremstyle{definition}
\newtheorem{remark}[theorem]{Remark}
\newtheorem{example}[theorem]{Example}
\newcommand{\bkets}[1]{\left(#1\right)}
\newcommand{\braces}[1]{\left\{#1\right\}}
\newcommand{\sbkets}[1]{\left[#1\right]}
\newenvironment{mybox}
{\center \noindent\begin{boxedminipage}{1.0\linewidth}}
{\end{boxedminipage}
\noindent
}
\DeclareMathOperator*{\pr}{\mathsf{Pr}}
\DeclareMathOperator*{\ex}{\mathbb{E}}
\newcommand{\rgta}{\rightarrow}
\newcommand{\lfta}{\leftarrow}
\newcommand{\iprod}[2]{\langle #1,#2\rangle}
\def\inpw#1,#2{\langle #1, #2\rangle}
\newcommand{\reals}{\mathbb{R}}
\newcommand{\R}{\reals}
\newcommand{\note}[1]{\marginpar{\tiny *note in TeX*}}
\newcommand{\ignore}[1]{}
\renewcommand{\phi}{\varphi}
\renewcommand{\epsilon}{\varepsilon}
\newcommand{\zo}{\{0,1\}}
\newcommand{\Mnote}[1]{{}}
\newcommand{\Wnote}[1]{{}}
\newcommand{\pd}{\mathcal{P}(n,2r)}
\newcommand{\pnr}{\mathcal{P}(n,2r)}
\newcommand{\ps}{\mathsf{PS}}
\newcommand{\psd}{\mathsf{PS}(r)}
\newcommand{\bnr}{\binom{[n]}{r}}
\newcommand{\bnlr}{\binom{[n]}{\leq r}}
\newcommand{\sos}{\mathsf{SOS}}
\newcommand{\sa}{\mathsf{SA}}
\newcommand{\ls}{\mathsf{LS}}
\newcommand{\clq}{\mathrm{Clique}}
\newcommand{\m}{\mathcal{M}}
\newcommand{\mm}{M}
\newcommand{\rnr}{\reals^{\binom{[n]}{r} \times \binom{[n]}{r}}}
\newcommand{\rnlr}{\reals^{\binom{[n]}{\leq r} \times \binom{[n]}{\leq r}}}
\date{}
\begin{document}

\title{Sum-of-squares lower bounds for planted clique}
\author{Raghu Meka\\Department of Computer Science\\University of California, Los Angeles\and
Aaron Potechin\thanks{Supported in part by an NSF Graduate Research Fellowship under grant No. 0645960 and by Microsoft Research}\\Massachusetts Institute of Technology\and
Avi Wigderson\thanks{Supported in part by NSF Expeditions grant CCF-0832797}\\Institute for Advanced Study, Princeton}
\maketitle
\begin{abstract}
Finding cliques in random graphs and the closely related ``planted'' clique variant, where a clique of size $k$ is planted in a random $G(n,1/2)$ graph, have been the focus of substantial study in algorithm design. Despite much effort, the best known polynomial-time algorithms only solve the problem for $k = \Theta(\sqrt{n})$. 

In this paper we study the complexity of the {\em planted clique} problem under algorithms from the Sum-Of-Squares hierarchy. We prove the first average case lower bound for this model: for almost all graphs in $G(n,1/2)$, $r$ rounds of the SOS hierarchy cannot find a planted $k$-clique unless $k \geq (\sqrt{n}/\log n)^{1/r}/C^{r}$. Thus, for any constant number of rounds planted cliques of size $n^{o(1)}$ cannot be found by this powerful class of algorithms. This is shown via an integrability gap for the natural formulation of maximum clique problem on random graphs for SOS and Lasserre hierarchies, which in turn follow from degree lower bounds for the Positivestellensatz proof system.

We follow the usual recipe for such proofs. First, we introduce a natural "dual certificate" (also known as a "vector-solution" or "pseudo-expectation") for the given system of polynomial equations representing the problem for every fixed input graph. Then we show that the matrix associated with this dual certificate is PSD (positive semi-definite) with high probability over the choice of the input graph.This requires the use of certain tools. One is the theory of association schemes, and in particular the eigenspaces and eigenvalues of the {\em Johnson scheme}. Another is a combinatorial method we develop to  compute (via traces) norm bounds for certain random matrices whose entries are highly dependent; we hope this method will be useful elsewhere.
\ignore{
 Finding cliques in random graphs and the closely related ``planted'' clique variant, where a clique of size $t$ is planted in a random $G(n,1/2)$ graph, have been the focus of substantial study in algorithm design. Despite much effort, the best known polynomial-time algorithms only solve the problem for $t = \Theta(\sqrt{n})$. Here we show that beating $\sqrt{n}$ would require substantially new algorithmic ideas, by proving a lower bound for the problem in the sum-of-squares (or Lasserre) hierarchy, the most powerful class of semi-definite programming algorithms we know of: $r$ rounds of the sum-of-squares hierarchy can only solve the planted clique for $t \geq \sqrt{n}/(C\log n)^{r^2}$. Previously, no nontrivial lower bounds were known. Our proof is formulated as a degree lower bound in the Positivstellensatz algebraic proof system, which is equivalent to the sum-of-squares hierarchy.
   
The heart of our (average-case) lower bound is a proof that a certain random matrix derived from the input graph is (with high probability) positive semidefinite. Two ingredients play an important role in this proof. The first is the classical theory of association schemes, applied to the average and variance of that random matrix. The second is a new large deviation inequality  for matrix-valued polynomials. Our new tail estimate seems to be of independent interest and may find other applications, as it generalizes both the estimates on real-valued polynomials and on sums of independent random matrices.}
\end{abstract}

\section{Introduction}

\subsection{The problem and main result}

Finding cliques in random graphs has been the focus of substantial study in algorithm design. Let $G(n,p)$ denote Erd\"os-Renyi random graphs on $n$ vertices where each edge is kept in the graph with probability $p$. It is easy to check that in a random graph $G \lfta G(n,1/2)$, the largest clique has size $(2+o(1))\log_2 n$ with high probability. On the other hand, the best known polynomial-time algorithms can only find cliques of size $(1+o(1))\log_2 n$ and obtaining better algorithms remains a longstanding open problem: Karp \cite{Karp76} suggested that even finding cliques of size $(1+\epsilon)\log_2 n$ could require superpolynomial time.

Motivated by this, much attention has been given to the related {\sl planted clique problem} or {\sl hidden clique problem} introduced by Jerrum \cite{Jerrum92} and Kucera \cite{Kucera95}. Here, we are given a graph $G \lfta G(n,1/2,k)$ generated by first choosing a $G(n,1/2)$ random graph and placing a clique of size $k$ in the random graph for $t \gg \log_2 n$. The goal is to recover the hidden clique for as small a $k$ as possible given $G$. \Wnote{Added stuff} The study of the planted clique problem and its variations (like finding planted dense subgraphs) is motivated from several other more recent directions. Its potential as being hard on average has lead to proposals to base crypto systems on variants of it~\cite{AppleBW10}. It was used to argue that testing $k$-wise independence is hard near the information theoretic limit by \cite{AlonAKMRX07}. It is used in~\cite{AroraBBG10} to argue that evaluating some financial derivatives is hard. It was also used to justify the hardness of sparse principal component detection by Bethet and Rigollet \cite{BerthetR13}. Another source of interest comes from the related algorithmic problem of finding large communities in social networks. The best known polynomial-time algorithms can solve the problem for $k = \Theta(\sqrt{n})$ \cite{AlonKS98} (see \cite{DGY14} for a near linear-time algorithm) and improving on this bound has received significant attention. The algorithmic problem has also been of much interest in the context of \emph{signal finding} in molecular biology (pattern discovery in DNA sequences) as modeled in the work of \cite{Pevzner00}.

In this work we exhibit a lower bound for the problem in the powerful Lasserre \cite{Lasserre01} and ``sum-of-squares'' ($\sos$) \cite{Parrilo00} semi-definite programming hierarchies\footnote{For brevity, in the following, we will use $\sos$ hierarchy as a common term for the formulations of Lasserre \cite{Lasserre01} and Parrilo \cite{Parrilo00} which are essentially the same in our context.}. As it happens, proving such lower bounds for the planted clique problem reduces easily to proving an integrality gap of value $k$ for the natural formulation of the maximum clique problem in these hierarchies on $G(n,1/2)$ graphs. Our main result then is the following average-case lower bound for maximum clique. \Wnote{Added stuff} We defer the formal definition of the semi-definite relaxation and hierarchies for now, and only note a few facts. First, that implementing the $r$th level of the $\sos$ hierarchy (namely, $r$ rounds), takes roughly $n^{O(r)}$ time, which is polynomial for constant $r$. Second, the above algorithm for $k=\Theta(\sqrt{n})$ may be viewed as implementing only one round. Third, that $r= \log n$ suffices for exact solution of the problem, namely finding the maximum clique.
Our lower bound implies that polynomial time (when the number of rounds $r$ is constant) cannot handle even $k= n^{o(1)}$, and that as many as $(\log n)^{1/2}$ rounds cannot handle $k = (\log n)^{O(1)}$. Here are more precise statements\footnote{Throughout, $c,C$ denote constants.}.    
\begin{theorem}\label{th:mainhierarchy}
With high probability, for $G \lfta G(n,1/2)$ the natural $r$-round $\sos$ relaxation of the maximum clique problem has an integrality gap of at least $n^{1/2r}/C^{r} (\log n)^2$. 
\end{theorem}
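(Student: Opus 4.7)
The plan is to exhibit, for almost every $G \lfta G(n,1/2)$, an explicit degree-$2r$ pseudo-expectation functional $\wt{\ex}$ on monomials in $x_1,\ldots,x_n$ that (i) is normalized $\wt{\ex}[1]=1$, (ii) respects the Boolean and clique constraints $x_i^2=x_i$ and $x_i x_j=0$ for non-edges $ij$, (iii) certifies value $\wt{\ex}[\sum_i x_i]\geq k$, and (iv) has positive semidefinite moment matrix $M_{I,J}=\wt{\ex}[x_{I\cup J}]$ indexed by subsets $I,J \subseteq [n]$ with $|I|,|J|\leq r$. By SOS/Lasserre duality, such an object certifies that the $r$-round SOS value of the max-clique relaxation is at least $k$, and since the true clique number of $G(n,1/2)$ is only $\Theta(\log n)$ almost surely, the desired integrality gap follows immediately.

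The natural candidate, mimicking the uniform distribution on a planted $k$-clique, is
\[
\wt{\ex}[x_I] = \begin{cases} (k/n)^{|I|} & \text{if } I \text{ is a clique in } G, \\ 0 & \text{otherwise,} \end{cases}
\]
for $|I|\leq 2r$. Conditions (i)--(iii) are immediate from the definition; in particular $\wt{\ex}[\sum_i x_i]=k$. The naive choice may need to be corrected by lower-order terms so that the relevant PSD constraints are tight, but the essential shape and asymptotic behaviour are set by this formula. The entire proof thereby reduces to establishing (iv).

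To analyze $M$, I split $M=\bar{M}+E$ with $\bar{M}=\ex_G[M]$. By symmetry, $\bar{M}_{I,J}$ depends only on $|I|,|J|,|I\cap J|$ (explicitly $(k/n)^{|I\cup J|}\cdot 2^{-\binom{|I\cup J|}{2}}$), so $\bar{M}$ lies, block-by-block in $(|I|,|J|)$, in the Bose-Mesner algebra of the Johnson association scheme. The classical decomposition of this scheme into its eigenspaces then yields an explicit diagonalization of $\bar{M}$, and a direct (though technically involved) computation produces a lower bound on its smallest nonzero eigenvalue $\lambda_{\min}(\bar{M})$ as an explicit function of $n$, $k$, and $r$. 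This step is conceptually routine.

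The main obstacle is bounding the operator norm of the fluctuation $E=M-\bar{M}$. The entries of $M$ are highly dependent: each is a product of edge indicators $\ind{ij\in G}$ over all pairs $ij\subseteq I\cup J$, and distinct entries share many such indicators, so matrix Bernstein and the other standard tools for sums of (nearly) independent random matrices do not apply. My plan is the combinatorial trace method: bound $\ex_G\,\Tr(E^{2q})$ for a large even $2q\sim\log n$ by expanding it into a sum over closed walks of length $2q$ on the ground set of subsets of $[n]$ of size $\leq r$, where each step contributes a centered product of edge indicators, and only walks in which every $G$-edge is used at least twice contribute. The heart of the argument is a careful enumeration of these walk shapes that controls both their number and their per-shape contribution; pushing the resulting bound to $n^{o(q)}\cdot\lambda_{\min}(\bar{M})^{2q}$ and applying Markov then yields $\|E\|<\lambda_{\min}(\bar{M})$ w.h.p.\ and hence PSDness of $M$. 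Matching the walk count against the Johnson-scheme eigenvalue decay is what I expect to pin down the precise threshold $k\geq n^{1/2r}/C^r(\log n)^2$, and is clearly the most technically demanding part of the plan.
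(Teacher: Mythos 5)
Your high-level plan (exhibit a pseudo-expectation, split its moment matrix into an expectation part controlled by the Johnson scheme and a fluctuation part controlled by the trace method) is exactly the paper's, but the concrete certificate you propose fails at the very first step, and the ``lower-order corrections'' you wave at are not lower-order. For $\wt{\ex}[x_I]=(k/n)^{|I|}\mathds{1}[I\text{ is a clique}]$ the expectation matrix you write down, $\bar M_{I,J}=(k/n)^{|I\cup J|}2^{-\binom{|I\cup J|}{2}}$, is \emph{not} PSD once $k>O(1)$. Already on the $(\{\emptyset\}\cup\text{singletons})$ block, with the vector $(1,a,\ldots,a)$ the quadratic form is $1+2ak+a^2(k+k^2/2)+o(1)$, which has a real root (and hence a negative value) as soon as $k>2$; by interlacing the full matrix then has a negative eigenvalue. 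So there is no positive $\lambda_{\min}(\bar M)$ to bound the fluctuation by, and the plan collapses before the trace method enters. The paper avoids this by choosing a genuinely different certificate, $\m(X_I)=\deg_G(I)\cdot\binom{k}{|I|}/\binom{2r}{|I|}$, where $\deg_G(I)$ is the (random, graph-dependent) number of $2r$-cliques containing $I$. This $\deg_G$ factor is not a cosmetic correction: it is what makes $\m((\sum_i x_i-k)\cdot q)=0$ hold \emph{identically}, which via Lemma~\ref{lem:homogeneousreduction} lets one reduce PSD-ness of the whole moment matrix to PSD-ness of its $\binom{[n]}{r}\times\binom{[n]}{r}$ top block, precisely sidestepping the bad Schur complement above. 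The expectation of that block is a genuine Johnson-scheme matrix whose eigenvalues are then shown to be positive (Lemma~\ref{lm:exmineigenvalue}).

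A second structural consequence of the paper's choice that your plan misses: since $\deg_G(I)$ depends on the \emph{entire} graph, the fluctuation $M-\ex[M]$ is not a local random matrix, and the clean ``every edge must repeat'' trace-method argument does not apply to it directly. The paper therefore first passes from $M$ to a filled-in matrix $M'$ (eliminating the zero rows/columns coming from non-clique indices), and then decomposes $M'=E+L+\Delta$ into \emph{three} parts: $E=\ex[M']$ handled by the Johnson scheme, a local centered matrix $L$ handled by the constraint-graph trace method of Section~\ref{sec:localrand}, and a small-entry global error $\Delta$ handled by McDiarmid concentration for $\deg_G$ plus Gershgorin (Section~\ref{sec:normboundDelta}). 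Your two-part split $M=\bar M + E$ has no place for this global piece. In short: your strategy is right in outline, but the naive product certificate is wrong (not merely in need of tuning), and without $\deg_G$ in the certificate you lose both the reduction to the homogeneous block and the clean $E+L+\Delta$ decomposition the analysis hinges on.
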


As a corollary we obtain the following lower bound for the planted clique problem. 
\begin{corollary}\label{cor:mainhierarchy}
With high probability, for $G \lfta G(n,1/2,t)$ the natural $r$-round $\sos$ relaxation of the planted clique problem has an integrality gap of at least $n^{1/2r}/t C^{r} (\log n)^2$. 
\end{corollary}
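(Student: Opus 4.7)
My plan is to deduce the corollary directly from Theorem \ref{th:mainhierarchy} via a straightforward monotonicity argument; no new random-matrix machinery is needed. The first step is to realize the planted distribution via a coupling: draw $G \sim G(n,1/2)$ and a set $S \subseteq [n]$ with $|S|=t$ independently, and set $G' = G \cup K_S$, where $K_S$ is the complete graph on $S$. Then $G'$ has the distribution $G(n,1/2,t)$, and crucially the marginal distribution of the underlying graph $G$ is still $G(n,1/2)$: inserting the clique on $S$ only overwrites the $\binom{t}{2}$ edges inside $S$ and leaves all remaining edges i.i.d.\ fair coins.

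The second step is monotonicity of the clique $\sos$ relaxation under edge addition. The natural $r$-round formulation for a graph $H$ maximizes $\sum_i x_i$ subject to $x_i^2 = x_i$ and $x_i x_j = 0$ for every non-edge $(i,j)$ of $H$. Since $G \subseteq G'$ as edge sets, the set of non-edge constraints for $G'$ is a subset of that for $G$; hence every level-$r$ pseudo-expectation that is feasible for $G$ is feasible for $G'$ with the same objective value, so $\sos_r(G') \geq \sos_r(G)$. By Theorem \ref{th:mainhierarchy}, with high probability over $G \sim G(n,1/2)$ the value $\sos_r(G)$ is at least $n^{1/2r}/(C^{r}(\log n)^2)$ times the integer optimum, which is a fortiori at least $n^{1/2r}/(C^{r}(\log n)^2)$. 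Since the failure event depends only on $G$, the coupling transfers this bound verbatim to $G' \sim G(n,1/2,t)$.

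The third step is to bound the true maximum clique of $G'$. For $t \gg \log n$ the maximum clique of $G'$ is $(1+o(1)) \cdot t$ with high probability, since the non-planted part $G \sim G(n,1/2)$ has maximum clique $(2+o(1))\log_2 n \ll t$. Dividing the lower bound on $\sos_r(G')$ by this upper bound on the integer optimum yields the claimed integrality gap of $n^{1/2r}/(t \cdot C^{r}(\log n)^2)$.

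I do not anticipate a real obstacle here: the argument is essentially a syntactic observation about monotonicity of the clique constraints, coupled with the observation that planting does not disturb the marginal law of the non-planted edges. The only mild care is to confirm that the high-probability event of Theorem \ref{th:mainhierarchy} is measurable in $G$ alone, which is immediate from its construction via a dual certificate defined as a function of the edges of $G$; no extra union bound or conditioning argument is needed.
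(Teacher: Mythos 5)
Your proposal is correct and follows essentially the same route as the paper's own argument: couple $G(n,1/2,t)$ with $G(n,1/2)$ so the planted graph is a supergraph, use monotonicity of the SOS clique value under edge addition, and divide by the integer optimum (roughly $t$) of the planted instance. The only cosmetic difference is that you lower-bound the SDP value by multiplying the integrality-gap ratio of \tref{th:mainhierarchy} by the (trivial $\geq 1$) integer optimum of $G$, whereas the paper directly reads off the SDP value $k$ from the dual certificate constructed in the proof of \tref{th:main}; this costs you at most an extra $\Theta(\log n)$ factor, which is absorbed into the stated bound.
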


\subsection{Background and related work}
Linear and semi-definite hierarchies are one of the most powerful and well-studied techniques in algorithm design. The most prominent of these are the Sherali-Adams hierarchy ($\sa$) \cite{SheraliA90}, Lovasz-Schrijver hierarchy ($\ls$) \cite{LovaszS91}, their semi-definite versions $\sa_+$, $\ls_+$ and Lasserre and $\sos$ hierarchies. The hierarchies present progressively stronger convex relaxations for combinatorial optimization problems parametrized by the number of {\sl rounds} $r$, where the $r$-round relaxation can be solved in $n^{O(r)}$ time on instances of size $n$ in all of them. In terms of relative power (barring some minor technicalities about how the numbering of rounds starts), it is known that $\ls_+(r) < \sa_+(r) < \sos(r)$. Because they capture most powerful techniques for combinatorial optimization, lower bounds for hierarchies serve as strong unconditional evidence for computational hardness. Such lower bounds are even more relevant and compelling in situations where we do not have NP-hardness results, \Wnote{Not sure why the following was removed - I put it back} as is the case for typical average-case optimization problems. 

Broadly speaking, our understanding of the $\sos$ hierarchy is more limited than those of $\ls_+$ and $\sa_+$ hierarchies and in fact the $\sos$ hierarchy appears to be much more powerful. A particularly striking example of this phenomenon was provided by a recent work of Barak et al.~\cite{BarakBHKSZ12}. They showed that a constant number of rounds of the $\sos$ hierarchy can solve the much studied {\sl unique games problem} on instances which need super constant number of $\ls_+,\sa_+$ rounds. It was also shown by the works of \cite{BarakRS11, GuruswamiS11} that the $\sos$ hierarchy captures the sub-exponential algorithm for unique games of \cite{AroraBS10}. These results emphasize the need for a better understanding of the power and limitations of the $\sos$ hierarchy.

From the perspective of proving limitations, all known lower bounds for the $\sos$ hierarchy essentially have their origins in the works of Grigoriev \cite{Grigoriev01b, Grigoriev01}, some of which were later independently rediscovered by Schoenebeck \cite{Schoenebeck08}. These works show that even $\Omega(n)$ rounds of $\sos$ hierarchy cannot solve random $3XOR$ or $3SAT$ instances,  implying a strong unconditional average-case lower bound for a natural distribution. 

Most subsequent lower bounds for $\sos$ hierarchy such as those of \cite{Tulsiani09}, \cite{BhaskaraCVGZ12} rely on \cite{Grigoriev01b} and \cite{Schoenebeck08} and gadget reductions. For example, Tulsiani \cite{Tulsiani09} shows that $2^{O(\sqrt{\log n})}$ rounds of $\sos$ has an integrality gap of $n/2^{O(\sqrt{\log n})}$ for maximum clique in worst-case. This is in stark contrast to the average-case setting: even a single round of $\sos$ gets an integrality gap of at most $O(\sqrt{n})$ for maximum clique on $G(n,1/2)$ \cite{FeigeK00}. Thus, the worst-case and average-case problems have very different complexities. Finally, using reductions tend to induce distributions that are far from uniform and definitely not as natural as $G(n,1/2)$. 

For max-clique on random $G(n,1/2)$ graphs, Feige and Krauthgamer \cite{FeigeK00} showed that $\ls_+(r)$, and hence $\sos(r)$, has an integrality gap of at most $\sqrt{n}/2^{\Omega(r)}$ with high probability. Complementing this, they also showed \cite{FeigeK03} that the gap remains $\sqrt{n}/2^r$ for $\ls_+(r)$ with high probability. However, there were no non-trivial lower bounds known for the stronger $\sos$ hierarchy.

For the planted clique problem, other algorithmic techniques were studied. Jerrum \cite{Jerrum92} showed that a broad class of Markov chain Monte-Carlo (MCMC) based methods cannot solve the problem when the planted clique has size $O(n^{1/2-\delta})$ for any constant $\delta > 0$. Another approach for the planted clique problem based on optimizing a third order tensor was suggested by Frieze and Kannan \cite{FK08}. However, the corresponding optimization problem is NP-hard in the worst-case.

In a recent work, Feldman et al.~\cite{FeldmanGRVX13} introduced the framework of  {\sl statistical algorithms} which generalizes many algorithmic approaches like MCMC methods and showed that such algorithms cannot find large cliques when the planted clique has size $O(n^{1/2-\delta})$ in less than $n^{\Omega(\log n)}$ time\footnote{The results of \cite{FeldmanGRVX13} actually apply to the harder {\sl bipartite} planted clique problem, but this assumption is not too critical.}. However, their framework seems quite different from hierarchy based algorithms. In particular, the statistical algorithms framework is not applicable to algorithms which first pick a sample, fix it, and then perform various operations (such as convex relaxations) on it, as is the case for the hierarchies above. 

Meka and Wigderson \cite{MekaW13} addressed $\sos$ lower bounds for planted clique and claimed a stronger bound than Thm 1.1. While there was a fatal error in their proof, many of the techniques introduced there are used in the present paper. 

Independent of our work, Deshpande and Montanari \cite{DeshpandeM15} recently gave a degree $4$ $\sos$ lower bound for planted clique; while they are only able to handle the degree $4$ case (i.e., $r=2$) , they obtain a better bound for this case than us (roughly $n^{1/3}$ vs $n^{1/4}$ as we do).


\subsection{Proof systems and SDP hierarchies}

A potentially simpler problem than deciding is a large clique exists is the problem of producing short certificates to the non-existence of such cliques. This puts the problem in the realm of proof complexity. Indeed,
we approach the problem of $\sos$ lower bounds from this viewpoint, via the {\sl positivstellensatz} proof system perspective of Grigoriev and Volobjov \cite{GrigorievV01}. We explain this proof system next in general, and then specialize to Boolean problems and specifically to planted clique. 

Suppose we are given a system of polynomial equations or ``axioms'' 
$$f_1(x) = 0,\; f_2(x) = 0,\;\ldots,\;f_m(x) = 0,$$
where each $f_i: \reals^n \rgta \reals$ is a $n$-variate polynomial. A positivstellensatz {\sl refutation} of the system $\mathcal{F} = ((f_i))$ is an identity of the form
$$ \sum_{i=1}^m f_i g_i \equiv 1 + \sum_{i=1}^N h_i^2,$$
where $\{g_1,\ldots,g_m\}$ and $\{h_1,\ldots,h_N\}$ are arbitrary $n$-variate polynomials. Clearly,  if there exists an identity as above, then the system $\mathcal{F}$ has no solution over reals. Starting with the seminal work of Artin on Hilbert's seventeenth problem \cite{Artin27}, a long line of important results in real algebraic geometry -- \cite{Krivine64, Stengle73,Putinar93,Schmudgen91}; cf.~\cite{BCR98} and references therein -- showed that, under some (important) technical conditions\footnote{We avoid going into the details here as the conditions are easily met in the presence of Boolean axioms.}, such certifying identities {\em always exist} for an infeasible system. This motivates the following notion of complexity for refuting systems of polynomial equations.
\begin{definition}[Positivstellensatz Refutation, \cite{GrigorievV01}]\label{dfn:psdrefute}
 Let $\mathcal{F} \equiv \{f_1,\ldots,f_n: \reals^n \rgta \reals\}$, be a system of {\sl axioms}, where each $f_i$ is a real $n$-variate polynomial. A positivstellensatz {\sl refutation} of degree $r$ ($\ps(r)$ refutation, henceforth) for $\mathcal{F}$ is an identity of the form
 \begin{equation}
   \label{eq:psdrefute}
    \sum_{i=1}^m f_i g_i \equiv 1 + \sum_{i=1}^N h_i^2,
 \end{equation}
where $g_1,\ldots,g_m, h_1,\ldots,h_N$ are $n$-variate polynomials such that $deg(f_ig_i) \leq 2r$ for all $i \in [m]$ and $deg(h_j) \leq r$ for all $j \in [N]$. 
\end{definition}

Our interest in positivstellensatz refutations as above comes from the known relations between such identities and $\sos$ hierarchy. Informally (and under appropriate technical conditions), identities as above of degree $r$ show that $\sos$ hierarchy can {\sl certify} infeasibility of the axioms in $2r+\Theta(1)$ rounds and vice versa. We will focus on showing degree lower bounds for identities as above and use them to get integrality gaps for the the $\sos$ hierarchy. We formalize this in \sref{sec:psdtogaps}. For a brief history of the different formulations from \cite{GrigorievV01}, \cite{Lasserre01}, \cite{Parrilo00} and the relations between them and results in real algebraic geometry we refer the reader to \cite{ODonnellZ13}.

Given the above setup, we shall consider the following set of natural axioms to test if a graph $G$ has a clique of size $k$. 

\begin{definition}
  Given a graph $G$, let $\clq(G,k)$ denote the following set of polynomial axioms:
\begin{align}\label{eq:maxclique}
  \text{(Max-Clique): }\;\;\;& x_i^2 - x_i,\;\; \forall i \in [n]\nonumber\\
&x_i \cdot x_j ,\;\; \forall \text{ pairs }\{i,j\} \notin G\\
&\sum_i x_i - k.\nonumber
\end{align}
\end{definition}
Here, the equations on the first line are {\sl Boolean axioms} restricting feasible solutions to be in $\zo^n$. The equations on the second line constrain the support of any feasible $x$ to define a clique in $G$. Finally, the equation on the third line specifies the size of support of $x$. Thus, for any graph $G$, $\clq(G,k)$ is feasible if and only if $G$ has a clique of size $k$. Our core result is to show lower bounds on positivstellensatz refutations for $\clq(G,k)$.

\begin{theorem}[Main]\label{th:main}
With high probability over $G \lfta G(n,1/2)$, the system $\clq(G,k)$ defined by \eref{eq:maxclique} has no $\psd$ refutation for $k \leq n^{1/2r}/C^{r} (\log n)^{1/r}$
\end{theorem}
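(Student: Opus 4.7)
The plan is to show no $\ps(r)$ refutation exists by exhibiting a dual certificate: a linear functional $\wt{\ex}: \reals[x_1, \ldots, x_n]_{\leq 2r} \rgta \reals$ (the \emph{pseudo-expectation}) satisfying $\wt{\ex}[1] = 1$, $\wt{\ex}[f \cdot g] = 0$ for every axiom $f \in \clq(G,k)$ and every polynomial $g$ of appropriate degree, and $\wt{\ex}[h^2] \geq 0$ for every polynomial $h$ of degree at most $r$. By the standard convex duality between positivstellensatz refutations and pseudo-expectations (which we would formalize in a separate lemma), the existence of such a $\wt{\ex}$ precludes a $\ps(r)$ refutation of $\clq(G,k)$, and it suffices to produce one with high probability over $G \lfta G(n,1/2)$.

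First I would construct $\wt{\ex}$. The Boolean axioms $x_i^2 = x_i$ and non-edge axioms $x_i x_j = 0$ for $\{i,j\} \notin G$ together force $\wt{\ex}[x_S] = 0$ whenever $S$ is not a clique in $G$, so $\wt{\ex}$ is determined by its values on cliques $S \in \bnlr$ (extended to degree $2r$ via $x_S x_T = x_{S \cup T}$). The naive choice $\wt{\ex}[x_S] = (k/n)^{|S|}$ on cliques fails, because for $G \lfta G(n,1/2)$ small cliques are rare, so this violates either positivity or the size axiom $\sum_i x_i = k$. Following the Feige-Krauthgamer / Meka-Wigderson template, I would rescale by (roughly) the probability that $S$ is a clique in a random graph and add correction terms, written as a signed sum over clique-extensions $T \supseteq S$ of bounded size, chosen so that $\wt{\ex}[(\sum_i x_i - k) \cdot x_S] = 0$ holds exactly for every $S$ of degree at most $2r-1$. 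Checking the linear axioms then reduces to an algebraic identity; the real content is in the positivity condition.

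Positivity of $\wt{\ex}$ on squares is equivalent to the moment matrix $M \in \reals^{\bnlr \times \bnlr}$, $M[S,T] = \wt{\ex}[x_{S \cup T}]$, being PSD. I would split $M = \ex_G[M] + (M - \ex_G[M])$. The expectation $\ex_G[M]$ is $S_n$-invariant (depending only on $|S|, |T|, |S \cap T|$) and therefore lies in the Bose-Mesner algebra of the Johnson scheme on $\bnlr$; using its explicit eigenspace decomposition, I would verify that $\ex_G[M]$ is PSD and derive a quantitative lower bound on its smallest nonzero eigenvalue in each Johnson eigenspace. This step is essentially a symbolic calculation enabled by the association scheme symmetry.

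The main obstacle is then bounding the spectral deviation $\|M - \ex_G[M]\|$. Its entries are polynomials of degree up to $2r$ in the $\binom{n}{2}$ highly dependent edge-indicator variables of $G$, so off-the-shelf matrix Chernoff bounds (which need independent summands) do not apply. My plan is the trace method: estimate $\ex_G\bigl[\Tr\bigl((M - \ex_G[M])^{2q}\bigr)\bigr]$ for $q$ slowly growing with $n$, expand the trace as a sum over closed walks on $\bnlr$, group contributing terms by the combinatorial \emph{shape} of the walk, and bound the number of shapes by a compact graph encoding while exploiting the cancellation produced by centering. The target is a norm bound small enough that, combined with the Johnson eigenvalue lower bound on $\ex_G[M]$, $M$ remains PSD whenever $k \leq n^{1/2r}/(C^r (\log n)^{1/r})$, matching the theorem. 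This shape enumeration — carefully tracking how many edge-centerings each shape carries, controlling the overcount from automorphisms, and balancing the combinatorial count against the variance factor per shape — is the technical heart of the paper and the step I would expect to require the most new ideas beyond prior work.
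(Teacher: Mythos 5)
Your high-level plan matches the paper's: exhibit a dual certificate (pseudo-expectation) for $\clq(G,k)$, reduce PSDness of the moment matrix to the Johnson scheme for the expectation part plus a spectral-deviation bound, and prove the deviation bound by the trace method on closed walks with a shape/constraint-graph encoding. The dual certificate you sketch — normalizing by clique-probabilities and adding correction terms so that $\wt{\ex}[(\sum_i x_i - k)\cdot x_S]=0$ — is in the right spirit, and the paper's explicit choice $\m(X_I) = \deg_G(I)\cdot\binom{k}{|I|}/\binom{2r}{|I|}$ realizes it cleanly (note that the paper's correction is a plain nonnegative count of $2r$-clique extensions, not a signed sum).

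However, there is a genuine gap in the decomposition step. You propose to split $M = \ex_G[M] + (M - \ex_G[M])$ and then control $\|M - \ex_G[M]\|$ directly by the trace method. Two obstructions arise. First, $M$ has many all-zero rows and columns: whenever $I$ or $J$ fails to be a clique in $G$, the entry $\deg_G(I\cup J)$ vanishes, and these zero rows do not align with the structure of $\ex_G[M]$, so $M - \ex_G[M]$ has entries of variance comparable to $\ex_G[M]$ itself and the naive split has no slack. The paper handles this by first passing to a modified matrix $M'$ that ``fills in'' the zero rows (replacing the clique indicator on $I,J$ by a bridge-edge indicator on $I\times J$), and showing that PSDness of $M'$ implies PSDness of $M$ because the nonzero part of $M$ is a principal submatrix of $M'$ (\lref{lm:MprimePSDtoMPSD}). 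Second, and more importantly, $M - \ex_G[M]$ is \emph{not} a local random matrix: each entry $\deg_G(I\cup J)$ is a count over all $2r$-subsets containing $I\cup J$, so it depends on edges far from $I\cup J$. Your proposed trace expansion, with shapes indexed by walks on $\bnlr$ and bounds based on how edge-indicators are centered, is the right tool for matrices whose $(I,J)$ entry depends only on the edges inside $I\cup J$; applied directly to $M - \ex_G[M]$ the combinatorics does not close, since each entry is already a large sum of dependent terms. The paper resolves this with a three-way split $M' = E + L + \Delta$: $L$ is a genuine local matrix (its $(I,J)$ entry depends only on the bridge edges in $I\cup J$), whose norm is bounded by exactly the trace/constraint-graph argument you outline (\sref{sec:localrand}); and $\Delta$ captures the global fluctuations of $\deg_G$ around its conditional mean, which is bounded entry-by-entry via McDiarmid-style concentration of clique counts and then Gershgorin (\sref{sec:normboundDelta}). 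Without this separation, the step ``bound $\|M - \ex_G[M]\|$ by the trace method'' does not go through as stated.
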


Given the above theorem it is easy to deduce the integrality gap for the SOS hierarchy, \tref{th:mainhierarchy}: see \sref{sec:psdtogaps}. We next highlight the outline of the proof, and some of our techniques which may be of broader interest. 
\Wnote{Moved techniques to Outline, and made it a subsection of the Intro}

\subsection{Outline}\label{sec:outline}
We now give an outline of our arguments. As in most previous works (cf.~\cite{Grigoriev01}, \cite{Grigoriev01b}, \cite{Schoenebeck08}) on showing lower bounds for $\psd$ refutations, our main tool will be a {\sl dual certificate}. \Wnote{Added}
We note that in the context of hierarchies above, this object is called either a {\em vector solution}\footnote{in which numerical values to variables are replaced by vector values}, or {\em pseudo-expectation}\footnote{reflecting the view of these values as moments of a (possibly nonexistent) probability distribution}. We now turn to define this important notion, which arises naturally from using duality to prove that a degree $r$ refutation like~\ref{eq:psdrefute} does {\em not} exist. 
Let $\pnr:\reals^n \rgta \reals$ be the set of $n$-variate real polynomials of total degree at most $2r$.

\begin{definition}[PSD Mappings]
  A linear mapping $\m:\pnr \rgta \reals$ is said to be positive semi-definite (PSD) if $\m(P^2) \geq 0$ for all $n$-variate polynomials $P$ of degree at most $r$.
\end{definition}

\begin{definition}[Dual Certificates]\label{def:dualcert}
Given a set of axioms $f_1,\ldots,f_m$, a dual certificate for the axioms is a PSD mapping $\m:\pnr \rgta \reals$ such that $\m(f_ig) = 0$ for all $i \in [m]$ and all polynomials $g$ such that $deg(f_i g) \leq 2r$.
\end{definition}

Under reasonable technical conditions which ensure {\sl strong duality}, the converse also holds. For the clique axioms from \eref{eq:maxclique}, a dual certificate would correspond to a feasible vector solution for the $r$-round $\sos$ relaxation for maximum clique (see \fref{fig:lasserresdp} for the exact formulation) with {\sl value} $k$.

The following elementary lemma will be crucial.
\begin{lemma}[Dual Certificate]\label{lm:dualcert}
  Given a system of axioms $((f_i))$, there does not exist a $\psd$ refutation of the system if there exists a dual certificate $\m:\pnr \rgta \reals$ for the axioms.
\end{lemma}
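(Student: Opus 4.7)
The plan is to argue by contradiction: assume both a $\ps(r)$ refutation and a dual certificate $\m$ exist, then apply $\m$ to both sides of the refutation identity and show the two sides have incompatible signs.

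Concretely, suppose we have a refutation
\[
\sum_{i=1}^m f_i g_i \;\equiv\; 1 + \sum_{j=1}^N h_j^2,
\]
with $\deg(f_i g_i) \leq 2r$ and $\deg(h_j) \leq r$. Both sides then lie in $\pnr$, so I can apply the linear functional $\m$ to each side and compare. On the left, the defining property of the dual certificate gives $\m(f_i g_i) = 0$ for every $i$ (since $g_i$ has $\deg(f_i g_i) \leq 2r$), so the entire LHS evaluates to $0$. On the right, because each $h_j$ has degree at most $r$, the PSD property yields $\m(h_j^2) \geq 0$; and since $1 = 1^2$ is the square of a degree-$0$ polynomial, $\m(1) \geq 0$ as well. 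Putting these together,
\[
0 \;=\; \sum_{i=1}^m \m(f_i g_i) \;=\; \m(1) + \sum_{j=1}^N \m(h_j^2) \;\geq\; \m(1).
\]

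The only delicate point — which I expect to be the main thing to pin down carefully — is extracting a strict inequality $\m(1) > 0$ to close the contradiction. With the convention that a dual certificate is normalized so that $\m(1) = 1$ (the standard pseudo-expectation/pseudo-distribution normalization alluded to in the excerpt, and what one obtains from the natural $\sos$ vector solution for $\clq(G,k)$), the RHS above is at least $1$, contradicting the vanishing of the LHS and completing the proof. In the absence of such a normalization one could add the explicit hypothesis $\m(1) > 0$ to the lemma, since the zero functional trivially satisfies the bare conditions of Definition~\ref{def:dualcert}; I would state this normalization convention explicitly at the start of the proof to avoid any ambiguity, and then the remainder is the three-line calculation above.
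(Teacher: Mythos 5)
Your argument is exactly what the paper intends by its one-line proof (``apply $\m$ to both sides\ldots to arrive at a contradiction''): the LHS vanishes by the annihilation property, the RHS is a sum of nonnegative terms, and the contradiction follows. You are also right to flag that Definition~\ref{def:dualcert} as stated admits the zero functional, so the normalization $\m(1)>0$ is genuinely needed to close the argument; the paper leaves this implicit (it is satisfied by the concrete certificate \eref{eq:dualsol}, where $\m(1)=\deg_G(\emptyset)>0$ w.h.p., and is imposed as $\mm(\emptyset,\emptyset)=1$ in the appendix), and your explicit handling of it is a correct and welcome clarification rather than a deviation.
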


The existence of such a mapping trivially implies a lower bound for $\psd$ refutations: apply $\m$ to both sides of a purported $\psd$ identity as in \eref{eq:psdrefute} to arrive at a contradiction. 

The lemma suggests a general recipe for proving $\psd$ refutation lower bounds: 

\begin{itemize}
\item Design a dual certificate $\m$: For the clique axioms we care about, it is easy to figure out what the {\sl right} dual certificate $\m$ ``should be'' by working backwards from the axioms. The same happens also for the $\psd$ refutation lower bounds of \cite{Grigoriev01, Grigoriev01b}. The main hurdle then is to show that the obtained mapping $\m$ is indeed PSD. At a high level, this reduces to proving a certain random matrix $M \in \reals^{\binom{n}{r} \times \binom{n}{r}}$ is PSD. We show that $M$ is PSD in three steps.
\item Reduction to PSDness of another matrix $M'$: The matrix $M$ has many zero rows and columns which makes it difficult to work with. In \sref{sec:reductiontomprime} we fix this by filling in the zero rows and columns of $M$ to obtain a new matrix $M'$. We then argue that to show $M$ is PSD it is sufficient to show that $M'$ is PSD. 
\item (Deterministic) Matrix analysis: $E = E[M']$ is PSD with a large minimum eigenvalue $\lambda_{min}(E)$. We show this statement in Section \sref{sec:psdexpectation} by using the theory of association schemes described below.
\item Large deviation: with high probability, $\|M'-E\| \leq \lambda_{min}(E)$. This is done by using the structure of our matrix $M'$ along-with a careful application of the trace method to bound the norms of certain random matrices with dependent entries. 
\end{itemize}

We note here the main techniques used.

\paragraph{Techniques: Association schemes}
As discussed, the essence of proving \tref{th:main} involves showing that a certain random matrix is positive semi-definite (PSD) with high probability. In our case, this calls for showing a relation of the form $A \prec B$\footnote{Here and henceforth $\prec$ denotes PSD ordering: $A \prec B$ if and only if $B-A$ is positive definite.} for two matrices $A,B$ whose rows and columns are indexed by subsets of $[n]$ of size $r$. This in turn leads us to matrices which though complicated to describe, will be {\sl set-symmetric} - the entry defined by any two (row and column) sets $I,J$ depends solely on the size of the intersection $I\cap J$. The set of all such matrices, called the {\sl Johnson scheme}, is quite well studied in combinatorics as a special case of {\sl association schemes}. In particular,  all such matrices commute with one another and their common eigenspaces are completely understood. This theory allows us to estimate the eigenvalues and norms of various matrices that arise in the analysis. 
\Mnote{Removed stuff here which does not fit the new proof.}
\ignore{Often, in such situations and especially those involving random matrices, it suffices to show that the smallest eigenvalue of $B$ is bigger than the norm of $A$. This will not be the case for us. Luckily, some of the matrices we study, though complicated to describe, will be {\sl set-symmetric} - the entry defined by any two (row and column) sets $I,J$ depends solely on the size of the intersection $I\cap J$. The set of all such matrices, called the {\sl Johnson scheme}, is quite well studied in combinatorics as a special case of {\sl association schemes}. In particular,  all such matrices commute with one another and their common eigenspaces are completely understood. This theory allows us to show that the eigenvalue of $B-A$ is non-negative on each of the specific eigenspaces of the Johnson scheme.}

\Wnote{Added}
\paragraph{Techniques: Trace bounds for locally random matrices} After various simplifications and reductions, a central problem we have to deal with is upper bounding the spectral norm of certain random matrices, defined by the underlying random graph $G \lfta G(n,1/2)$. As above, these matrices have rows and columns indexed by subsets of vertices. The entry $(I,J)$ of the matrix will be a random variable of expectation zero, which depends only on the edges and non-edges of $G$ in the subgraph induced by $I\cup J$ (hence we name such matrices {\em local}). In the simple case when $r=1$ (so rows and columns are indexed by singletons), which is the one studied in the analysis of the $\sqrt{n}$ approximation algorithm, the random variables in all entries are mutually independent, and a norm bound is easy to obtain by a straightforward use of the trace method. However, for $r>1$ as we need to handle, the entries of the matrix are dependent whenever the edge sets of their entries intersect. This significantly complicates the trace calculation, and we develop some combinatorial tools to bound the trace of high powers of such local matrices.

\Mnote{Some details here.}

\Mnote{Write more details for the large deviation part.}

\section{Dual certificate for $\psd$ refutations of max-clique}
We will specify the dual certificate $\m$ by defining it for polynomials where each individual variable has degree at most $1$ and {\sl extend} $\m$ multi-linearly to all polynomials: for any polynomial $P$, $\m(P) = \m(\tilde{P})$ where $\tilde{P}$ is obtained from $P$ by reducing the individual degrees of all variables to $1$. We can do this without loss of generality because of the Boolean axioms. 

As mentioned in the introduction, we can often work out what the dual certificate should be from the axioms and basic linear algebra. As an example, we first work out the case where the graph $G$ is the complete graph; this will also help us draw a concrete connection to the work of \cite{Grigoriev01}. 

\subsection{Complete graph and knapsack}
For complete graph, the clique axioms simplify to 
\begin{align}\label{eq:maxclique}
  \text{(Max-Clique): }\;\;\;& x_i^2 - x_i,\;\; \forall i \in [n]\nonumber\\
&\sum_i x_i - k.\nonumber
\end{align}
These incidentally also correspond to proving lower bounds for knapsack as studied by Grigoriev \cite{Grigoriev01} (and was what lead us to the specific dual certificate we study). However, in the context of lower bounds for knapsack, the axioms are mainly interesting for non-integer $k$ and Grigoriev shows that for  non-integer $k \leq n/2$, the above system has no $\psd$ refutation for $r < k$. 

\newcommand{\mgr}{\m_{Gr}}

The above axioms tell us that any candidate dual certificate $\mgr \equiv :\pnr \rgta \reals$ should satisfy:
\begin{align*}
  &\mgr\left(\left(\sum_{i=1}^n x_i -k\right)\left(\prod_{i \in I} x_i\right)\right) = 0, \text{ $\forall I,\,|I| < 2r$.}
\end{align*}
For $I \subseteq [n]$, let $X_I = \prod_{i \in I} x_i$. Now, as the above equation is symmetric, it is natural to assume that $\mgr$ is also symmetric in the sense that $\mgr(X_I) = f(|I|)$ for some function $f:\{0,\ldots,2r\} \to \reals_+$. Working from this assumption, Grigoriev derives the following recurrence relation for $f:\{0,\ldots,2r\} \to \reals_+$,
$$f(i+1) = \frac{k-i}{n-i} f(i).$$
From the above it follows that we can define $f$ and hence $\m$ as follows:
$$\mgr(X_I) = f(|I|) = f(0) \cdot \frac{k(k-1)\cdots (k-|I|)}{n(n-1)\cdots (n-|I|)}$$
Grigoriev takes $f(0) = 1$. Here we set $f(0) = \binom{n}{2r}$ with a view towards what is to come. Thus, the final certificate is
\begin{equation}
  \label{eq:grigoriev}
\mgr(X_I) =  \binom{n}{2r} \cdot \frac{k(k-1)\cdots (k-|I|)}{n(n-1)\cdots (n-|I|)} = \binom{n-|I|}{2r-|I|} \cdot \frac{\binom{k}{|I|}}{\binom{2r}{|I|}}.  
\end{equation}
Grigoriev shows the following:
\begin{theorem}[\cite{Grigoriev01}]\label{th:grigknapsack}
For $k < n/2$, the mapping $\mgr$ defined above is PSD for $r < k$. 
\end{theorem}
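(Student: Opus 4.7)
To prove that $\mgr$ is PSD, I would reduce the question to positive semidefiniteness of the moment matrix $M \in \rnlr$ defined by $M_{I,J} = f(|I \cup J|)$. Any multilinear polynomial $P$ of degree at most $r$ expands as $P = \sum_{|I| \leq r} a_I X_I$, and after reducing $x_i^2 \to x_i$ in $P^2$ one obtains $\mgr(P^2) = a^\top M a$. So PSDness of $\mgr$ is equivalent to $M \succeq 0$.

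The key observation is that $M$ commutes with the natural $S_n$ action permuting coordinates (since $M_{I,J}$ depends only on $|I \cup J|$). Hence $M$ preserves the $S_n$-isotypic decomposition of $\reals^{\binom{[n]}{\leq r}}$. The isotypic components correspond to the two-row partitions $(n-j, j)$ for $0 \leq j \leq r$, and the $(n-j, j)$-irrep appears with multiplicity $r - j + 1$ in $\reals^{\binom{[n]}{\leq r}}$ (once for each subset-size $s \in \{j, j+1, \ldots, r\}$). By Schur's lemma, $M$ acts on each isotypic component as a $(r-j+1) \times (r-j+1)$ matrix $B_j$ tensored with the identity on the irrep, and PSDness of $M$ reduces to PSDness of the $r+1$ small matrices $\{B_j\}$.

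The entries of $B_j$ can be read off explicitly from Johnson-scheme intersection numbers as particular linear combinations of $f(0), \ldots, f(2r)$. The plan is then to exhibit a Cholesky-type factorization $B_j = L_j L_j^\top$ whose diagonal entries, via the telescoping recurrence $f(t+1) = \frac{k-t}{n-t} f(t)$, collapse into products of the form $\prod_{s} (k - s)$ for $s \leq r - 1$. Under the hypothesis $r < k$, all such factors are strictly positive, so each $B_j \succeq 0$ and hence $M \succeq 0$. As a sanity check, note that this framework automatically produces a kernel vector: the axiom $\sum_i x_i = k$ forces the symmetric $(j = 0)$ block $B_0$ to be rank-deficient, with the corresponding eigenvector dual to $\sum_i x_i - k$; one checks using the recurrence that the relevant $2 \times 2$ minor $f(0)(f(1) + (n-1)f(2)) - n f(1)^2$ vanishes identically, which is consistent with the pseudo-distribution interpretation.

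The main obstacle is carrying out this factorization cleanly: without the right algebraic setup the entries of $B_j$ are unwieldy combinations of $f(\cdot)$'s, and it takes careful use of the multiplicative recurrence for $f$ together with the Johnson-scheme structure (or, equivalently, identities for Hahn polynomials) to collapse everything into a product. Working out the signs and ranges in this Cholesky factorization is the delicate part, and is precisely where the assumption $r < k$ enters. A natural alternative to a Cholesky decomposition is to diagonalize $B_j$ directly via a Hahn-polynomial change of basis and read off the eigenvalues as analogous products; either route pinpoints the same positivity condition.
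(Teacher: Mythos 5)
Your framework — exploit $S_n$-invariance, decompose into isotypic components, and reduce PSDness of the moment matrix to PSDness of small multiplicity matrices $B_j$ — is a valid standard route and essentially the flavor of Grigoriev's original argument. The multiplicity count (the $(n-j,j)$ irrep appearing $r-j+1$ times in $\reals^{\binom{[n]}{\le r}}$) is correct, and the kernel sanity check for $B_0$ checks out under the recurrence $f(t+1)=\frac{k-t}{n-t}f(t)$. But the proposal stops exactly where the proof has to begin: you never exhibit the claimed Cholesky (or Hahn-polynomial) factorization of the $B_j$'s, never write down the pivots, and never show that they are nonnegative precisely when $r<k$. As written, the argument is a reformulation of the statement plus a promise to compute; the entire mathematical content of the theorem is parked in the ``main obstacle'' paragraph.

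There is also a simplification you are missing that the paper uses and that would materially reduce the work. The paper does not attack the full matrix on $\binom{[n]}{\le r}$. Lemma~\ref{lem:homogeneousreduction} and its corollary show that, because of the axioms $x_i^2-x_i$ and $\sum_i x_i - k$, it suffices to check $\m(P_1^2)\ge 0$ for $P_1$ multilinear and \emph{homogeneous} of degree $r$, i.e.\ to check PSDness of the matrix indexed by $\binom{[n]}{r}$ alone. On a single level of the Johnson scheme each $(n-j,j)$ irrep appears with multiplicity one, so your $(r-j+1)\times(r-j+1)$ blocks $B_j$ collapse to scalars. The paper's self-contained argument (Theorem~\ref{th:knapsack}, proved for the slightly weaker range $r\le\min(k/2,n-k)$ that the rest of the paper actually needs) then proceeds without any representation theory: it expands the set-symmetric matrix in the $D_\ell$ basis, re-expresses it as $\sum_t\alpha_t P_t$ in the PSD $P$-basis, and computes $\alpha_t = e_0\binom{n-k}{t}\big/\binom{k-2r+t}{t}$, which is explicitly positive in the stated range. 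Using the homogeneous reduction first would turn your plan into a genuine proof, because the ``hard algebra'' you defer becomes the one-line computation of a scalar eigenvalue.
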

\subsection{Certificate for clique axioms}
Following a similar approach, we now derive the dual certificate for the clique axioms from Equations \ref{eq:maxclique}, which we restate below for convenience: given a graph $G$ on $n$ vertices, $k \leq n$, the axioms of $\clq(G,k)$ are
\begin{align}
 \text{(Max-Clique): }\;\;\;& x_i^2 - x_i,\;\; \forall i \in [n]\nonumber\\
&x_i \cdot x_j ,\;\; \forall \text{ pairs }\{i,j\} \notin G\\
&\sum_i x_i - k.\nonumber
\end{align}

The above axioms tell us that any candidate dual certificate $\m \equiv \m_G:\pnr \rgta \reals$ should satisfy:
\begin{align}\label{eq:dualcert}
&\m\left(X_I\right) = 0, \text{ $\forall I,\,|I| \leq 2r$, $I$ is not a clique in $G$},\nonumber\\
&\m\left(\left(\sum_{i=1}^n x_i -k\right)X_I \right) = 0, \text{ $\forall I,\,|I| < 2r$.}
\end{align}

The above equations give us a system of linear equations that $\m$ needs to satisfy. By working with the equations, it is easy to {\sl guess} a natural solution for the system. 

Given a graph $G$ on $[n]$, and $I \subseteq [n]$, $|I| \leq 2r$, let 
$$deg_G(I) = |\{S \subseteq [n]: I \subseteq S,\; |S| = 2r, \text{ $S$ is a clique in $G$}\}|.$$
For instance, if $r=1$ and $v \in G$, then $deg_G(\{v\})$ is the degree of vertex $v$. 

We define $\m\equiv \m_G:\pd \rgta \reals$ for monomials as follows: for $I \subseteq [n], |I| \leq 2r$, let
\begin{equation}
  \label{eq:dualsol}
\m\left(\prod_{i \in I} x_i\right) = deg_G(I) \cdot \frac{k (k-1) \cdots (k- |I | + 1)}{2r(2r-1)\cdots (2r-|I| + 1)} = \deg_G(I) \cdot \frac{\binom{k}{|I|}}{\binom{2r}{|I|}}.
\end{equation}
It is easy to check the following claim:
\begin{claim}\label{clm:dualsol}
For any graph $G$, $\m \equiv \m_G$ defined by \eref{eq:dualsol} satisfies Equations \ref{eq:dualcert}.
\end{claim}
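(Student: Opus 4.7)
The plan is to verify the two families of constraints in \eqref{eq:dualcert} separately. Both are essentially bookkeeping once one uses the multilinear extension convention and a double-counting identity for cliques containing a fixed set.

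For the first family, if $I \subseteq [n]$ is not a clique in $G$, then there is some non-edge $\{i,j\}\subseteq I$, and no $S\supseteq I$ of size $2r$ can be a clique either. Hence $\deg_G(I)=0$ and $\m(X_I)=0$ by definition.

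For the second family, fix $I$ with $|I|=t<2r$. Using the multilinear extension (so that $x_i\cdot X_I = X_I$ whenever $i\in I$, via the Boolean axiom $x_i^2-x_i$), we split
\[
\m\!\left(\left(\textstyle\sum_i x_i-k\right) X_I\right) \;=\; \sum_{i\in I}\m(X_I)\;+\;\sum_{i\notin I}\m(X_{I\cup\{i\}})\;-\;k\,\m(X_I).
\]
The key combinatorial identity is $\sum_{i\notin I}\deg_G(I\cup\{i\}) = (2r-t)\,\deg_G(I)$, which follows by double-counting pairs $(i,S)$ with $S$ a $2r$-clique containing $I$ and $i\in S\setminus I$: each such $S$ contributes $2r-t$ choices of $i$. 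Combined with the binomial ratio identity
\[
\frac{\binom{k}{t+1}}{\binom{2r}{t+1}} \;=\; \frac{k-t}{2r-t}\cdot \frac{\binom{k}{t}}{\binom{2r}{t}},
\]
the second sum equals $(2r-t)\,\deg_G(I)\cdot\frac{k-t}{2r-t}\cdot\frac{\binom{k}{t}}{\binom{2r}{t}} = (k-t)\,\m(X_I)$. Adding $t\cdot\m(X_I)$ from the $i\in I$ terms gives $k\cdot\m(X_I)$, which cancels with the $-k\,\m(X_I)$ term, as required.

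There is no real obstacle: both conditions reduce to short calculations using only the definition of $\deg_G(I)$, the multilinear extension, and one elementary count of $(2r-t)$-element extensions of $I$ within a clique. Note that we never needed $|I\cup\{i\}|\le 2r$ to fail, since for $|I|=2r-1$ the extension $I\cup\{i\}$ still has size $2r$ and $\m$ is defined on it.
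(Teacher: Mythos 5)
Your proof is correct and follows the same route as the paper's: the first family holds because $\deg_G(I)=0$ when $I$ is not a clique, and the second family reduces to the double-counting recurrence $\sum_{i\notin I}\deg_G(I\cup\{i\}) = (2r-|I|)\deg_G(I)$ together with the ratio identity for the binomial weights. The only (cosmetic) difference is that you split off the $i\in I$ terms explicitly and cancel in two steps, whereas the paper groups $(|I|-k)\m(X_I)$ first and pulls out a common factor; both calculations are the same.
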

\begin{proof}
The first equation in \eref{eq:dualcert} follows immediately from the definition of $\m$. Now, for  $I \subseteq [n], |I| < 2r$,
  \begin{multline*}
    \m\left(\left(\sum_i x_i - k\right)X(I)\right) = (|I| - k) \m(X(I)) + \sum_{j \notin I} \m(X(I \cup \{j\}))\\
= (|I| - k) \cdot deg_G(I) \cdot \frac{\binom{k}{|I|}}{\binom{2r}{|I|}} + \sum_{j \notin I} deg_G(I \cup \{j\}) \cdot \frac{\binom{k}{|I|+1}}{\binom{2r}{|I|+1}}\\
= \frac{\binom{k}{|I|+1}}{\binom{2r}{|I|+1}} \cdot \left(- (2r-|I|) \cdot deg_G(I) + \sum_{j \notin I} deg_G(I \cup \{j\})\right).
  \end{multline*}
Observe that our notion of {\it degree}, $deg_G$, satisfies the following recurrence: for $|I| < 2r$,
\[ deg_G(I) = \frac{1}{2r-|I|} \cdot \sum_{j \notin I, \text{ j adjacent to all of $I$}} deg_G(I \cup \{j\}) = \frac{1}{2r-|I|} \sum_{j \notin I} deg_G(I \cup \{j\}).\]
The above two equations imply that $\m$ satisfies the second equation in \ref{eq:dualcert}.
\end{proof}

Thus, to prove our main theorem \tref{th:main}, it suffices to show that $\m$ as defined above is PSD with high probability. 
We now argue that in fact, to show that $\m$ is PSD we do not need to consider all polynomials $P$ of degree at most $r$. Rather, it is sufficient to show that $\m(P_1^2) \geq 0$ whenever $P_1$ is multilinear and homogeneous of degree $r$. 
\begin{lemma}\label{lem:homogeneousreduction}
For any $P$ of degree at most $r$ we may write $P = P_1 + \sum_{i}{P_{2i}(x^2_i - x_i)} + P_{3}(\sum_{i}{x_i} - k)$ where $P_1$ is multilinear and homogeneous of degree $r$, $P_3$ has degree at most $r-1$, and all $P_{2i}$ have degree at most $r-2$.
\end{lemma}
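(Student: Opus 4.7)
The plan is a two-step reduction: first multilinearize $P$ using the Boolean axioms $x_i^2-x_i$, then homogenize the result to degree exactly $r$ using the size axiom $\sum_i x_i - k$.

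\textbf{Multilinearization.} Repeatedly substituting $x_i^2 = x_i + (x_i^2 - x_i)$ in $P$ and collecting like terms expresses $P$ as $P = \tilde{P} + \sum_i Q_i\,(x_i^2 - x_i)$, where $\tilde{P}$ is multilinear of degree at most $r$ and each $Q_i$ has degree at most $r-2$. Hence we can write $\tilde{P} = \sum_{|I|\leq r} c_I X_I$ for some coefficients $c_I$, where $X_I = \prod_{i\in I}x_i$.

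\textbf{Homogenization.} I claim by downward induction on $|I|$ that every multilinear monomial $X_I$ with $|I|\leq r$ admits the desired decomposition. The base case $|I|=r$ is immediate since $X_I$ is already multilinear and homogeneous of degree $r$. For $|I|<r$, the starting point is the identity
\[
X_I\cdot\Big(\sum_{j}x_j - k\Big) \;=\; (|I|-k)\,X_I \;+\; \sum_{j\in I}X_{I\setminus\{j\}}(x_j^2 - x_j) \;+\; \sum_{j\notin I}X_{I\cup\{j\}},
\]
which follows from expanding the product and using $x_j X_I = X_I + X_{I\setminus\{j\}}(x_j^2-x_j)$ when $j\in I$. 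Since $k\geq r$ in the parameter range of interest (so that $k-|I|\neq 0$), we may solve for $X_I$:
\[
X_I \;=\; \frac{1}{k-|I|}\!\left[\sum_{j\notin I}X_{I\cup\{j\}} \;-\; X_I\Big(\sum_{j}x_j - k\Big) \;+\; \sum_{j\in I}X_{I\setminus\{j\}}(x_j^2 - x_j)\right].
\]
Applying the inductive hypothesis to each $X_{I\cup\{j\}}$ (which has size $|I|+1\leq r$) replaces the first sum by a multilinear homogeneous degree-$r$ polynomial plus combinations of the axioms, as desired.

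\textbf{Combining and degree bookkeeping.} Summing these monomial-by-monomial decompositions with coefficients $c_I$, and adding back the initial multilinearization remainder $\sum_i Q_i(x_i^2-x_i)$, produces the desired representation $P = P_1 + \sum_i P_{2i}(x_i^2-x_i) + P_3(\sum_i x_i - k)$. The only substantive check is degree accounting: the terms $X_{I\setminus\{j\}}/(k-|I|)$ contributing to $P_{2i}$ have degree $|I|-1\leq r-2$, and together with the inductively obtained factors of degree at most $r-2$ they satisfy $\deg P_{2i}\leq r-2$; the $X_I/(k-|I|)$ term contributing to $P_3$ has degree $|I|\leq r-1$, matching the inductive bound. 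There is no deep obstacle here: the argument is essentially bookkeeping, and the only place where one might be tripped up is ensuring the inductive invariants on the degrees of $P_{2i}$ and $P_3$ are preserved at every step.
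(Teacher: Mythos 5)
Your proof is correct and follows essentially the same two-step reduction as the paper: multilinearize using $x_i^2 = x_i + (x_i^2 - x_i)$, then homogenize lower-degree monomials $X_I$ using the identity $X_I(\sum_j x_j - k) = (|I|-k)X_I + \sum_{j\in I}(x_j^2-x_j)X_{I\setminus\{j\}} + \sum_{j\notin I}X_{I\cup\{j\}}$, which when solved for $X_I$ is algebraically identical to the paper's rewriting step. Framing the iteration as downward induction on $|I|$ and noting the need for $k > |I|$ (so division by $k-|I|$ is valid) are small stylistic or clarifying touches, not a different route.
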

\begin{proof}
We first make $P$ multilinear by removing any terms which are not multilinear from $P$ as follows. If $P$ has a term of the form ${x^2_i}f$ where $f$ has degree at most $r-2$, write ${x^2_i}f = (x^2_i - x_i)f + {x_i}f$. Iteratively applying this procedure, we may write $P = P'$ plus terms of the form $(x^2_i - x_i)f$ where $P'$ is multilinear of degree at most $r$ and 
$f$ has degree at most $r-2$.

We now make $P'$ multilinear and homogeneous of degree $r$ by removing any terms which have lower degree as follows. If $P'$ has a term of the form $X_I$ where $|I| < r$, write 
$$X_I = \frac{1}{|I| - k}\left(\sum_{i}{x_{i} - k}\right)X_I + \frac{1}{k - |I|}\sum_{i \in I}{(x^2_i - x_i)X_{I \setminus \{i\}}} + \frac{1}{k - |I|}\sum_{i \notin I}{X_{I \cup \{i\}}}$$
Iteratively applying this procedure, we may write $P = P_1$ plus terms of the form $(x^2_i - x_i)f$ and terms of the forms 
$(\sum_{i}{x_i - k})g$ where $P_1$ is multilinear and homogeneous of degree $r$, all such $f$ have degree at most $r-2$ and 
all such $g$ have degree at most $r-1$. Putting everything together, the result follows.
\end{proof}
\begin{corollary}
If $\m(P_1^2) \geq 0$ for all multilinear homogeneous $P_1$ of degree $r$ then $\m$ is PSD.
\end{corollary}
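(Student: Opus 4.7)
The plan is to leverage Lemma \ref{lem:homogeneousreduction} to reduce checking $\m(P^2) \geq 0$ for arbitrary $P$ of degree at most $r$ to checking it only for the multilinear homogeneous degree-$r$ part. Concretely, decompose $P = P_1 + A$, where $P_1$ is multilinear and homogeneous of degree $r$ and $A = \sum_i P_{2i}(x_i^2 - x_i) + P_3\bigl(\sum_j x_j - k\bigr)$ with $\deg P_{2i} \leq r-2$ and $\deg P_3 \leq r-1$. Expanding $P^2 = P_1^2 + 2 P_1 A + A^2$, the goal is to show $\m(2 P_1 A + A^2) = 0$, so that $\m(P^2) = \m(P_1^2) \geq 0$ by hypothesis.

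The main observation is that every term appearing in $2 P_1 A + A^2$ is of the form $h \cdot f$ for some axiom $f \in \{x_i^2 - x_i\} \cup \{\sum_j x_j - k\}$, with a cofactor $h$ whose degree we can control. For the Boolean axioms $x_i^2 - x_i$, the multilinear extension convention forces $\m(h(x_i^2 - x_i)) = 0$ identically, since $x_i^2 - x_i$ becomes zero after reducing individual degrees to one. For the cardinality axiom $\sum_j x_j - k$, Claim \ref{clm:dualsol} establishes $\m\bigl(X_I(\sum_j x_j - k)\bigr) = 0$ for every $|I| \leq 2r-1$, and by linearity this extends to $\m\bigl(h(\sum_j x_j - k)\bigr) = 0$ for every polynomial $h$ of degree at most $2r - 1$.

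The remaining step is a routine degree count verifying that in every one of the six cross-product types (namely $P_1 \cdot P_{2i}(x_i^2 - x_i)$, $P_1 \cdot P_3(\sum_j x_j - k)$, $P_{2i}(x_i^2 - x_i)\cdot P_{2j}(x_j^2 - x_j)$, $P_{2i}(x_i^2 - x_i)\cdot P_3(\sum_j x_j - k)$, and $P_3^2(\sum_j x_j - k)^2$), the cofactor of each axiom has degree at most $2r - 2$ when the axiom is Boolean or at most $2r - 1$ when the axiom is the cardinality constraint. Using $\deg P_1 \leq r$, $\deg P_{2i} \leq r-2$, $\deg P_3 \leq r-1$, each of these bounds is immediate. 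I do not expect a genuine obstacle here; the proof is essentially bookkeeping, with the one mild subtlety being that Claim \ref{clm:dualsol} is stated for $|I| < 2r$ precisely so that the cofactor of $\sum_j x_j - k$ may have degree up to $2r-1$, which is exactly what the expansion of $P^2$ produces.
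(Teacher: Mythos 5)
Your proof follows exactly the same route as the paper's: invoke Lemma~\ref{lem:homogeneousreduction} to write $P = P_1 + A$ with $A$ in the span of axiom multiples, then conclude $\m(P^2) = \m(P_1^2)$. The paper states this last equality without elaboration; your expansion of $P^2$ and the case-by-case degree bookkeeping (together with the observation that $\m$ kills any multiple of $x_i^2-x_i$ by multilinearization and kills $h\cdot(\sum_j x_j - k)$ for $\deg h \leq 2r-1$ by Claim~\ref{clm:dualsol} and linearity) is precisely the justification the paper leaves implicit, and it is correct.
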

\begin{proof}
Assume $\m(P_1^2) \geq 0$ for all multilinear homogeneous $P_1$ of degree $r$ and $\m(P^2) < 0$ for some $P \in \mathcal{P}(n,r)$. Using \lref{lem:homogeneousreduction}, we may write 
$P = P_1 + \sum_{i}{P_{2i}(x^2_i - x_i)} + P_{3}(\sum_{i}{x_i} - k)$ where $P_1$ is multilinear and homogeneous of degree $r$. $\m(P^2) = \m(P_1^2)$ so $\m(P_1^2) < 0$. Contradiction.
\end{proof}
Thus, showing that $\m$ is PSD with high probability is equivalent to showing that the following matrix $\mm \equiv \mm_G \in \rnr$ is PSD with high probability for $G \lfta G(n,1/2)$: for $I,J \in \binom{[n]}{r}$,
\begin{equation}
  \label{eq:mainmatrixdef}
  \mm(I,J) = deg_G(I \cup J) \cdot \frac{\binom{k}{|I \cup J|}}{\binom{2r}{|I \cup J|}}.
\end{equation}

In the remainder of the paper, we show that $\mm$ is PSD with high probability for $k \leq \Omega_r(n^{1/{2r}}/(\log n)^{1/r})$.
\begin{theorem}[Main Technical Theorem]\label{th:mainpsd}
There exists a constant $c > 0$ such that, with high probability over $G \lfta G(n,1/2)$, the matrix $\mm_G$ defined by \eref{eq:mainmatrixdef} is PSD for $k \leq 2^{-cr} \cdot (\sqrt{n}/\log n)^{1/r}$.
\end{theorem}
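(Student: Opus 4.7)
The plan follows the three-step outline in \sref{sec:outline}. The first difficulty is that $\mm_G$ has many identically zero rows: whenever $I \in \binom{[n]}{r}$ is not a clique of $G$ we have $\deg_G(I \cup J) = 0$ for every $J$, so the $I$-th row vanishes. Since only a negligible fraction of $r$-sets induce cliques in $G(n,1/2)$, directly writing $\mm_G = \mathbb{E}[\mm_G] + (\mm_G - \mathbb{E}[\mm_G])$ is useless --- the deviation norm will dwarf any reasonable spectral lower bound on the expectation. I would instead define a filled-in matrix $M' \in \rnr$ that agrees with $\mm_G$ on clique-index pairs but, on other entries, extends the formula for $\mm_G$ using a combinatorial surrogate for $\deg_G$ that no longer requires $I \cup J$ to be a clique of $G$. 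The reduction step should then show that $M' \succeq 0$ implies $\mm_G \succeq 0$, essentially because $\mm_G$ equals $D M' D$ for the diagonal $\{0,1\}$-projection $D$ onto the clique indices.

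Next I would analyze $E := \mathbb{E}[M']$. Because the law of $G \lfta G(n,1/2)$ is $S_n$-invariant and $M'$ is defined symmetrically, $E(I,J)$ depends only on $|I \cap J|$, so $E$ lies in the Johnson scheme on $\binom{[n]}{r}$, whose $r+1$ common eigenspaces $V_0, V_1, \ldots, V_r$ and eigenvalue formulas are classical. Combining the explicit Johnson eigenvalues with the polynomial-in-$|I \cap J|$ formula for $\mathbb{E}[M'(I,J)]$ (which for $G(n,1/2)$ is a short product of $2^{-\binom{a}{2}}$-type factors and falling factorials), one computes $\lambda_{min}(E)$ on each eigenspace separately. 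In the regime of interest this should yield a bound of the form $\lambda_{min}(E) \gtrsim n^{r} \binom{k}{r}^{2} / \binom{2r}{r}^{2}$ up to lower-order factors from the normalization, which is polynomially large whenever $k$ is above the threshold claimed in \tref{th:mainpsd}.

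The main work is the large-deviation bound $\|M' - E\| \leq \lambda_{min}(E)$ with high probability. Each entry $(M' - E)(I,J)$ is a mean-zero random variable whose law depends only on the edges of $G$ lying inside $I \cup J$, of which there are at most $\binom{2r}{2}$; this is the \emph{locally random} structure noted in the outline. I would apply the trace method to $\mathbb{E}[\Tr((M' - E)^{2\ell})]$ for $\ell$ of order $\log n$, expand it as a sum over closed walks $I_{0} \to I_{1} \to \cdots \to I_{2\ell} = I_{0}$ in $\binom{[n]}{r}$, and use the fact that a walk's contribution vanishes unless every edge of $G$ appearing in some local window $I_{t} \cup I_{t+1}$ appears in at least two such windows (this is where centering pays off). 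Organizing the walks by the pattern of repeated vertices and edges, I would charge each newly introduced vertex by a factor $n$, each revisit by $1$, and each choice of "which of the $\binom{2r}{2}$ local edges are active" by $2^{O(r^{2})}$, and then balance these against the normalization $\binom{k}{|I \cup J|}/\binom{2r}{|I \cup J|}$ to conclude a bound of the form $\|M' - E\| \leq C^{r}(\log n)^{O(r)} \cdot n^{r - 1/2} k^{2r - 1}$ with high probability.

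The hardest step will be this last trace enumeration. Unlike in the Wigner setting, consecutive matrix entries already share up to $\binom{2r}{2}$ edges of $G$, and temporally distant entries in the walk can also share edges whenever their index sets overlap in two or more vertices, so the classical "double-edge forest" picture does not apply verbatim. The key technical task is to design a clean encoding of closed walks that makes every source of cancellation visible and allows each step of the walk to be charged to within a factor $C^{r}$, so that the resulting norm bound is tight enough to match the threshold $k \leq (\sqrt{n}/\log n)^{1/r}/C^{r}$ claimed in the theorem. Once this is in place, combining the $\Omega(n^{r})$ lower bound on $\lambda_{min}(E)$ from the Johnson-scheme analysis with the deviation bound completes the argument via $M' = E + (M' - E) \succeq (\lambda_{min}(E) - \|M' - E\|) I \succeq 0$, and hence $\mm_G \succeq 0$.
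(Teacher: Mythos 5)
Your steps 1 and 2 match the paper: you fill in the zero rows of $M$ to get a matrix $M'$ that extends $M$ beyond clique indices and observe $M = DM'D$ (the paper phrases this as the nonzero part of $M$ being a principal submatrix of $M'$), and you analyze $E=\ex[M']$ via the Johnson scheme exactly as the paper does. The gap is in step 3, and it is a real one. You assert that ``each entry $(M' - E)(I,J)$ is a mean-zero random variable whose law depends only on the edges of $G$ lying inside $I \cup J$.'' This is false for the filled-in matrix $M'$: by construction $M'(I,J)$ is proportional to the number of $2r$-sets $T\supseteq I\cup J$ for which all edges of $\mathcal{E}(T)\setminus(\mathcal{E}(I)\cup\mathcal{E}(J))$ are present, and the overwhelming majority of those edges leave $I\cup J$. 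Consequently $(M'-E)(I_1,J_1)$ and $(M'-E)(I_2,J_2)$ are correlated even when $I_1\cup J_1$ and $I_2\cup J_2$ are disjoint, because both inspect essentially all of $G$. This destroys the vanishing-moment structure that a direct trace-method attack on $M'-E$ would rely on: after centering, there is no reason for a term to vanish just because some factor appears ``in isolation,'' since nothing is in isolation.

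The paper's fix, which you should incorporate, is to split the deviation as $M'-E = L + \Delta$. Here $L(I,J)$ depends \emph{only} on whether the bridging edges $\mathcal{E}(I\cup J)\setminus(\mathcal{E}(I)\cup\mathcal{E}(J))$ are all present (a genuinely local event inside $I\cup J$), and is calibrated so that $\ex[L(I,J)]=0$; this matrix really is ``locally random'' and is the object to which the constraint-graph trace argument applies. The remainder $\Delta(I,J)$ is the deviation of the completion count from its \emph{conditional} expectation given the local edges. That quantity is not local, but it is small in every entry with high probability (by a concentration bound on $\deg_G$ derived from McDiarmid and a large-deviation estimate for clique counts), and a union bound plus Gershgorin's circle theorem then controls $\|\Delta\|$. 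Your proposal collapses $L$ and $\Delta$ into a single object and thereby loses the locality that makes the trace method workable; without the split, the ``closed walk'' cancellation picture you describe does not hold. As a minor secondary remark, your guessed $\lambda_{\min}(E)\gtrsim n^r\binom{k}{r}^2/\binom{2r}{r}^2$ overshoots the correct order of $2^{-O(r^2)}k^r n^r$ by a factor of roughly $k^r$, but that is an arithmetic slip rather than a structural flaw.
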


\section{Overview of proof of \tref{th:mainpsd}}\label{sec:blurb}
The proof of \tref{th:mainpsd} is quite technical, and is broken into two parts, where the second part is further broken down into smaller parts. While we gave a sketch of the proof of \tref{th:mainpsd} in the inroduction, we give a more detailed overview of the proof here. Recall that all matrices mentioned below are random matrices which are specified by the choice of the random graph $G$.

As mentioned in the introduction, the matrix $M=M_G$ has many zero rows and columns which makes it difficult to work with. The first part is to fill in the zero rows and columns of $M$ to obtain a new matrix, $M'$, which is nonsingular and has no high variance entries. In \sref{sec:reductiontomprime} we define this matrix $M'$ and show that if $M'$ is PSD, so is $M$. The idea is that $M$ and $M'$ are symmetric and the nonzero part of $M$ is a principal submatrix of $M'$, so the smallest nonzero eigenvalue of $M$ is at least as large as the smallest eigenvalue of $M'$.

The second part is to prove that $M'$ is PSD (indeed we prove that it has a high positive smallest eigenvalue). This is stated in the main technical lemma \lref{lm:maintech}.
For the proof of \lref{lm:maintech} we decompose the matrix $M'$ as $M' = E + L + \Delta$, where (a) $E = \ex[M']$ is the expectation matrix; (b) $L$ will be a ``local'' random matrix such that for sets $I,J$, $L(I,J)$ only depends on the edges among the vertices of $I \cup J$ and (c) $\Delta$ is a ``global'' error matrix whose entries are small in magnitude.

Having defined $E$ (which is set-symmetric), let us spell out what the other matrices are. The``local" random matrix $L$ is defined in a simple way as follows:
\begin{align*}
  L(I,J) =
  \begin{cases}
    - E(I,J) &\text{ if some edge in $\mathcal{E}(I \cup J) \setminus (\mathcal{E}(I) \cup \mathcal{E}(J))$ is missing from $G$}\\
 \beta(|I \cap J|) &\text{ otherwise}
  \end{cases},
\end{align*}
where $\mathcal{E}(I)$ denotes the set of possible edges between vertices of $I$ and $\beta:\{0,\ldots,r\} \to \R_+$ is suitably chosen so that each individual entry of $L$ has expectation zero.

\ignore{
If $I,J$ are not disjoint, than $L(I,J)=0$. Otherwise, $L(I,J)= -1$, if some edge in $\mathcal{E}(I \cup J) \setminus (\mathcal{E}(I) \cup \mathcal{E}(J))$ is missing from the graph $G$. If all these edges are present in $G$ then $L(I,J) = 2^{r^2}-1.$
Observe that each individual entry of $L$ has expectation zero.}

Finally, define the last matrix $\Delta = M' - E - L$. 

The proof that $M'$ is PSD proceeds in three modular steps:
\begin{enumerate}
\item We use the results about Johnson scheme to show that $E \succ 0$ and has a large least eigenvalue (roughly $\Omega_r(k^r n^r)$); see \sref{sec:psdexpectation}.
\item We next show that $\|L\| < Ck^{2r} n^{r-1/2}\log n$ by exploiting the recursive structure of the matrix $L$ and some careful trace calculations. This is the most technically intensive part of the proof, and requires the development of some combinatorial tools to estimate the trace of high powers of $L$; see \sref{sec:normboundL}.
\item We then show that $\|\Delta\| < Ck^{2r} n^{r-1/2}\log n$. This is done by first showing that {\em every} entry of $\Delta$ is small in magnitude, via concentration bounds on the number of cliques in random graphs, and bounding its norm using Gershgorin's circle theorem (\lref{lm:Gershgorin}); see \sref{sec:normboundDelta}.
\end{enumerate}



\section{Preliminaries}\label{sec:prelims}
We shall use the following notations\footnote{Some are repeated from the introduction so as to have them at one place.}:
\begin{enumerate}
\item $\pd$ denotes the set of $n$-variate polynomials of degree at most $2r$.
\item $\psd$ denotes positivstellensatz refutations of degree at most $r$ as defined in \dref{dfn:psdrefute}.
\item A linear mapping $\m:\pnr \rgta \reals$ is said to be positive semi-definite (PSD) if $\m(P^2) \geq 0$ for all $P \in \mathcal{P}(n,r)$.
\item For $0 \leq r \leq n$, let $\binom{[n]}{r}$, $\binom{[n]}{\leq r}$ denote all subsets of size exactly and at most $r$, respectively.
\item For $0 \leq r \leq n$, $\reals^{\binom{[n]}{r} \times \binom{[n]}{r}}$ denotes matrices with rows and columns indexed by subsets of $[n]$ of size exactly $r$. Similarly, $\reals^{\binom{[n]}{\leq r} \times \binom{[n]}{\leq r}}$ denotes matrices with rows and columns indexed by subsets of $[n]$ of size at most $r$.   
\item We will view linear functionals $\m:\pnr \rgta \reals$ as matrices $\mm \in \rnlr$, where for $I,J \in \binom{[n]}{\leq r}$, $\mm_{IJ} = \m\left (\prod_{s \in I \cup J} x_s\right)$. In general, this correspondence is not bijective. However, as we only deal with mappings which are constant under multi-linear extensions throughout, the correspondence is one-to-one. It is a standard fact that a mapping $\m$ is PSD if and only if the matrix $\mm$ is PSD. 
\item For $I \subseteq [n]$, let $X_I = \prod_{i \in I} x_i$.
\item By default all vectors are column vectors. For a set $I$, $\mathds{1}(I)$ denotes the indicator vector of the set $I$.
\item For a matrix $A \in \reals^{m \times n}$, $A^\dagger \in \reals^{n \times m}$ denotes its conjugate matrix.    
\end{enumerate}
We will also need the following standard fact from matrix theory (see \cite{Golub1996} for instance).
\begin{lemma}[special case of Gershgorin circle theorem]\label{lm:Gershgorin}
For any square matrx $M \in \R^{N \times N}$, 
$$\|M\| \leq \max_{i \in [N]} \left(\sum_{j=1}^N |M_{ij}|\right).$$
\end{lemma}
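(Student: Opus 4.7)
The plan is to give a direct eigenvalue/eigenvector argument, which in the symmetric case (which is what is needed elsewhere in the paper, where $M$ is derived from a graph and hence symmetric) makes the bound essentially immediate from the definition of the spectral norm together with an elementary Cauchy–Schwarz-free estimate.

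First I would recall that for a symmetric real matrix $M$, the spectral norm coincides with $\max_\lambda |\lambda|$, where $\lambda$ ranges over eigenvalues of $M$. Fix an eigenvector $v \in \R^N$, normalized so that $\max_i |v_i| = 1$, with eigenvalue $\lambda$ of maximum modulus, so that $|\lambda| = \|M\|$. Choose an index $i^\star$ with $|v_{i^\star}| = 1$. Reading off the $i^\star$-th coordinate of the eigenvector equation $Mv = \lambda v$ gives
\[
\lambda \, v_{i^\star} \;=\; \sum_{j=1}^N M_{i^\star j}\, v_j,
\]
and taking absolute values together with the bound $|v_j| \le 1$ yields
\[
\|M\| \;=\; |\lambda| \;=\; |\lambda|\cdot|v_{i^\star}| \;\le\; \sum_{j=1}^N |M_{i^\star j}|\,|v_j| \;\le\; \sum_{j=1}^N |M_{i^\star j}| \;\le\; \max_{i \in [N]}\sum_{j=1}^N |M_{ij}|.
\]
This is precisely the claimed inequality. (The same argument, phrased as the classical Gershgorin theorem, gives that every eigenvalue of $M$ lies in the disc of radius $\sum_{j\ne i}|M_{ij}|$ about $M_{ii}$ for some $i$, so $|\lambda| \le |M_{ii}| + \sum_{j\ne i}|M_{ij}| = \sum_j |M_{ij}|$, which is the standard phrasing in \cite{Golub1996}.)

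There is no real obstacle here; the only thing to note is the implicit symmetry assumption. In the non-symmetric case the claim would need the spectral norm to be replaced by the induced $\ell_\infty$-operator norm (for which the bound is even an equality), but since every use of \lref{lm:Gershgorin} in the remainder of the paper is applied to a symmetric matrix (the ``global error'' matrix $\Delta$ in the decomposition $M' = E + L + \Delta$), the argument above suffices for our purposes, and the proof is nothing more than the one-line eigenvector estimate displayed above.
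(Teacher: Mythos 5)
The paper does not give a proof of this lemma at all; it simply cites it as a ``standard fact'' with a pointer to \cite{Golub1996}. Your proof is the classical Gershgorin argument (pick the coordinate of the eigenvector with maximal modulus, read off that row of $Mv=\lambda v$) and it is correct for symmetric $M$, which is all the paper uses.

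Your remark about the symmetry assumption is actually a genuine and useful observation, not a pedantic aside. As stated in the paper, ``for any square matrix $M$'' with $\|M\|$ the spectral norm, the inequality is \emph{false}: e.g.\ for $M = \left(\begin{smallmatrix}1&0\\1&0\end{smallmatrix}\right)$ one has $\|M\|_2 = \sqrt{2}$ but every row sum is $1$. What Gershgorin actually controls is the spectral \emph{radius} $\rho(M)$, and $\rho(M) \leq \|M\|_2$ with equality only for normal matrices. So the lemma as written tacitly assumes $M$ is symmetric (or at least normal), which is harmless because the only invocation is on the symmetric error matrix $\Delta$ in \sref{sec:normboundDelta}. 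Your eigenvector argument in the symmetric case is complete and elementary, and arguably cleaner than appealing to the full Gershgorin disc theorem; it delivers exactly the bound needed.
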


Finally, we need McDiarmid's inequality for obtaining tail bounds for functions of independent random variables (see \cite{dubashi2009concentration} for instance)
\begin{theorem}[McDiarmid's inequality]\label{th:mcdiarmid}
Let $X_1,\ldots,X_n$ be independent random variables and let $f$ be a function over the domain space of $(X_1,\ldots,X_n)$. Let $c_1,\ldots,c_n > 0$ be such that for all $i$, $x_1,\ldots,x_n, x_i'$, 
$$|f(x_1,\ldots,x_{i-1},x_i,x_{i+1},\ldots,x_n) - f(x_1,\ldots,x_{i-1},x_i',x_{i+1},\ldots,x_n)| \leq c_i.$$
Then, for all $t > 0$, 
$$\pr\sbkets{|f(X_1,\ldots,X_n) - \ex[f]| > t} \leq 2 \exp\bkets{\frac{-2t^2}{\sum_{i=1}^n c_i^2}}.$$
\end{theorem}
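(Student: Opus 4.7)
The plan is to use the standard martingale approach (Azuma--Hoeffding on a Doob martingale) applied to the function $f$ and the filtration generated by revealing the $X_i$'s one at a time. First, set $Y_0 = \ex[f(X_1,\ldots,X_n)]$ and $Y_i = \ex[f(X_1,\ldots,X_n) \mid X_1,\ldots,X_i]$ for $i=1,\ldots,n$, so that $Y_n = f(X_1,\ldots,X_n)$. By the tower property $(Y_i)$ is a martingale with respect to the filtration $\mathcal{F}_i = \sigma(X_1,\ldots,X_i)$, and our target tail bound is on $|Y_n - Y_0|$.

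The heart of the argument is to control the martingale differences $D_i = Y_i - Y_{i-1}$. Using independence of the $X_j$'s, I would write
\[ Y_i = g_i(X_1,\ldots,X_i) := \ex_{X_{i+1},\ldots,X_n}[f(X_1,\ldots,X_i,X_{i+1},\ldots,X_n)], \]
and then define the random upper and lower envelopes
\[ U_i = \sup_{x} g_i(X_1,\ldots,X_{i-1},x) - Y_{i-1}, \qquad L_i = \inf_{x} g_i(X_1,\ldots,X_{i-1},x) - Y_{i-1}. \]
Both $U_i$ and $L_i$ are $\mathcal{F}_{i-1}$-measurable, they sandwich $D_i$, and the bounded-differences hypothesis gives $U_i - L_i \leq c_i$ pointwise: indeed, fixing $X_1,\ldots,X_{i-1}$ and any two values $x,x'$ for the $i$-th coordinate, the difference $g_i(X_1,\ldots,X_{i-1},x) - g_i(X_1,\ldots,X_{i-1},x')$ is an average (over the later coordinates) of quantities of absolute value at most $c_i$.

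Next I would invoke Hoeffding's lemma conditionally: for any random variable $Z$ with $\ex[Z\mid\mathcal{F}_{i-1}] = 0$ and $Z \in [L_i,U_i]$ almost surely with $U_i - L_i \leq c_i$, one has $\ex[e^{\lambda Z}\mid \mathcal{F}_{i-1}] \leq \exp(\lambda^2 c_i^2 / 8)$. Applying this to $Z = D_i$ and iterating the tower property yields
\[ \ex\bigl[\exp(\lambda(Y_n - Y_0))\bigr] \leq \exp\Bigl(\tfrac{\lambda^2}{8}\textstyle\sum_i c_i^2\Bigr). \]
A standard Markov/Chernoff step with the optimal choice $\lambda = 4t/\sum_i c_i^2$ then gives $\pr[Y_n - Y_0 > t] \leq \exp(-2t^2/\sum_i c_i^2)$, and applying the same argument to $-f$ and combining via the union bound produces the two-sided tail with the factor $2$ in front.

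The main technical point, and really the only nontrivial ingredient, is the conditional version of Hoeffding's lemma used in the inductive step; everything else is bookkeeping with the Doob martingale and the bounded-differences hypothesis. Since this is a standard textbook inequality and we only need it as a black-box concentration tool elsewhere in the paper, I would likely cite the reference \cite{dubashi2009concentration} rather than reproduce the full calculation.
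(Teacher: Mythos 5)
Your proposal is correct: it is the standard Doob-martingale/Azuma argument with the conditional Hoeffding lemma on the envelopes $[L_i,U_i]$, and the constant bookkeeping ($\lambda = 4t/\sum_i c_i^2$, two-sided bound via $\pm f$) indeed yields the stated bound. The paper does not prove this statement at all --- it is quoted as a standard tool with a citation to \cite{dubashi2009concentration} --- so citing the reference, as you suggest, is exactly what the paper does.
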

\section{Reduction to PSDness of $M'$}\label{sec:reductiontomprime}
In this section, we define the matrix $M'$ and show that if $M'$ is PSD then so is $M$. We use the following notations for brevity: For any set $I \subseteq [n]$, let $\mathcal{E}(I) = \{\{i,j\}: i\neq j \in I\}$. For $0 \leq i \leq r$, let 
\begin{equation}
\beta(i) = \binom{k}{2r-i}/\binom{2r}{2r-i}.
\end{equation}

For every $T \subseteq [n]$, let $M_T \in \reals^{\binom{[n]}{r} \times \binom{[n]}{r}}$, with $M_T(I,J) = \beta(|I \cap J|)$ if $I \cup J \subseteq T$, and $G$ contains every edge in $\mathcal{E}(T) \setminus \mathcal{E}(I) \cup \mathcal{E}(J)$ (i.e., the only edges in $T$ missing in $G$ are those with both end points in one of $I$ or $J$). We will study the matrix 
\begin{equation}\label{eq:mr}  
M' = \sum_{T: |T| = 2r} M_T.
\end{equation}

Intuitively, for every $I,J$, $M'(I,J)$  is what $M(I,J)$ would be had we added cliques on the subsets $I$, $J$ to the graph. The above definition avoids the problem of the whole row and column corresponding to $I$ or $J$ becoming zero if either was not a clique and controls the variance of the entries. We now show that to show $M$ is PSD, it is sufficient to show that $M'$ is PSD.
\begin{lemma}\label{lm:MprimePSDtoMPSD}
If $M'$ is PSD then $M$ is PSD.
\end{lemma}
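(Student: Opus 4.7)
The plan is to show that $M$ is the zero-extension (to all of $\binom{[n]}{r}\times\binom{[n]}{r}$) of a principal submatrix of $M'$, and then conclude by the elementary fact that zero-padding preserves PSDness. Let $S := \{I \in \binom{[n]}{r} : I \text{ is a clique in } G\}$ denote the set of ``clique indices''. The first claim is that $M(I,J)=0$ whenever $I\notin S$ or $J\notin S$; this is immediate from \eqref{eq:mainmatrixdef}, since if $I\notin S$ then $I$ contains a non-edge, so no $2r$-set containing $I$ is a clique, forcing $deg_G(I \cup J) = 0$ for every $J$ (and symmetrically for $J\notin S$).

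The main content is the second claim: $M(I,J) = M'(I,J)$ for all $I,J \in S$. For this I would unwind the definition $M' = \sum_{|T|=2r} M_T$ to write
$$M'(I,J) \;=\; \beta(|I \cap J|) \cdot N_G(I,J),$$
where $N_G(I,J)$ counts the $2r$-sets $T \supseteq I \cup J$ such that every edge of $\mathcal{E}(T) \setminus (\mathcal{E}(I) \cup \mathcal{E}(J))$ is present in $G$. The key observation is that when $I,J \in S$, all edges of $\mathcal{E}(I) \cup \mathcal{E}(J)$ are already in $G$ by hypothesis; combining this with the constraint defining $N_G$ upgrades the condition to ``every edge of $\mathcal{E}(T)$ is in $G$'', i.e., $T$ is a clique. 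Hence $N_G(I,J) = deg_G(I \cup J)$. Finally, since $|I|=|J|=r$ forces $|I\cup J| = 2r - |I\cap J|$, the normalization $\beta(|I\cap J|) = \binom{k}{|I\cup J|}/\binom{2r}{|I\cup J|}$ matches \eqref{eq:mainmatrixdef}, and so $M'(I,J) = M(I,J)$ on $S\times S$.

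Combining the two claims, $M$ is a block matrix with $M[S,S] = M'[S,S]$ in the clique block and zeros elsewhere; the PSD conclusion is then routine, since for any $v \in \reals^{\binom{[n]}{r}}$, letting $\tilde v$ agree with $v$ on $S$ and vanish off $S$, one has $v^\top M v = (v|_S)^\top M'[S,S]\,(v|_S) = \tilde v^\top M' \tilde v \geq 0$ by the hypothesis $M' \succeq 0$. The whole argument is essentially bookkeeping; the only delicate point is the verification of the second claim, which hinges on the observation that the relaxation built into $M_T$ --- requiring only edges outside $\mathcal{E}(I)\cup\mathcal{E}(J)$ to lie in $G$ --- becomes vacuous precisely when $I$ and $J$ are themselves already cliques, so that the extra contributions to $M'(I,J)$ relative to $M(I,J)$ appear only in rows/columns where $M$ vanishes anyway.
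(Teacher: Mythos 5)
Your proposal is correct and follows essentially the same approach as the paper: identify that $M$ vanishes off the clique block, verify that $M$ and $M'$ agree on the clique block $S\times S$ by unwinding $M' = \sum_T M_T$, and conclude by the elementary fact that a zero-padded principal submatrix of a PSD matrix is PSD. The only cosmetic difference is that the paper phrases the last step via eigenvalue comparison for principal submatrices while you verify the quadratic form directly; these are equivalent.
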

\begin{proof}
The reason this lemma is true is because as shown below, the nonzero part of $M$ is a principal submatrix of $M'$.
\begin{proposition}
Whenever $I$ and $J$ are cliques of size $r$ in $G$, $M'(I,J) = M(I,J)$
\end{proposition}
\begin{proof}
Suppose that $I$ and $J$ are cliques in $G$. Then, $M_T(I,J) = \beta(|I \cap J|)$ if $I \cup J \subseteq T$ and $T$ is a clique and $0$ otherwise. Therefore, 
$$M'(I,J) = \sum_{T} M_T(I,J) = \beta(|I \cap J|)\cdot \left|\{T: I \cup J \subseteq T, T \text{ clique}\}\right| = M(I,J).$$
\end{proof}
\begin{corollary}\label{cor:MandMprime}
The nonzero part of $M$ is a principal submatrix of $M'$.
\end{corollary}
We now use the following elementary fact about matrices.
\begin{proposition}\label{prop:principalsubmatrix}
If $A$ is a principal submatrix of a symmetric matrix $B$ then the smallest eigenvalue of $A$ is at least as large as the smallest eigenvalue of $B$. 
\end{proposition}
\begin{proof}
Without loss of generality, $A$ is an $l \times l$ matrix and $B$ is an $m \times m$ matrix where $l \leq m$. Let $v \in \mathbb{R}^{l}$ be a unit eigenvector of $A$ with minimal eigenvalue $\lambda_{min}$. If we let $w \in \mathbb{R}^m$ be the extension of $v$ to $\mathbb{R}^m$ with zeros in the other coordinates, ${w^T}Bw = {v^T}Av = \lambda_{min}$. This implies that the smallest eigenvalue of $B$ is at most $\lambda_{min}$ and the result follows. 
\end{proof}
Combining \cref{cor:MandMprime} and \pref{principalsubmatrix}, if $M'$ is PSD then $M$ is PSD, as needed.
\end{proof}
\newcommand{\jh}{\mathcal{J}}
\section{Johnson  scheme}\label{sec:jscheme}
Association schemes is a classical area in combinatorics and coding theory (cf.~for instance \cite{VanLintW01}). We shall use a few classical results (lemmas \ref{lm:eigenschemes}, \ref{lm:qupperbound} below), about the eigenspaces and eigenvalues of association schemes and the {\sl Johnson scheme} in particular. We also introduce two bases for the Johnson scheme, which will play a key role in bounding the eigenvalues of various matrices later.

We start with some basics about the Johnson scheme - some of our notations are  non-standard but they fit better with the rest of the manuscript.

\begin{definition}[Set-Symmetry]
A matrix $M \in \rnr$ is set-symmetric if for every $I,J \in \binom{[n]}{r}$, $M(I,J)$ depends only on the size of $|I \cap J|$.   
\end{definition}

\begin{definition}[Johnson Scheme]
For $n, r \leq n/2$, let $\jh \equiv \jh_{n,r} \subseteq \rnr$ be the subspace of all set-symmetric matrices. $\jh$ is called the Johnson scheme.
\end{definition}
As we will soon see, $\jh$ is also a commutative algebra. There is a natural basis for the subspace $\jh$:
\begin{definition}[D-Basis]
For $0 \leq \ell \leq r \leq n$, let $D_\ell \equiv D_{n,r,\ell} \in \rnr$ be defined by\footnote{We will often omit the subscripts $n,r$.}
\begin{equation}\label{eq:dfnD}
  D_\ell(I,J) =
  \begin{cases}
    1 & |I \cap J| = \ell\\
0 &\text{otherwise.}
  \end{cases}
\end{equation}
\end{definition}
For example, $D_0$ is the well-studied disjointness matrix. Clearly, $\{D_\ell: 0 \leq \ell \leq r\}$ span the subspace $\jh$. Also, it is easy to check that the $D_\ell$'s and hence all the matrices in $\jh$, commute with one another.

Another important collection of matrices that come up naturally while studying PSD'ness of set-symmetric matrices is the following which gives a basis of PSD matrices for the Johnson scheme.
\begin{definition}[P-Basis]
For $0 \leq t \leq r$, let $P_t \equiv P_{n,r,t} \in \rnr$ be defined by\footnote{We will often omit the subscripts $n,r$.}
$$P_t(I,J) = \binom{|I \cap J|}{t}.$$
Equivalently, for $T \subseteq [n]$, if we let $P_T$ be the PSD rank one matrix 
$$P_T =  \mathds{1}\left(\{I: I \subseteq [n], I \supseteq T\}\right) \cdot \mathds{1}\bkets{\braces{I: I \subseteq [n], I \supseteq T}}^{\dagger},$$
then
\begin{equation}
  \label{eq:dfnP}
  P_t = \sum_{T: T \subseteq [n], |T| = t} P_T.
\end{equation}   
 \end{definition}

The equivalence of the above two definitions follows from a simple calculation: there is a non-zero contribution to $(I,J)$'th entry from the $T$'th summand from \eref{eq:dfnP} if and only if $T \subseteq I \cap J$. Clearly, $P_t \succeq 0$ for $0 \leq t \leq r$. We will exploit this relation repeatedly by expressing matrices in $\jh$ as linear combinations of $P_t$'s. The following elementary claim relates the two bases $((D_\ell))$ and $((P_t))$ for fixed $n,r$.
\begin{claim}\label{clm:dtop}
For fixed $n,r$, the following relations hold: 
  \begin{enumerate}
  \item For $0 \leq t \leq r$, $P_t = \sum_{\ell = t}^r \binom{\ell}{t} D_\ell$.
\item For $0 \leq \ell \leq r$, $D_\ell = \sum_{t = \ell}^r (-1)^{t-\ell} \binom{t}{\ell} P_t$.
  \end{enumerate}
\end{claim}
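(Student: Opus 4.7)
The plan is to verify both identities entrywise, since both sides are matrices in $\jh_{n,r}$ and set-symmetric matrices are determined by how they depend on $m \defeq |I \cap J|$. Recall that by definition $P_t(I,J) = \binom{m}{t}$ and $D_\ell(I,J) = \ind{m = \ell}$, and note that because $I,J \in \binom{[n]}{r}$ we always have $0 \le m \le r$.

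For the first identity, fix $I,J$ with $|I \cap J| = m$ and compute
\[
\sum_{\ell=t}^{r} \binom{\ell}{t}\, D_\ell(I,J) \;=\; \sum_{\ell=t}^{r} \binom{\ell}{t}\,\ind{m=\ell} \;=\; \binom{m}{t},
\]
where in the last step we use $0 \le m \le r$: if $m \ge t$ the indicator picks out $\ell=m$ which lies in the summation range, and if $m < t$ the sum is $0 = \binom{m}{t}$. This equals $P_t(I,J)$, proving (1).

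For the second identity, I would either invoke standard binomial (Möbius) inversion applied to (1), or verify it directly. The direct route: again fixing $m = |I \cap J| \le r$, the RHS at $(I,J)$ is
\[
\sum_{t=\ell}^{r} (-1)^{t-\ell} \binom{t}{\ell}\binom{m}{t}.
\]
Apply the standard subset-of-a-subset identity $\binom{t}{\ell}\binom{m}{t} = \binom{m}{\ell}\binom{m-\ell}{t-\ell}$ and reindex by $s = t-\ell$ to get
\[
\binom{m}{\ell}\sum_{s=0}^{r-\ell} (-1)^s \binom{m-\ell}{s}.
\]
Since $m \le r$, the upper limit $r-\ell$ is at least $m-\ell$, so extending the sum up to $r-\ell$ adds only zero terms, and the sum equals $(1-1)^{m-\ell}$ by the binomial theorem. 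This is $0$ unless $m=\ell$ (in which case it is $1$), so the whole expression equals $\ind{m=\ell} = D_\ell(I,J)$, proving (2).

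There is no real obstacle here: both statements are formal consequences of combinatorial identities that only depend on $m = |I\cap J|$, and the only mild point to check is that the truncation of the alternating sum at $r-\ell$ is harmless because $|I \cap J| \le r$ in $\binom{[n]}{r}$. In fact, once (1) is established, one can skip the direct computation for (2) by observing that the transition matrix $(\binom{\ell}{t})_{t,\ell}$ from the $D$-basis to the $P$-basis is upper triangular with ones on the diagonal, and its inverse has entries $(-1)^{t-\ell}\binom{t}{\ell}$ — a classical binomial-inversion fact — which immediately yields (2) from (1).
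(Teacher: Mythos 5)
Your proof is correct and takes essentially the same route as the paper: part (1) is an immediate entrywise consequence of the definitions, and part (2) is the inversion of (1), which the paper states tersely and you verify explicitly via the subset-of-a-subset identity and the vanishing alternating sum. The only point you flag — that truncating the alternating sum at $r-\ell$ is harmless since $|I\cap J|\le r$ — is indeed the one thing worth saying out loud, and you handle it correctly.
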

\begin{proof}
The first relation follows immediately from the definition of $P_t$. The second relation follows from inverting the set of equations given in (1).
\end{proof}

The main nontrivial result from the theory of association schemes we use is the following characterization of the eigenspaces of matrices in $\jh$. The starting point for these characterizations is the fact that matrices in $\jh$ commute with one another and hence are simultaneously diagonalizable. We refer the reader to Section 7.4 in \cite{GodsilNotes2} (the matrices $P_t$ in our notation correspond to matrices $C_t$ in \cite{GodsilNotes2}) for the proofs of these results.
\begin{lemma}\label{lm:eigenschemes}
Fix $n, r \leq n/2$ and let $\jh\equiv \jh(n,r)$ be the Johnson scheme. Then, for $P_t$ as defined by \eref{eq:dfnP}, there exist subspaces $V_0, V_1,\ldots,V_r \in \reals^{\bnr}$ that are orthogonal to one another such that:
\begin{enumerate}
\item $V_0,\ldots,V_r$ are eigenspaces for $\{P_t: 0\leq t \leq r\}$ and consequently for all matrices in $\jh$.
\item For $0 \leq j \leq r$, $dim(V_j) = \binom{n}{j} - \binom{n}{j-1}$. 
\item For any matrix $Q \in \jh$, let $\lambda_j(Q)$ denote the eigenvalue of $Q$ within the eigenspace $V_j$. Then, 
  \begin{equation}  
    \label{eq:eigenvaluept}
    \lambda_j(P_t) =
    \begin{cases}
      \binom{n-t-j}{r-t} \cdot \binom{r-j}{t-j} & j \leq t\\
      0 & j > t
    \end{cases}.
  \end{equation}
\end{enumerate}
\end{lemma}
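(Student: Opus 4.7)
The plan is to follow the classical construction of the Johnson scheme eigenspaces via ``up'' and ``down'' operators on the Boolean lattice. For each $0 \leq j \leq r$, introduce the inclusion matrix $W_{r,j} \in \reals^{\binom{[n]}{r} \times \binom{[n]}{j}}$ with $W_{r,j}(I,T) = 1$ iff $T \subseteq I$. A one-line calculation gives the factorization $P_t = W_{r,t}\, W_{r,t}^\dagger$, since $(W_{r,t} W_{r,t}^\dagger)(I,J)$ counts $t$-subsets $T$ contained in $I \cap J$, which is exactly $\binom{|I \cap J|}{t}$. In particular this also reproves $P_t \succeq 0$.

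Next, I would define the harmonic subspace $U_j \subseteq \reals^{\binom{[n]}{j}}$ by $U_j := \ker(W_{j,j-1}^\dagger)$, and set $V_j := W_{r,j}(U_j) \subseteq \reals^{\binom{[n]}{r}}$. From standard facts about the $S_n$-action on slices of the Boolean lattice (verified by a short linear-algebra argument or cited directly from Godsil's notes), one obtains that for $j \leq r \leq n/2$, $W_{r,j}$ is injective on $U_j$ so that $\dim V_j = \dim U_j = \binom{n}{j} - \binom{n}{j-1}$; that the subspaces $V_0,\ldots,V_r$ are pairwise orthogonal; and that the dimension count $\sum_{j=0}^r (\binom{n}{j} - \binom{n}{j-1}) = \binom{n}{r}$ forces them to form a direct-sum decomposition of $\reals^{\binom{[n]}{r}}$. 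This settles parts (1)--(2) of the conclusion.

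To compute $\lambda_j(P_t)$ I would analyze the composition $W_{r,t}^\dagger W_{r,j}$. Its $(T',T)$-entry equals $\binom{n-|T \cup T'|}{r-|T \cup T'|}$, which depends only on $|T \cap T'|$. The key identity, provable by induction on $t-j$ from the commutation between one-step up and down operators, is that \emph{restricted to the harmonic subspace $U_j$},
\[ W_{r,t}^\dagger\, W_{r,j}\,\big|_{U_j} \;=\; \binom{n - t - j}{r - t}\, W_{t,j}\,\big|_{U_j}\ \ (j \leq t), \qquad W_{r,t}^\dagger\, W_{r,j}\,\big|_{U_j} \;=\; 0\ \ (j > t). \]
Combined with the easy identity $W_{r,t}\, W_{t,j} = \binom{r-j}{t-j}\, W_{r,j}$ (obtained by counting flags $T \subseteq T' \subseteq I$), this gives, for any $u \in U_j$ and $j \leq t$,
\[ P_t(W_{r,j} u) \;=\; W_{r,t}\bigl(W_{r,t}^\dagger W_{r,j} u\bigr) \;=\; \binom{n-t-j}{r-t}\binom{r-j}{t-j}\,(W_{r,j} u), \]
which is precisely the claimed $\lambda_j(P_t)$; and $P_t(W_{r,j} u) = 0$ when $j > t$.

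The main technical obstacle is the restricted composition identity above: after decomposing $W_{r,t}^\dagger W_{r,j}$ by intersection size, one must show that the contributions from intersections of size strictly less than $j$ cancel exactly on $U_j$. This is a standard consequence of iterating the commutation relation between consecutive up and down operators together with the kernel definition of $U_j$, but its detailed verification is best handled by citing Section~7.4 of \cite{GodsilNotes2}, as the paper already indicates.
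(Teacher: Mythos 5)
The paper gives no proof of this lemma at all: it states the result and refers the reader to Section~7.4 of \cite{GodsilNotes2} for the argument. Your proposal correctly reconstructs the classical proof that lies behind that citation, via the inclusion (``up/down'') matrices $W_{r,j}$ on the Boolean lattice: the factorization $P_t = W_{r,t}W_{r,t}^\dagger$ is right (and does reprove $P_t \succeq 0$), the flag-counting identity $W_{r,t}W_{t,j} = \binom{r-j}{t-j}W_{r,j}$ is right, the formula for the entries of $W_{r,t}^\dagger W_{r,j}$ is right, and the restricted composition identity $W_{r,t}^\dagger W_{r,j}\big|_{U_j} = \binom{n-t-j}{r-t}W_{t,j}\big|_{U_j}$ checks out on small cases (e.g.\ $r=2$, $t=j=1$ gives $(n-2)I$, and $r=3$, $t=2$, $j=1$ gives $(n-3)W_{2,1}$ on $U_1$, as claimed), including the $j>t$ vanishing that gives pairwise orthogonality of the $V_j$. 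Since you, like the paper, ultimately cite \cite{GodsilNotes2} for the one nontrivial step (the restricted identity), this is the same route; the added value of your write-up is that it makes explicit the harmonic-subspace construction $U_j = \ker W_{j,j-1}^\dagger$, the injectivity of $W_{r,j}$ on $U_j$ for $j \le r \le n/2$, the telescoping dimension count $\sum_{j=0}^r\bigl(\binom{n}{j}-\binom{n}{j-1}\bigr)=\binom{n}{r}$, and exactly how the two binomial factors in $\lambda_j(P_t)$ arise, rather than leaving all of this to the reference.
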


The above lemma helps us estimate the eigenvalues of any matrix in $Q \in \jh$ if we can write $Q$ as a linear combination of the $P_t$'s or $D_\ell$'s. To this end, we shall also use the following estimate on the eigenvalues of such linear combinations.

\begin{lemma}\label{lm:qupperbound}
Let $Q = \sum_\ell \alpha_\ell D_\ell \in \jh(n,r)$, and $\beta_t = \sum_{\ell \leq t} \binom{t}{\ell}\alpha_\ell$, where $\alpha_\ell \geq 0$. Then, for $0 \leq j \leq r$,
$$\lambda_j(Q) \leq  \sum_{t \geq j} \beta_t \cdot \binom{n-t-j}{r-t} \cdot \binom{r-j}{t-j}.$$
\end{lemma}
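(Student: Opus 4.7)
The plan is to rewrite $Q$ in the $P$-basis, apply \lref{lm:eigenschemes} to read off the eigenvalues exactly, and then use the sign hypothesis $\alpha_\ell \geq 0$ to convert the signed expansion into the advertised unsigned upper bound.

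First I would invoke \clref{clm:dtop}(2), which gives $D_\ell = \sum_{t=\ell}^{r}(-1)^{t-\ell}\binom{t}{\ell}P_t$. Substituting this into $Q=\sum_\ell \alpha_\ell D_\ell$ and switching the order of summation yields
$$Q \;=\; \sum_{t=0}^{r} \gamma_t\, P_t, \qquad \gamma_t \;:=\; \sum_{\ell=0}^{t}(-1)^{t-\ell}\binom{t}{\ell}\alpha_\ell.$$
Because the subspaces $V_0,\ldots,V_r$ supplied by \lref{lm:eigenschemes} are common eigenspaces of every $P_t$, they remain eigenspaces of $Q$, and part (3) of that lemma gives
$$\lambda_j(Q) \;=\; \sum_{t=0}^{r}\gamma_t\,\lambda_j(P_t) \;=\; \sum_{t=j}^{r}\gamma_t \binom{n-t-j}{r-t}\binom{r-j}{t-j},$$
the second equality using that $\lambda_j(P_t)=0$ whenever $t<j$.

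It then remains to compare $\gamma_t$ with $\beta_t$. Since every $\alpha_\ell$ is non-negative,
$$\gamma_t \;\leq\; \sum_{\ell=0}^{t}\binom{t}{\ell}\alpha_\ell \;=\; \beta_t,$$
simply because dropping the $(-1)^{t-\ell}$ factors can only increase the sum. All of the binomial factors $\binom{n-t-j}{r-t}$ and $\binom{r-j}{t-j}$ that appear in the eigenvalue formula are non-negative (we use $r\leq n/2$, built into the Johnson-scheme setup, to keep $\binom{n-t-j}{r-t}$ meaningful), so replacing each $\gamma_t$ with the larger value $\beta_t$ only increases the right-hand side. This yields the inequality in the statement.

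I do not foresee any real obstacle: the lemma is a bookkeeping corollary of two deeper facts already collected in this section, namely the change-of-basis formula \clref{clm:dtop} and the eigenvalue characterization \lref{lm:eigenschemes}. The only point worth flagging is the sign cancellation when expanding each $D_\ell$ as a combination of the $P_t$; discarding this cancellation in favor of the signless $\beta_t$ is precisely what makes the bound useful in the sequel, since the $\beta_t$ turn out to admit clean closed-form estimates in the later applications whereas the $\gamma_t$ do not.
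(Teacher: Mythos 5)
Your proof is correct and uses the same key ingredients as the paper's: the change of basis from the $D_\ell$'s to the $P_t$'s via \clref{clm:dtop}, the sign bound $\gamma_t \leq \beta_t$ coming from $\alpha_\ell \geq 0$, and the eigenvalue formula from \lref{lm:eigenschemes}. The only cosmetic difference is that the paper first establishes the PSD-ordering inequality $Q \preceq \sum_t \beta_t P_t$ and then transfers it to eigenvalues via the shared eigenspaces, whereas you read off $\lambda_j(Q)=\sum_t \gamma_t\lambda_j(P_t)$ directly and bound term by term; these are the same computation organized slightly differently.
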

\begin{proof}
By \clref{clm:dtop},  
\begin{multline*}
  \sum_\ell \alpha_\ell D_\ell = \sum_\ell \alpha_\ell \left(\sum_{t \geq \ell} (-1)^{t-\ell} \binom{t}{\ell}P_t\right) = \sum_t P_t \left(\sum_{\ell \leq t} (-1)^{t-\ell} \binom{t}{\ell} \alpha_\ell\right)\\
\preceq \sum_t P_t \left(\sum_{\ell \leq t}  \binom{t}{\ell} \alpha_\ell\right) = \sum_t \beta_t P_t.
\end{multline*}
Therefore, as $Q$ and $P_t$'s have common eigenspaces, by \lref{lm:eigenschemes},
$$ \lambda_j(Q) \leq  \lambda_j\bkets{\sum_t \beta_t P_t} \leq \sum_t \beta_t \lambda_j(P_t) = \sum_{t \geq j} \beta_t \cdot \binom{n-t-j}{r-t} \cdot \binom{r-j}{t-j}.$$ 
\end{proof}



\section{PSD'ness of the expectation matrices}\label{sec:psdexpectation}
In the section we show that if $r$ is not too large then the expectation matrix $E = \ex[M']$ is PSD with high minimal eigenvalue. As a warmup, we first show that the expectation matrix $E_M = \ex[M]$ is PSD. We start by writing down $E_M$.

\begin{claim}
For $I,J \in \binom{n}{r}$, and $E_M = \ex[M]$, 
\begin{equation}
  \label{eq:exm}
  E_M(I,J) = \binom{n-|I\cup J|}{2r-|I \cup J|} \cdot \frac{\binom{k}{|I \cup J|}}{\binom{2r}{|I\cup J|}} \cdot 2^{-\binom{2r}{2}}.
\end{equation}
\end{claim}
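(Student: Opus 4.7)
The plan is to unwind the definition of $M$ and apply linearity of expectation. Recall from \eqref{eq:mainmatrixdef} that
\[
M(I,J) \;=\; \deg_G(I \cup J)\cdot \frac{\binom{k}{|I\cup J|}}{\binom{2r}{|I\cup J|}},
\]
and that the combinatorial factor $\binom{k}{|I\cup J|}/\binom{2r}{|I\cup J|}$ is deterministic (it does not depend on $G$). Therefore the only thing that needs to be computed is $\ex[\deg_G(I\cup J)]$, and the stated formula will then drop out.

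First I would set $U = I\cup J$ and $u = |U|$, and rewrite
\[
\deg_G(U) \;=\; \sum_{\substack{S\supseteq U\\|S|=2r}} \mathds{1}[S\text{ is a clique of }G].
\]
By linearity of expectation, $\ex[\deg_G(U)] = \sum_S \Pr[S\text{ is a clique}]$, where the sum is over $2r$-sets $S$ containing $U$. The number of such $S$ is $\binom{n-u}{2r-u}$, since we are choosing the remaining $2r-u$ vertices from the $n-u$ vertices outside $U$.

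Next I would compute the probability that a fixed $2r$-set $S$ forms a clique in $G\leftarrow G(n,1/2)$. Since the $\binom{2r}{2}$ potential edges among $S$ are independent Bernoulli$(1/2)$'s, this probability equals $2^{-\binom{2r}{2}}$ and is the same for every such $S$. Combining,
\[
\ex[\deg_G(U)] \;=\; \binom{n-u}{2r-u}\cdot 2^{-\binom{2r}{2}},
\]
and multiplying by the deterministic factor gives the claim. There is no real obstacle here; the only thing to be careful about is that the events $\{S\text{ is a clique}\}$ are \emph{not} independent across different $S$'s, but that is irrelevant since we only use linearity of expectation, not independence.
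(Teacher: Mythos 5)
Your proof is correct and matches the paper's argument: both compute $\ex[\deg_G(I\cup J)]$ by linearity of expectation, counting the $\binom{n-|I\cup J|}{2r-|I\cup J|}$ sets $S$ of size $2r$ containing $I\cup J$ and multiplying by the clique probability $2^{-\binom{2r}{2}}$. Your remark that the dependence among the clique events is harmless because only linearity is used is a nice clarification, but the substance is identical.
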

\begin{proof}
The claim follows from observing that for all $I$ and $J$, $\ex[deg_G(I \cup J)] = \binom{n-|I \cup J|}{2r-|I \cup J|} \cdot  2^{-\binom{2r}{2}}$. To see this, note that for all $I$ and $J$ there are $\binom{n-|I \cup J|}{2r-|I \cup J|}$ sets of size $2r$ containing $I \cup J$ and each is a clique with probability $2^{-\binom{2r}{2}}$.
\end{proof}

The expectation matrix above is just a scalar multiple of $\m_{Gr}$ (viewed as a matrix) as defined in \eref{eq:grigoriev}. Therefore, by \tref{th:grigknapsack}, $E_M$ as defined above is PSD for $r < \min(k,n-k)$. We give a simpler proof of this claim here for the case when $r \leq \min(\frac{k}{2},n-k)$.

\begin{theorem}\label{th:knapsack}
The matrix $E_M$ is positive definite for $r \leq \min(\frac{k}{2},n-k)$.   
\end{theorem}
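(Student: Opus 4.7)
My plan is to exhibit $E_M$ as a positive scalar multiple of an explicit Gram matrix, reducing positive definiteness to the classical rank statement for set-inclusion matrices.

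The first step is a purely algebraic simplification of \eref{eq:exm}. Setting $s = |I \cup J|$ and expanding into factorials, I will verify the identity
\[
\binom{n-s}{2r-s}\cdot \frac{\binom{k}{s}}{\binom{2r}{s}} \;=\; \binom{n-s}{k-s}\cdot \frac{\binom{n}{2r}}{\binom{n}{k}},
\]
which holds as soon as $k \geq 2r$, i.e., under the hypothesis $r\le k/2$. Substituting into \eref{eq:exm} yields the clean form
\[
E_M(I,J) \;=\; c\cdot \binom{n-|I\cup J|}{\,k-|I\cup J|\,}, \qquad c \;=\; \frac{\binom{n}{2r}}{\binom{n}{k}}\cdot 2^{-\binom{2r}{2}} \;>\; 0.
\]

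The second step is combinatorial. The remaining binomial $\binom{n-|I\cup J|}{k-|I\cup J|}$ is precisely the number of $k$-subsets $K\subseteq [n]$ with $I\cup J\subseteq K$, that is, containing both $I$ and $J$. Define the set-inclusion matrix $U\in\reals^{\binom{[n]}{r}\times \binom{[n]}{k}}$ by $U_{I,K} = \mathds{1}[I\subseteq K]$. Then
\[
(UU^\dagger)_{I,J} \;=\; \sum_{K\in\binom{[n]}{k}} \mathds{1}[I\subseteq K]\,\mathds{1}[J\subseteq K] \;=\; \binom{n-|I\cup J|}{\,k-|I\cup J|\,},
\]
so that $E_M = c\cdot UU^\dagger$ is manifestly positive semidefinite.

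The third step upgrades PSD to positive definiteness by showing $UU^\dagger$ is nonsingular, equivalently that the $\binom{n}{r}$ rows of $U$ are linearly independent. This is the classical theorem of Gottlieb on the rank of inclusion matrices: the $r$-versus-$k$ set-inclusion matrix has full row rank precisely when $r\le \min(k,n-k)$. Alternatively, one can stay inside the framework of \sref{sec:jscheme}: since $UU^\dagger \in \jh(n,r)$, its eigenspaces are the $V_0,\ldots,V_r$ of \lref{lm:eigenschemes}, and a short calculation expressing $UU^\dagger$ in the $P_t$ basis shows that each of its eigenvalues is strictly positive as long as $r+k\le n$. Our hypothesis $r\le\min(k/2,\,n-k)$ secures both $r\le k$ and $r+k\le n$, thereby concluding $E_M \succ 0$. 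I expect no serious obstacle; the binomial identity is routine and the linear independence of inclusion vectors is standard, so the main work is to organize the bookkeeping cleanly.
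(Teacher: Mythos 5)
Your proof is correct, and it takes a genuinely different route from the paper. You normalize the binomial quotient via the identity $\binom{n-s}{2r-s}\binom{k}{s}/\binom{2r}{s}=\binom{n-s}{k-s}\binom{n}{2r}/\binom{n}{k}$ (valid once $k\ge 2r$, as you note, since then $s\le 2r\le k$ keeps all factorials legal) to recognize $E_M(I,J)$, up to a positive constant, as the number of $k$-sets containing $I\cup J$. This exhibits $E_M=c\,UU^\dagger$ with $U$ the $r$-versus-$k$ inclusion matrix, so PSDness is immediate, and strict positive definiteness is outsourced to Gottlieb's rank theorem under $r\le\min(k,n-k)$ (which your hypothesis implies, since $r\le k/2$ and $r\le n-k$ together force $r<n/2$). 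The paper instead stays entirely inside the Johnson-scheme calculus it has already set up in \sref{sec:jscheme}: it expands $E_M=\sum_\ell e_\ell D_\ell$, guesses $\alpha_t=e_0\binom{n-k}{t}/\binom{k-2r+t}{t}$, verifies via a Vandermonde convolution that $E_M=\sum_t\alpha_t P_t$, and concludes $E_M\succeq\alpha_r I\succ 0$ because $P_r=I$. Your argument is shorter and more conceptual, but the paper's buys two things worth noting. First, it is self-contained and produces an explicit lower bound on the least eigenvalue (namely $\alpha_r$), which is the quantitative information the paper actually cares about. Second, and more importantly, it generalizes directly to the genuine expectation matrix $E=\ex[M']$ in \lref{lm:exmineigenvalue}, whose extra factor $2^{-\binom{|I\cap J|}{2}}$ destroys the clean inclusion-matrix Gram structure; the $D$-to-$P$ change of basis still applies there, whereas the Gottlieb shortcut does not. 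Your alternative (b)---re-expanding $UU^\dagger$ in the $P_t$ basis and checking positivity of the eigenvalues---is essentially the paper's proof, so the two routes converge if one wants quantitative control.
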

\begin{proof}
We will show this by writing $E_M$ as a suitable positive linear combination of the PSD matrices $P_t$'s from \sref{sec:jscheme}. More concretely, for any $\alpha_0,\ldots,\alpha_t > 0$, we have
$$0 \prec \sum_t \alpha_t P_t = \sum_{\ell = 0}^r \bkets{\sum_{t=0}^\ell \alpha_t \binom{\ell}{t}} D_\ell.$$
Now, let $e_\ell = E_M(I,J)$ for any $I$ and $J$ with $|I \cup J| = 2r-\ell$, i.e.,
$$e_\ell = 2^{-\binom{2r}{2}} \cdot \binom{n-2r+\ell}{\ell} \cdot \frac{\binom{k}{2r-\ell}}{\binom{2r}{2r-\ell}}.$$
Then, $E_M = \sum_{\ell = 0}^r e_\ell D_\ell$. Therefore, we will be done if we can find $\alpha_t$'s such that for every $0 \leq \ell \leq r$, $ e_\ell = \sum_{t=0}^\ell \alpha_t \binom{\ell}{t}$. By examining the first values of $\ell$, it is easy to guess what the $\alpha_t$ should be. 
First observe that $e_\ell = e_0 \cdot \binom{n-2r+\ell}{\ell}/\binom{k-2r+\ell}{\ell}$ and let $\alpha_t = e_0 \binom{n-k}{t}/\binom{k-2r+t}{t}$. Then,
\begin{align*}
 e_0 \binom{n-2r+\ell}{\ell} &= e_0 \sum_{t=0}^\ell \binom{n-k}{t} \cdot \binom{k-2r+\ell}{\ell-t}\\
&= \sum_{t=0}^\ell  \alpha_t \cdot \binom{k-2r+t}{t} \binom{k-2r+\ell}{\ell-t}\\
&= \sum_{t=0}^\ell \alpha_t \cdot \binom{\ell}{t} \cdot \binom{k-2r+\ell}{\ell}.
\end{align*}
Therefore, $e_\ell = \sum_t \binom{\ell}{t} \alpha_t$ and the lemma now follows:
\begin{align*}
  E_M = \sum_{\ell = 0}^r e_\ell D_\ell = \sum_{\ell =0}^r \left(\sum_{t=0}^\ell \alpha_t \binom{\ell}{t}\right) D_\ell  \succeq \alpha_r \mathds{I}.
\end{align*}
\end{proof}

\subsection{PSD'ness of $E$}
Now that we have shown that $E_M$ is positive definite when $r$ is not too large, we use similar ideas to analyze $E = \ex[M']$, the expectation matrix we will actually be using. We begin by writing down $E$.
\begin{claim}
For $I,J \in \binom{n}{r}$, and $E = \ex[M']$, 
\begin{equation}
  \label{eq:actualexm}
  E(I,J) = \binom{n-|I\cup J|}{2r-|I \cup J|} \cdot \frac{\binom{k}{|I \cup J|}}{\binom{2r}{|I\cup J|}} \cdot 2^{-{r^2} - \binom{|I \cap J|}{2}}.
\end{equation}
\end{claim}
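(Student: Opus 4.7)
The plan is a direct computation using linearity of expectation on the definition $M' = \sum_{T:\,|T|=2r} M_T$. Writing
\[
E(I,J) \;=\; \ex[M'(I,J)] \;=\; \sum_{T:\,|T|=2r} \ex[M_T(I,J)],
\]
I would separate the contribution from each $T$ into two independent parts: a purely combinatorial ``set-containment'' condition $I \cup J \subseteq T$, and a random ``edge-presence'' condition that all edges of $\mathcal{E}(T)\setminus(\mathcal{E}(I)\cup\mathcal{E}(J))$ are in $G$. Only $T$'s satisfying the first give a nonzero term, and there are exactly $\binom{n-|I\cup J|}{2r-|I\cup J|}$ such sets.

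For a fixed such $T$, conditional on containment the entry equals $\beta(|I\cap J|)$ times the indicator that the required edges are all present. Since the underlying graph is $G(n,1/2)$ with independent edges, this probability is $2^{-e}$, where $e = |\mathcal{E}(T)\setminus(\mathcal{E}(I)\cup\mathcal{E}(J))|$. I would compute $e$ by inclusion-exclusion, observing that an edge lies in $\mathcal{E}(I)\cap\mathcal{E}(J)$ iff both its endpoints lie in $I\cap J$, so $|\mathcal{E}(I)\cap\mathcal{E}(J)| = \binom{|I\cap J|}{2}$. Hence $|\mathcal{E}(I)\cup\mathcal{E}(J)| = 2\binom{r}{2} - \binom{|I\cap J|}{2}$, and
\[
e \;=\; \binom{2r}{2} - 2\binom{r}{2} + \binom{|I\cap J|}{2} \;=\; r^2 + \binom{|I\cap J|}{2},
\]
using $\binom{2r}{2} - 2\binom{r}{2} = (2r^2-r) - (r^2-r) = r^2$.

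The last step is cosmetic: rewriting $\beta(|I\cap J|)$ in terms of $|I\cup J|$. Since $|I\cup J| + |I\cap J| = |I|+|J| = 2r$, we have $2r - |I\cap J| = |I\cup J|$, so
\[
\beta(|I\cap J|) \;=\; \frac{\binom{k}{2r-|I\cap J|}}{\binom{2r}{2r-|I\cap J|}} \;=\; \frac{\binom{k}{|I\cup J|}}{\binom{2r}{|I\cup J|}}.
\]
Multiplying the count of valid $T$, the value $\beta(|I\cap J|)$, and the edge-presence probability $2^{-r^2 - \binom{|I\cap J|}{2}}$ yields the claimed formula.

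There is no real obstacle here: the argument is a one-line linearity of expectation followed by an elementary inclusion-exclusion count. The only thing to double-check is that the set-containment event and the edge-presence event are truly independent of one another (they are, as the former is deterministic given $I,J,T$), and that the $r^2$ in the exponent comes out of the arithmetic $\binom{2r}{2} - 2\binom{r}{2}$ correctly.
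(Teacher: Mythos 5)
Your proof is correct and takes essentially the same approach as the paper: both compute $\ex[M'(I,J)]$ by summing over the $\binom{n-|I\cup J|}{2r-|I\cup J|}$ sets $T \supseteq I\cup J$ of size $2r$ and multiplying by the per-$T$ probability $2^{-r^2 - \binom{|I\cap J|}{2}}$ that all edges in $\mathcal{E}(T)\setminus(\mathcal{E}(I)\cup\mathcal{E}(J))$ are present, arriving at the same edge count via $\binom{2r}{2} - 2\binom{r}{2} + \binom{|I\cap J|}{2} = r^2 + \binom{|I\cap J|}{2}$. The paper phrases this as the conditional expectation of $\deg_G(I\cup J)$ given that $\mathcal{E}(I)\cup\mathcal{E}(J) \subseteq G$, but the underlying calculation is identical to your direct linearity-of-expectation argument over $M' = \sum_T M_T$.
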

\begin{proof}
The claim follows from observing that for all $I$ and $J$, conditioned on the edges in $\mathcal{E}(I)$ and $\mathcal{E}(J)$ being present, $\ex[deg_G(I \cup J)] = \binom{n-|I \cup J|}{2r-|I \cup J|} \cdot 2^{-\binom{2r}{2} + \binom{|I|}{2} + \binom{|J|}{2} - \binom{|I \cap J|}{2}} = \binom{n-|I \cup J|}{2r-|I \cup J|} \cdot 2^{-r^2 - \binom{|I \cap J|}{2}}$. To see this, note that for all $I$ and $J$ there are $\binom{n-|I \cup J|}{2r-|I \cup J|}$ sets of size $2r$ containing $I \cup J$, and conditioned on the edges in $\mathcal{E}(I)$ and $\mathcal{E}(J)$ being present, each is a clique with probability $2^{-\binom{2r}{2} + \binom{|I|}{2} + \binom{|J|}{2} - \binom{|I \cap J|}{2}}$. Now note that $|I| = |J| = r$ and $-\binom{2r}{2} + 2\binom{r}{2} = -(2r^2 - r) + (r^2 - r) = -r^2$ so 
$-\binom{2r}{2} + \binom{|I|}{2} + \binom{|J|}{2} - \binom{|I \cap J|}{2} = -r^2 - \binom{|I \cap J|}{2}$
\end{proof}

\begin{lemma}\label{lm:exmineigenvalue}
If $k < \frac{n - 2r}{3r \cdot 2^{r-1}}$ and $r \leq \frac{k}{2}$ then $E$ is PSD with minimal eigenvalue $2^{-O(r^2)}{k^r}{n^r}$
\end{lemma}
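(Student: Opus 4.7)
My approach will mirror the strategy used for $E_M$ in \tref{th:knapsack}. Since $E$ lies in the Johnson scheme $\mathcal{J}(n,r)$, I write $E = \sum_{\ell=0}^r e_\ell D_\ell$ where $e_\ell$ is the common value of $E(I,J)$ whenever $|I\cap J|=\ell$, and then re-expand in the PSD basis via \clref{clm:dtop}: $E = \sum_{t=0}^r \alpha_t P_t$ with $\alpha_t = \sum_{\ell\le t}(-1)^{t-\ell}\binom{t}{\ell}e_\ell$. If every $\alpha_t$ is non-negative, then \lref{lm:eigenschemes} gives $\lambda_j(E) = \sum_{t\ge j}\alpha_t\lambda_j(P_t) \ge \alpha_r\lambda_j(P_r) = \alpha_r$ on every Johnson eigenspace $V_j$ (using $\lambda_j(P_r)=\binom{n-r-j}{0}\binom{r-j}{r-j}=1$), so $\lambda_{\min}(E)\ge \alpha_r$. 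The only new feature compared with \tref{th:knapsack} is the factor $2^{-\binom{\ell}{2}}$ now appearing in $e_\ell$, which spoils the clean Vandermonde identity used there to give $\alpha_t$ in closed form.

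Both hypotheses on $k,n,r$ will be used to show that $e_{\ell-1}/e_\ell$ is small, so the alternating sum defining $\alpha_t$ is dominated by its last term. A direct computation from the explicit formula for $e_\ell$ will yield
\[
\frac{e_{\ell-1}}{e_\ell} \;=\; \frac{k-2r+\ell}{n-2r+\ell}\cdot 2^{\ell-1} \;\le\; \frac{k}{n-2r}\cdot 2^{\ell-1} \;<\; \frac{2^{\ell-r}}{3r},
\]
where the last inequality uses $k<(n-2r)/(3r\cdot 2^{r-1})$. For $\ell\le r$ this is at most $1/(3r)$, and iterating gives $e_{t-j}/e_t\le (3r)^{-j}$ for all $0\le j\le t\le r$. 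Combining with $\binom{t}{j}\le r^j/j!$ and applying the triangle inequality,
\[
\alpha_t \;\ge\; e_t \Bigl(1 - \sum_{j=1}^t \binom{t}{j}(3r)^{-j}\Bigr) \;\ge\; e_t\Bigl(1 - \sum_{j\ge 1}\frac{1}{j!\,3^j}\Bigr) \;=\; (2-e^{1/3})\,e_t \;>\; 0.6\,e_t,
\]
which is strictly positive for every $t\le r$.

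Finally, a short calculation shows
\[
e_r = \frac{\binom{n-r}{r}\binom{k}{r}}{\binom{2r}{r}}\cdot 2^{-r^2-\binom{r}{2}},
\]
and the elementary bounds $\binom{n-r}{r}\ge (n-2r)^r/r!$, $\binom{k}{r}\ge (k/2)^r/r!$ (valid since $r\le k/2$), and $\binom{2r}{r}\le 4^r$ will give $e_r \ge 2^{-O(r^2)}n^r k^r$. Hence $\lambda_{\min}(E)\ge \alpha_r \ge 2^{-O(r^2)}n^r k^r$, completing the proof. The main obstacle is establishing $\alpha_t\ge 0$ for \emph{every} $t$, since the $2^{-\binom{\ell}{2}}$ factor destroys the algebraic identity that made non-negativity transparent in \tref{th:knapsack}; one can check in small examples that $\alpha_t$ becomes negative once $k$ is comparable to $\sqrt{n}$. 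The sparsity hypothesis $k<(n-2r)/(3r\cdot 2^{r-1})$ is used precisely to drive $e_{\ell-1}/e_\ell$ below $1/(3r)$, providing exactly the geometric slack needed for the triangle-inequality argument above.
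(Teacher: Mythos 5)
Your proposal is correct and follows essentially the same route as the paper: express $E$ in the $D$-basis, convert to the $P$-basis via \clref{clm:dtop}, use the hypothesis on $k$ to show the ratios $e_{\ell-1}/e_\ell$ are small enough that the alternating sum $\alpha_t$ is dominated by its last term, and finally lower-bound $e_r$. Your treatment of the domination step (bounding $\binom{t}{j}(3r)^{-j}\le 1/(j!\,3^j)$ and summing the exponential tail) is more explicit than the paper's brief ``geometric increase by a factor of $3$'' argument, but the mechanism and conclusion are the same.
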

\begin{proof}
By \eref{eq:actualexm}, $E = \sum_\ell e_{\ell} D_\ell$, where $e_{\ell} = \binom{n-2r+l}{l} \cdot \frac{\binom{k}{2r - \ell}}{\binom{2r}{2r - \ell}} \cdot 2^{-{r^2} - \binom{\ell}{2}}$. We next express $E$ as a linear combination of $P_t$'s: $E = \sum_t \alpha_t P_t$. By \clref{clm:dtop}, $D_\ell = \sum_{t = \ell}^r (-1)^{t-\ell} \binom{t}{\ell} P_t$ so
$$\alpha_t = \sum_{\ell = 0}^{t} (-1)^{t - \ell} \binom{t}{\ell} e_{\ell}.$$
Now note that for all $l \geq 1$, $e_{\ell} = \frac{n - 2r + \ell}{\ell}\cdot\frac{\ell}{k - 2r + \ell}\cdot{2^{\ell-l}}\cdot{e_{\ell-1}} = \frac{n - 2r + \ell}{2^{1-l}(k - 2r + \ell)}\cdot{e_{\ell-1}}$. If 
$k < \frac{n - 2r}{3r \cdot 2^{r-1}}$ then the terms in the sum for $\alpha_t$ increase geometrically by a factor of at least 3 and the sum will therefore be dominated by the last term. In particular, $\alpha_t \geq \frac{e_t}{2}$. Thus, $\alpha_t > 0$ for all $t \in [0,r]$ and $$\alpha_r \geq \frac{e_r}{2} = \frac{1}{2} \cdot \binom{n-r}{r} \cdot \frac{\binom{k}{r}}{\binom{2r}{r}} \cdot 2^{-{r^2} - \binom{r}{2}} = 2^{-O(r^2)}{k^r}{n^r}$$ Since the $P_t$'s are PSD and $P_r = I$, $E$ is PSD with minimal eigenvalue $2^{-O(r^2)}{k^r}{n^r}$, as needed.
\end{proof}

\section{PSD'ness of dual certificate}\label{sec:psdmaintech}
We are now ready to prove our main result, \tref{th:main}, with the aid of several technical results whose proof is deferred to \sref{sec:localrand} and \sref{sec:degreevariancecalculations}. We prove \tref{th:main} by showing that the matrix $M$ will be PSD with high probability (\tref{th:mainpsd}). In turn, we show that $M$ is PSD with high probability with our main technical lemma, which says that $M'$ is PSD with high probability (this is sufficient by \lref{lm:MprimePSDtoMPSD}).
\begin{lemma}[Main Technical Lemma]\label{lm:maintech}
For $c$ a sufficiently large constant the following holds. The matrix $M' \in \reals^{\binom{[n]}{r} \times \binom{[n]}{r}}$ defined by \eref{eq:mr} is positive definite with high probability, for $k < 2^{-cr} (\sqrt{n}/\log n)^{1/r}$.
\end{lemma}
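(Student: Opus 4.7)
The plan is to follow the decomposition outlined in \sref{sec:blurb} and combine the three ingredients already stated there: a lower bound on $\lambda_{\min}(E)$, an upper bound on $\|L\|$, and an upper bound on $\|\Delta\|$. Write
\[
M' \;=\; E + L + \Delta, \qquad E = \ex[M'],
\]
where $L$ is the local random matrix whose $(I,J)$-entry depends only on the edges of $G$ inside $\mathcal{E}(I\cup J)\setminus(\mathcal{E}(I)\cup\mathcal{E}(J))$, centered so that each entry of $L$ has mean zero, and $\Delta = M'-E-L$ is the global remainder. The point of this splitting is that $L$ captures the ``dominant'' zero-mean fluctuation of $M'$ that is still amenable to the trace method, while $\Delta$ consists of higher-order terms whose individual entries are small in magnitude.

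The argument then has three steps. First, invoke \lref{lm:exmineigenvalue}, which already gives $\lambda_{\min}(E) \geq 2^{-O(r^2)}\, k^r n^r$ whenever $k \leq n/(C r 2^r)$ and $r \leq k/2$; the hypothesis $k \leq 2^{-cr}(\sqrt{n}/\log n)^{1/r}$ comfortably satisfies these conditions for $c$ large. Second, appeal to the norm bound $\|L\| \leq C k^{2r} n^{r-1/2}\log n$ established in \sref{sec:normboundL}; this is the step where association-scheme structure is combined with the trace method on high powers $\Tr(LL^\dagger)^m$, exploiting that entries of $L$ are independent whenever the edge sets $\mathcal{E}(I\cup J)\setminus(\mathcal{E}(I)\cup\mathcal{E}(J))$ are disjoint. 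Third, apply the bound $\|\Delta\| \leq C k^{2r} n^{r-1/2}\log n$ from \sref{sec:normboundDelta}, which follows by a union bound on concentration inequalities for $\deg_G(\cdot)$ (via McDiarmid, \tref{th:mcdiarmid}) to show each entry of $\Delta$ is small with high probability, then Gershgorin (\lref{lm:Gershgorin}) to bound the operator norm.

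Combining the three, with high probability
\[
\lambda_{\min}(M') \;\geq\; \lambda_{\min}(E) - \|L\| - \|\Delta\| \;\geq\; 2^{-O(r^2)} k^r n^r - 2C k^{2r} n^{r-1/2}\log n.
\]
This is positive exactly when $k^r \leq 2^{-O(r^2)} \sqrt{n}/\log n$, i.e.\ when $k \leq 2^{-cr}(\sqrt{n}/\log n)^{1/r}$ for a suitable constant $c$, which is our hypothesis. Hence $M'\succ 0$ with high probability.

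The main obstacle is clearly the norm bound on $L$: the entries of $L$ are highly dependent whenever the induced subgraphs on $I\cup J$ share edges, so the standard matrix-concentration toolbox (matrix Bernstein, independent-entry Wigner-type bounds) does not apply directly. My plan there would be to expand $\Tr(LL^\dagger)^m$ as a sum over closed walks on $\bnr$, classify the terms by the combinatorial structure of the union of edge sets they induce in $[n]$, and show that only walks whose edge-multigraph has every edge repeated contribute significantly---with the count of such walks controlled by a careful bookkeeping on how vertices in $I\cup J$ are revisited. The association-scheme machinery of \sref{sec:jscheme} (in particular expressing relevant set-symmetric auxiliary matrices via $P_t$'s and bounding eigenvalues via \lref{lm:qupperbound}) should enter to collapse the combinatorics into a clean recursive bound in $r$. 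The $\Delta$ bound and the $E$ bound are, by comparison, routine given the groundwork already laid.
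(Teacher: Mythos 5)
Your proposal is correct and follows the same route as the paper: decompose $M' = E + L + \Delta$, invoke \lref{lm:exmineigenvalue}, \lref{lm:Lnorm}, and \lref{lm:Deltanorm} for the three pieces, and combine via $\lambda_{\min}(M') \geq \lambda_{\min}(E) - \|L\| - \|\Delta\|$ exactly as in \sref{sec:wrapup}. The only quibble is notational: the norm bounds on $L$ and $\Delta$ carry a $2^{O(r^2)}$ factor (not a constant $C$) which your final arithmetic does correctly absorb, so the conclusion stands.
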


To prove \lref{lm:maintech}, we first decompose $M'$ as $M' = E + L + \Delta$ in \sref{sec:psdmmr}. We then analyze $L$ and $\Delta$ in \sref{sec:normboundL} and \sref{sec:normboundDelta} respectively. We put all the pieces together to show the PSD'ness of $M'$ in \sref{sec:wrapup}. 

For the remainder of this section, we shall use the following additional notations: 
\begin{itemize}
\item For $0 \leq i \leq r$, let 
\begin{equation}\label{eq:defalpha}
\alpha(i) = \frac{\binom{k}{2r-i}}{\binom{2r}{2r-i}} \cdot \binom{n-2r+i}{i} \cdot 2^{-r^2 - \binom{i}{2}}.
\end{equation}
\ignore{\item For $0 \leq i \leq r$, let 
\begin{equation}
\beta'(i) = \beta(i)/2^{r^2 + \binom{i}{2}} = \frac{\binom{k}{2r-i}}{\binom{2r}{2r-i}} \cdot 2^{r^2 + \binom{i}{2}}.
\end{equation}}

\item For $0 \leq i \leq r$, let $p(i) = 2^{-(r-i)^2}$. Then, for $I,J \in \bnr$ with $|I \cap J| = i$, $p(i)$ is the probability that $\mathcal{E}(I \cup J) \setminus (\mathcal{E}(I) \cup \mathcal{E}(J)) \subseteq G$.
\item In the following we will adopt the convention that $I,J,K$ denote elements of $\binom{[n]}{r}$ and $T,T'$ denote elements of $\binom{[n]}{2r}$. 
\item All matrices considered below will be over $\reals^{\binom{[n]}{r} \times \binom{[n]}{r}}$ unless otherwise specified.
\item We write $A \approx_r B$ if there exist constants $c, C$ such that $c^{r^2} B \leq A \leq C^{r^2} B$.
\end{itemize}

\subsection{Decomposition of $M'$}\label{sec:psdmmr}
For the proof of \lref{lm:maintech} we decompose the matrix $M'$ as $M' = E + L + \Delta$, where (a) $E = \ex[M']$ is the expectation matrix; (b) $L$ will be a ``local'' random matrix such that for sets $I,J$, $L(I,J)$ only depends on the edges between the vertices of $I \cup J$ and (c) $\Delta$ is a ``global'' error matrix whose entries are small in magnitude.

To this end, first observe that by \eref{eq:actualexm}, for $E \equiv \ex[M']$, 
\begin{equation}
  \label{eq:Evalue}
  E(I,J) = \alpha(|I \cap J|).
\end{equation}

Now, define $L \in \rnr$ as follows: for $I,J \in \bnr$,
\begin{equation}
  \label{eq:defL}
L(I,J) =
\begin{cases}
 \alpha(|I \cap J|) \cdot \frac{1-p(|I \cap J|)}{p(|I \cap J|)} &\text{ if $\mathcal{E}(I \cup J) \setminus (\mathcal{E}(I) \cup \mathcal{E}(J)) \subseteq G$}\\
- \alpha(|I \cap J|) &\text{otherwise}
\end{cases}.
\end{equation}
Finally, define $\Delta = M' - E - L$. We have already shown in \sref{sec:psdexpectation} that $E$ is PSD with minimal eigenvalue $2^{-O(r^2)}{k^r}{n^r}$. There are now two remaining modular steps in the proof:
\begin{enumerate}
\item We show that $\|L\|$ is $2^{O(r^2)}k^{2r}n^{r-1/2}\log{n}$ by exploiting the recursive structure of the matrix $L$ and some careful trace calculations. This is the most technically intensive part of the proof. 
\item We then show that $\|\Delta\|$ is $2^{O(r^2)}k^{2r}n^{r-1/2}\log{n}$. This is done by first showing that each entry of $\Delta$ is small in magnitude and using \lref{lm:Gershgorin}.
\end{enumerate}

The next two subsections address these two steps with the corresponding technical elements dealt with in \sref{sec:localrand} and \sref{sec:degreevariancecalculations} respectively.

\subsection{Bounding the norm of the locally random matrix $L$}\label{sec:normboundL}
In this subsection, we bound the norm of the matrix $L$.

\begin{lemma}\label{lm:Lnorm}
For some constant $C > 0$, with probability at least $1 - 1/n$ over the random graph $G$, 
$$\|L\| \leq O(1) \cdot 2^{C r^2} \cdot k^{2r} \cdot n^r \cdot \frac{\log n}{\sqrt{n}}.$$
\end{lemma}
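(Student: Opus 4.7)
\medskip

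\noindent\textbf{Proof plan for Lemma~\ref{lm:Lnorm}.}

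The plan is to apply the trace method: choose an even integer $2q$ (eventually $q = \Theta(\log n)$), bound $\mathbb{E}[\Tr(L^{2q})]$, and then use $\|L\|^{2q}\le \Tr(L^{2q})$ together with Markov's inequality. To make the entries amenable to a moment calculation, I would first rewrite $L$ in a mean-zero Fourier form. For $I,J\in\binom{[n]}{r}$ set $R_{IJ} = \mathcal{E}(I\cup J)\setminus(\mathcal{E}(I)\cup\mathcal{E}(J))$ and introduce $\pm 1$ edge variables $X_e = 2\cdot\mathbf{1}[e\in G]-1$. A direct calculation shows
\[
L(I,J) \;=\; \alpha(|I\cap J|)\cdot\Bigl(\tfrac{1}{p(|I\cap J|)}\mathbf{1}[R_{IJ}\subseteq G]-1\Bigr) \;=\; \alpha(|I\cap J|)\sum_{\emptyset\neq S\subseteq R_{IJ}} \chi_S(X),
\]
where $\chi_S(X)=\prod_{e\in S}X_e$ are the independent mean-zero Walsh characters.

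Expanding $\Tr(L^{2q})$ over closed walks $I_1,\dots,I_{2q}$ in $\binom{[n]}{r}$ (with $I_{2q+1}=I_1$) and non-empty subsets $S_s\subseteq R_{I_sI_{s+1}}$ for each step, independence of edge variables gives
\[
\mathbb{E}[\Tr(L^{2q})] \;=\; \sum_{I_1,\dots,I_{2q}}\;\sum_{\{S_s\}}\;\prod_{s=1}^{2q}\alpha(|I_s\cap I_{s+1}|)\cdot \mathbf{1}\bigl[\text{every edge of }\tsum_s S_s\text{ appears an even number of times}\bigr].
\]
The edge-pairing constraint replaces the ``each edge traversed twice'' constraint in the standard Wigner argument; it is the single fact we gain from the random graph being $\pm 1$-valued after Fourier expansion.

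The main combinatorial task is to bound the above sum. I would group contributing tuples by their \emph{skeleton}: the set $U = \bigcup_s I_s \subseteq [n]$ of vertices appearing in the walk, together with the multigraph $H$ on $U$ whose edges are the distinct edges of $\bigsqcup_s S_s$ (each appearing with multiplicity $\ge 2$, by parity). For a fixed skeleton shape, the number of labelings is at most $n^{|U|}$, and one needs to count the number of walks $I_1\to\cdots\to I_{2q}\to I_1$ in $\binom{U}{r}$ together with the choices of $S_s\subseteq R_{I_sI_{s+1}}$ that are consistent with the edge-pairing. The $\alpha(|I_s\cap I_{s+1}|)$ factors decay as the intersections shrink, and small intersections force $|R_{I_sI_{s+1}}|$ to grow, so the ``more character terms'' expansion is counterbalanced by the smaller $\alpha$ weight and by the parity constraint forcing new edges to be revisited later.

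The hard part, and the step I expect to absorb almost all of the effort, is producing a sharp encoding of these walks that accounts simultaneously for (i) overlap of $R_{I_sI_{s+1}}$ across different steps (entries of $L$ are \emph{not} independent whenever consecutive or distant $I_sI_{s+1}$ pairs share an edge), (ii) the freedom to choose $S_s$ as a proper subset of $R_{I_sI_{s+1}}$, and (iii) the recursive decomposition $L = \sum_{i=0}^{r} L_i$ into the pieces where $|I\cap J|=i$. A natural approach is to process the walk step by step and charge each new vertex to a specific event (a vertex first appearing outside the paired edge-skeleton), showing that at most $q\cdot(r-\tfrac12)+r$ new vertices can appear because each "unpaired" edge extension must later be matched. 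Combined with the $\alpha(i)$ estimates from \eqref{eq:defalpha} and a $2^{O(r^2)}$ book-keeping factor for the choices of $S_s$ within each $R_{I_sI_{s+1}}$ (whose size is at most $r^2$), I expect
\[
\mathbb{E}[\Tr(L^{2q})] \;\le\; \binom{n}{r}\cdot \Bigl(C_0^{r^2}\cdot k^{2r}\cdot n^{\,r-1/2}\cdot (2q)^{O(r)}\Bigr)^{2q}
\]
for some absolute constant $C_0$. Taking the $2q$-th root, choosing $q=\lceil\log n\rceil$, and applying Markov's inequality absorbs the leading $\binom{n}{r}^{1/(2q)}=n^{O(1/\log n)}=O(1)$ factor and replaces $(2q)^{O(r)}$ by $(\log n)^{O(r)}$, which is $\le \log n$ after adjusting the constant $C$ in the exponent $2^{Cr^2}$. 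This delivers the stated tail bound with probability $\ge 1-1/n$.
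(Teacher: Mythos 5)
Your plan and the paper's proof diverge at the key structural step, and the crucial combinatorial lemma in your route is left unproven.

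You attack $L$ directly with the trace method after a Fourier (Walsh) expansion of each entry, and the entire weight of the proof then rests on the counting step you acknowledge as "the hard part": bounding $|\bigcup_s I_s|$ for tuples with nonzero expectation, accounting for the freedom of choosing $S_s\subseteq R_{I_sI_{s+1}}$ and the edge-pairing constraint across steps with varying intersection sizes. This step is stated as "I expect" and "a natural approach is"; it is not carried out, and it is genuinely delicate. Your heuristic vertex budget of $q(r-\tfrac12)+r$ does not match what a closed walk of length $2q$ would produce (the relevant bound in the disjoint case is of order $2rq-q$, and the intersecting case must be shown to be no worse), and your projected trace estimate carries a $(2q)^{O(r)}$ factor, which after optimizing $q=\Theta(\log n)$ yields $(\log n)^{O(r)}$ — this does \emph{not} reduce to the single $\log n$ in the lemma statement by absorbing into $2^{Cr^2}$, since $(\log n)^{O(r)-1}$ eventually dominates any fixed $2^{Cr^2}$ as $n\to\infty$. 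So either your counting must be sharpened to $(2q)^{O(1)}$, or your stated conclusion does not follow from your stated intermediate bound.

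The paper avoids precisely these difficulties by refusing to run a single unified trace calculation on $L$. It first splits $L=\sum_{i=0}^r L^i$ by intersection size, identifies $L^i = \alpha_i\cdot R_{r-i}^{(i)}$ as a lift of the simpler disjoint-indexed matrix $R_{r-i}$, and then proves a purely deterministic \emph{lifting lemma} (\lref{lm:recursiveL}): $\|R^{(i)}\|\le\binom{r_1}{i}\binom{r_2}{i}\|R\|$, by permuting $R^{(i)}$ into block form. This reduces everything to bounding $\|R_a\|$ where the index sets are disjoint, which is further reduced to the "bipartite" submatrices $R_{a,A,B}$. Only then is a trace argument run, and the constraint-graph machinery (independence-breaking cycles, \lref{lem:connectedcomponentbound}) delivers a clean vertex budget of $2aq-q+1$ and the $(2aq)^1$ factor that yields the desired single $\log n$. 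If you want to pursue your direct route, you would need to prove an analogue of \lref{lem:connectedcomponentbound} that works uniformly over intersecting sets and variable Fourier supports $S_s$, and verify that the per-step multiplicity does not inflate the $q$-dependence beyond $(2q)^{O(1)}$; neither is obvious, and this is exactly the complication the lifting lemma was designed to sidestep.
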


We will prove the lemma by further decomposing $L$ according to the intersection sizes of the indexing sets and using the recursive structure of the matrix $M'$. To this end, we define the following closely related locally-random matrix. For $a \in [r]$, let $R_a \in \reals^{\binom{n}{a} \times \binom{n}{a}}$ be the matrix supported only on disjoint sets and defined as follows: for $V,W \in \binom{[n]}{a}$, 
\begin{equation}
  \label{eq:defR}
  R_a(V,W) =
  \begin{cases}
    2^{a^2} - 1 &\text{ if $V \cap W = \emptyset$ and $\{\{v,w\}: v \in V, w \in W\} \subseteq G$}\\
    -1 &\text{ if $V \cap W = \emptyset$ and $\{\{v,w\}: v \in V, w \in W\} \not\subseteq G$}\\
0 &\text{ if $V \cap W \neq \emptyset$} 
  \end{cases}.
\end{equation}
In other words, for disjoint $V,W \in \binom{[n]}{a}$ the $R_a(V,W)$'th entry is essentially (up to a constant multiple) a shift of the indicator random variable which is $1$ if all edges in $V \times W$ are in $G$ and $0$ otherwise. 

Note that $\ex[R_a] = 0$. The following technical claim proved in Section \ref{sec:localrand} bounds the norm of $R_a$. The proof relies on computing the trace of powers of $R_a$.
\begin{claim}[See Section \ref{sec:localrand}]\label{clm:maintechnicalL}
If $n \geq 100$, for all $\epsilon \in (0,1)$, 
$\pr\left[||R_a|| > 2^{a^2+2a+2}\ln{(\frac{n}{\epsilon})}n^{a - \frac{1}{2}}\right] < \epsilon$.
\ignore{
For some constant $C > 0$, for all $\epsilon \in (0,1)$,
$$\pr\left[ \|R_a\| > C 2^{a^2} \cdot \log(n/\epsilon) \cdot n^{a-1/2}\right] < \epsilon.$$}
\end{claim}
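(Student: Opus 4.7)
The plan is to apply the trace method. Writing $y_e = 2\mathds{1}[e \in G] - 1 \in \{-1,+1\}$, the variables $(y_e)$ are independent uniform Rademachers, one per potential edge of $K_n$. For disjoint $V,W \in \binom{[n]}{a}$ we have the Fourier expansion
\[
R_a(V,W) \;=\; \prod_{v \in V,\, w \in W}(1 + y_{vw}) - 1 \;=\; \sum_{\emptyset \neq S \subseteq V \times W} \chi_S, \qquad \chi_S := \prod_{e \in S} y_e,
\]
which expresses $R_a(V,W)$ as a sum of orthogonal $\pm 1$-monomials in the edges of $V \times W$, each with mean zero.

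Since $R_a$ is symmetric, $\|R_a\|^{2k} \leq \mathrm{tr}(R_a^{2k})$ for every integer $k \geq 1$, and Markov's inequality gives $\pr[\|R_a\| > t] \leq \ex[\mathrm{tr}(R_a^{2k})]/t^{2k}$. Expanding and taking expectations,
\[
\ex[\mathrm{tr}(R_a^{2k})] \;=\; \sum_{(V_i),(S_i)} \ex\!\Big[\prod_{i=1}^{2k}\chi_{S_i}\Big],
\]
where the sum runs over tuples $V_1,\ldots,V_{2k} \in \binom{[n]}{a}$ with $V_i \cap V_{i+1} = \emptyset$ (indices mod $2k$) and nonempty $S_i \subseteq V_i \times V_{i+1}$. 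By independence of the $y_e$, the expectation of $\prod_i \chi_{S_i}$ is $1$ if every edge appears in an even number of the $S_i$ and $0$ otherwise. Hence $\ex[\mathrm{tr}(R_a^{2k})]$ equals the number of such parity-balanced ``valid configurations.''

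The central task is to bound this count. I would partition configurations by the vertex set $U = \bigcup_i V_i$ and the underlying multigraph $H$ on $U$ whose edge multiset is $\bigsqcup_i S_i$; the parity condition says every edge of $H$ has even multiplicity. Adapting the standard Wigner trace-method heuristic, the dominant contribution comes from ``doubled-tree'' configurations in which each edge of $H$ appears exactly twice. A careful step-by-step book-keeping---tracking, at each step $i$, how many \emph{new} vertices appear in $V_{i+1}$ and how many edges of $S_i$ are \emph{new} versus \emph{return} (matching an earlier appearance), while using $|S_i| \leq a^2$ and closed-walk constraints---should yield a bound of the form
\[
\ex[\mathrm{tr}(R_a^{2k})] \;\leq\; \bigl(C \cdot 2^{a^2} \cdot n^{a - 1/2}\bigr)^{2k} \cdot \mathrm{poly}(n,k)
\]
for an absolute constant $C$. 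Choosing $k$ of order $\ln(n/\epsilon)$ and $t = 2^{a^2 + 2a + 2}\ln(n/\epsilon)\,n^{a - 1/2}$ then makes the Markov bound at most $\epsilon$.

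The main obstacle is the combinatorial counting. Unlike the classical Wigner walk, where each step traverses a single edge, here each step may use an arbitrary nonempty subset of up to $a^2$ edges, so the ``graph of walks'' has a much richer structure and the $S_i$'s must be controlled jointly with the $V_i$'s. The key saving is that each edge introduced as \emph{new} at some step must be matched by a later \emph{return} appearance for the parity condition; this constraint forces roughly half of the edge-choices across the walk rather than leaving them free, which is exactly what produces the factor $n^{-1/2}$ per step compared to the naive $n^{a}$ bound for choosing a fresh $V_{i+1}$.
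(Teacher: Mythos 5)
Your overall framework—symmetrize with the trace method, expand $R_a(V,W)$ as a multilinear (Fourier) polynomial in the edge indicators, and observe that a term survives in $\ex[\mathrm{tr}(R_a^{2k})]$ iff every edge appears an even number of times across the $S_i$—is a legitimate and in some ways cleaner framing than the paper's. The paper keeps each $R_a(W_i,W_{i+1})$ as a single mean-zero random variable and argues via a constraint graph on the sets $W_i$ (with an ``independence-breaking cycle'' notion), whereas you push the expansion all the way down to individual edge monomials $\chi_{S_i}$, making the zero-mean/parity structure explicit. Both reduce to the same essential combinatorics. The paper also uses a preliminary bipartization trick ($R_{a,A,B}$, \cref{cor:parttowhole}) that costs a $2^{2a}$ factor but forces the walk $W_1,W_2,\ldots$ to alternate sides, which simplifies the cycle bookkeeping; you skip this, which is fine in principle but means you have to handle the non-bipartite structure yourself.

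The genuine gap is that the central combinatorial bound is asserted, not proved. You write that ``careful step-by-step book-keeping \ldots should yield'' $\ex[\mathrm{tr}(R_a^{2k})] \leq (C 2^{a^2} n^{a-1/2})^{2k}\cdot\mathrm{poly}(n,k)$, and you correctly identify that the saving must come from parity forcing roughly half the edge choices. But that is precisely where all the difficulty lives, and it does not follow from Wigner-style heuristics because here each step places an arbitrary nonempty subset $S_i \subseteq V_i\times V_{i+1}$ of up to $a^2$ edges, not a single edge. In particular, the parity condition alone does not obviously bound the number of distinct \emph{vertices}: you need a structural argument translating ``every edge appears with even multiplicity'' into ``the union $\bigcup_i V_i$ has at most $2ak - k + 1$ vertices,'' which is exactly what the paper's \lref{lem:connectedcomponentbound} proves (via an induction that deletes an isolated $W_i$ and contracts a constraint edge), followed by \lref{lm:componentstounion} to convert a connected-component count into a bound on the size of the union, and finally \tref{thm:productgraphtonorm} to convert that into a count of tuples and pick the optimal $k$. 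Without an explicit argument for the vertex-count bound (or an equivalent), the proof is incomplete at its most technical and load-bearing step.
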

Note that $2^{a^2} n^a$ is an easy bound for $\|R_a\|$ (each entry of the matrix is at most $2^{a^2}$ in magnitude); the main advantage of the claim is the multiplicative $n^{-1/2}$ factor.

In the remainder of this section we use the recursive structure of the matrix $L$ to prove \clref{lm:Lnorm} assuming the above claim. We first introduce some notation:
\begin{itemize}
\item For a matrix $X \in \reals^{\binom{[n]}{r_1} \times \binom{[n]}{r_2}}$, and $0 \leq i \leq \min{\{r_1,r_2\}}$, let $X^i \in \reals^{\binom{[n]}{r_1} \times \binom{[n]}{r_2}}$ be the matrix such that $X^i(I,J) = X(I,J)$ if $|I \cap J| = i$ and $0$ otherwise\footnote{For this paper, we will only use the case where $r_1 = r_2 = r$. We put in this extra generality with an eye towards future work.}.
\item For a matrix $X \in \reals^{\binom{[n]}{r_1-i} \times \binom{[n]}{r_2-i}}$, let $X^{(i)} \in \reals^{\binom{[n]}{r_1} \times \binom{[n]}{r_2}}$, be defined as follows:
\begin{equation}
  \label{eq:defextension}
  X^{(i)}(I,J) =
  \begin{cases}
X(I\setminus (I \cap J), J \setminus (I \cap J)) &\text{ if $|I \cap J| = i$}\\
0 &\text{ otherwise}
  \end{cases}.
\end{equation}
\end{itemize}
\ignore{
The following claim immediately implies \lref{lm:Lnorm} as $L = \sum_{i=0}^r L^i$.
\begin{claim}\label{clm:Lnorm1}
For some constant $C > 0$, with probability at least $1-1/n^2$ over the random graph $G$, for every $0 \leq i \leq r$, 
$$\|L^i\| \leq O(1) \cdot 2^{C r^2} \cdot k^{2r-i} \cdot n^{r} \cdot \frac{\log n}{\sqrt{n}}.$$
\end{claim}}

The next claim relates the norms of ``lifts'' of matrix $R$, $R^{(i)}$. Conceptually, bounding the norms of matrices with non-zero entries on intersecting indexing sets are reduced to that of the disjoint case. Note that the requirement $R=R^0$ exactly captures the latter. 
\begin{lemma}\label{lm:recursiveL}
For $0 \leq i \leq \min{\{r_1,r_2\}}$ and $R \in \reals^{\binom{[n]}{r_1-i} \times \binom{[n]}{r_2-i}}$, if 
$R = R^0$ then $\|R^{(i)}\| \leq \binom{r_1}{i}\binom{r_2}{i} \cdot \|R\|$.
\end{lemma}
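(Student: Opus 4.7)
The plan is to bound $\|R^{(i)}\|$ by analyzing the bilinear form $u^T R^{(i)} v$ against arbitrary unit vectors $u \in \reals^{\binom{[n]}{r_1}}$ and $v \in \reals^{\binom{[n]}{r_2}}$, grouping terms according to the pivot set $S = I\cap J$ and reducing back to the operator norm of $R$ itself.

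After expanding
\[
u^T R^{(i)} v \;=\; \sum_{I,J:\,|I\cap J|=i} u(I)\,R(I\setminus(I\cap J),\; J\setminus(I\cap J))\,v(J),
\]
I will exploit the hypothesis $R = R^0$: since $R(A,B) = 0$ whenever $A\cap B\neq\emptyset$, relaxing the constraint $S = I\cap J$ to the weaker $S\subseteq I,\;S\subseteq J$ adds no new nonzero contributions, because any extra pair would force $(I\setminus S)\cap(J\setminus S)\neq\emptyset$ and hence annihilate the $R$-entry. Substituting $I = S\sqcup A$, $J = S\sqcup B$ with $|A| = r_1-i$ and $|B| = r_2-i$, the bilinear form rewrites cleanly as
\[
u^T R^{(i)} v \;=\; \sum_{S\in\binom{[n]}{i}}\;\sum_{A,B:\,S\cap A = S\cap B = \emptyset} u(S\sqcup A)\,R(A,B)\,v(S\sqcup B).
\]

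For each pivot $S$, I will define ``sliced'' vectors $u_S(A) := u(S\sqcup A)\cdot\mathbf{1}[S\cap A = \emptyset]$ and $v_S(B) := v(S\sqcup B)\cdot\mathbf{1}[S\cap B = \emptyset]$, so the inner sum is exactly $u_S^T R\, v_S$, bounded by $\|R\|\,\|u_S\|\,\|v_S\|$. A straightforward double-counting identity gives $\sum_S \|u_S\|^2 = \binom{r_1}{i}\|u\|^2$ (each $I$ admits exactly $\binom{r_1}{i}$ decompositions $I = S\sqcup A$), and similarly $\sum_S \|v_S\|^2 = \binom{r_2}{i}\|v\|^2$. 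Cauchy--Schwarz then yields
\[
|u^T R^{(i)} v| \;\leq\; \|R\|\,\sqrt{\textstyle\binom{r_1}{i}\binom{r_2}{i}},
\]
which is actually stronger than the stated bound; the claimed $\binom{r_1}{i}\binom{r_2}{i}\|R\|$ follows a fortiori (or alternatively from AM--GM $\|u_S\|\|v_S\|\leq\tfrac12(\|u_S\|^2+\|v_S\|^2)$).

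The only real subtlety, and the step I expect to be the main source of confusion, is the passage from $S = I\cap J$ to the unconstrained sum over $S\subseteq I,\,S\subseteq J$: it is essential to use the $R = R^0$ hypothesis here, since without the disjoint-support assumption this relaxation would overcount. Once that observation is in place, the rest is routine combinatorial bookkeeping together with an off-the-shelf Cauchy--Schwarz.
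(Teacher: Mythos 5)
Your proof is correct and takes a genuinely different route from the paper's. The paper decomposes $R^{(i)}$ as $\sum_{X,Y,K} R^{(i)}_{X,Y,K}$, where $X\subseteq[r_1]$ and $Y\subseteq[r_2]$ (of size $i$) record the \emph{positions} of the intersection within the increasingly-ordered lists of $I$ and $J$, and $K$ is the actual intersection set; for fixed $(X,Y)$ the blocks indexed by distinct $K$ share no rows or columns, so $\|\sum_K R^{(i)}_{X,Y,K}\|\leq\|R\|$, and the triangle inequality over the $\binom{r_1}{i}\binom{r_2}{i}$ pairs $(X,Y)$ gives the stated bound. You instead attack the bilinear form $u^T R^{(i)} v$ directly, slicing $u,v$ into $u_S,v_S$ indexed by the pivot $S$, bounding each slice by $\|R\|\,\|u_S\|\,\|v_S\|$, and applying Cauchy--Schwarz across slices together with the counting identity $\sum_S\|u_S\|^2=\binom{r_1}{i}\|u\|^2$. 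Your argument avoids the bookkeeping with ordered-position indices and, as you note, produces the strictly stronger bound $\sqrt{\binom{r_1}{i}\binom{r_2}{i}}\cdot\|R\|$, which the paper's triangle-inequality route does not recover (the paper only needs the weaker statement, so this is a free improvement). Both proofs invoke $R=R^0$ at the same crucial juncture: it is what prevents overcounting when one sums freely over pivots $S\subseteq I\cap J$ (in your version) or over position/set triples $(X,Y,K)$ (in theirs).
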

\begin{proof}
We partition the entries of $R^{(i)}$ as follows. 
\begin{definition}
For any $X,Y,K$ such that $X \subseteq [1,r_1]$, $Y \subseteq [1,r_2]$, and $K \subseteq V(G)$ where $|K| = |X| = |Y| = i$, 
let $R_{X,Y,K}^{(i)}$ be the matrix such that the following is true: 
\begin{enumerate}
\item $R_{X,Y,K}^{(i)}(I,J) = R^{(i)}(I,J) = R(I \setminus K, J \setminus K)$ if 
$K = \{i_x: x \in X\} = \{j_y: y \in Y\}$ where $i_1, \cdots, i_{r_1}$ are the elements of $I$ in increasing order and $j_1, \cdots, j_{r_2}$ are the elements of $J$ in increasing order.
\item $R_{X,Y,K}^{(i)}(I,J) = 0$ otherwise.
\end{enumerate}
\end{definition}
\begin{proposition}\label{decompositionpropone}
For all $X,Y,K$, $||R_{X,Y,K}^{(i)}|| \leq ||R||$.
\end{proposition}
\begin{proof}
The nonzero part of $R_{X,Y,K}^{(i)}$ can be viewed as a submatrix of $R$, so it cannot have 
larger induced norm than $R$.
\end{proof}
\begin{proposition}\label{decompositionproptwo}
$R^{(i)} = \sum_{X,Y,K}{R_{X,Y,K}^{(i)}}$.
\end{proposition}
\begin{proof}
If $R^{(i)}(I,J) = 0$ then $\sum_{X,Y,K}{R_{X,Y,K}^{(i)}(I,J)} = 0$. If $R^{(i)}(I,J) \neq 0$ then 
$|I \cap J| = i$. This implies that $K = \{i_x: x \in X\} = \{j_y: y \in Y\}$ if and only if 
$K = I \cap J$, $X$ is the set of indices of $K$ in $I$, and $Y$ is the set of indices of $K$ in $J$, which 
happens for precisely one $X,Y,K$. Thus, $R_{X,Y,K}^{(i)}(I,J) = R^{(i)}(I,J)$ for precisely one $I,J,X$ 
and is $0$ otherwise, so $R^{(i)} = \sum_{X,Y,K}{R_{X,Y,K}^{(i)}}$, as needed.
\end{proof}
\begin{proposition}\label{decompositioncorone}
$||R^{(i)}|| \leq \sum_{X,Y}{||\sum_{K}{R_{X,Y,K}^{(i)}}||}$.
\end{proposition}
\begin{proposition}\label{decompositionpropthree}
If $K_1$,$K_2$ are distinct subsets of $V(G)$ of size $x$, $R_{X,Y,K_1}^{(i)}(I_1,J_1) \neq 0$, and 
$R_{X,Y,K_2}^{(i)}(I_2,J_2) \neq 0$ then $I_1 \neq I_2$ and $J_1 \neq J_2$.
\end{proposition}
\begin{proof}
Assume that $I_1 = I_2 = I$ and let $i_1, \cdots, i_{r_1}$ be the elements of $I$ in increasing order. Then 
$K_1 = \{i_x: x \in X\} = K_2$. Contradiction. Following similar logic, we cannot have that $J_1 = J_2$ either. 
\end{proof}
\begin{proposition}\label{decompositioncortwo}
For any $X,Y \subseteq [1,n]$, $||\sum_{K}{R_{X,Y,K}^{(i)}}|| \leq ||R||$.
\end{proposition}
\begin{proof}
Note that we can permute the rows and columns of a matrix without affecting its induced norm. By Proposition 
\ref{decompositionpropthree}, we can permute the rows and columns of $\sum_{K}{R_{X,Y,K}^{(i)}}$ to put it into block 
form where each block is the nonzero part of $R_{X,Y,K}^{(i)}$ for some $K$. For a matrix in block form, its norm is 
the maximum of the norms of the individual blocks, which by Proposition \ref{decompositionpropone} is at most $||R||$, as needed.
\end{proof}
With these results, Lemma \ref{lm:recursiveL} follows immediately. Plugging in Proposition \ref{decompositioncortwo} to 
Proposition \ref{decompositioncorone} gives 
$||R^{(i)}|| \leq \sum_{X,Y}{||\sum_{K}{R_{X,Y,K}^{(i)}}||} \leq \sum_{X,Y}{||R||} \leq 
{{r_1} \choose i}{{r_2} \choose i}||R||$, as needed.
\end{proof}

We now use the above statements to prove \lref{lm:Lnorm}.
\begin{proof}[Proof of \lref{lm:Lnorm}]
We claim that for $0 \leq i \leq r$, and $\alpha_i$ as in \eref{eq:defalpha}
\begin{equation}
  \label{eq:Lnorm11}
L^i = \alpha_i \cdot R_{r-i}^{(i)}.  
\end{equation}

To see the above, fix $I, J \in \bnr$ with $|I \cap J| = i$ and let $V = I \setminus (I \cap J)$, $W = J \setminus (I \cap J)$. Observe that 
$$\mathcal{E}(I \cup J) \setminus (\mathcal{E}(I) \cup \mathcal{E}(J)) = \{\{v,w\}: v \in V, w \in W\}.$$

We cosider two cases as in the definition of $L$. 

{\bf Case 1}. $\mathcal{E}(I \cup J) \setminus (\mathcal{E}(I) \cup \mathcal{E}(J)) \subseteq G$. Then, $R_{r-i}^{(i)}(I,J) = R_{r-i}(V,W) = 2^{(r-i)^2 - 1} = (1-p(i))/p(i)$. \eref{eq:Lnorm11} now follows from the first case of the definition of $L$.

{\bf Case 2}. $\mathcal{E}(I \cup J) \setminus (\mathcal{E}(I) \cup \mathcal{E}(J)) \not\subseteq G$. Then, $R_{r-i}^{(i)}(I,J) = R_{r-i}(V, W) = -1$. \eref{eq:Lnorm11} now follows from the second case of the definition of $L$. 

Therefore, by Claim \ref{clm:maintechnicalL}, Lemma \ref{lm:recursiveL} and \eref{eq:Lnorm11},
$$\|L^i\| \leq O(1) \cdot 2^{C r^2} \cdot k^{2r-i} \cdot n^{r} \cdot \frac{\log n}{\sqrt{n}}.$$
The lemma now follows as $L = \sum_{i=0}^r L^i$. 
\end{proof}

\subsection{Bounding the norm of the global error matrix $\Delta$}\label{sec:normboundDelta}
The main claim of this subsection is the following bound on the spectral norm of $\Delta$.
\begin{lemma}\label{lm:Deltanorm}
For $n > C 2^{4r^2}$, with probability at least $1 - 1/n$ over the random graph $G$, 
$$\|\Delta\| \leq 2^{Cr^2} \cdot k^{2r} \cdot n^r \cdot \frac{\log n}{\sqrt{n}}.$$
\end{lemma}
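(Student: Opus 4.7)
The strategy is to apply Gershgorin's inequality (Lemma~\ref{lm:Gershgorin}): it suffices to produce a pointwise bound on $|\Delta(I,J)|$ that holds for all pairs $(I,J)$ simultaneously with high probability, and then check that $\max_I \sum_J |\Delta(I,J)|$ matches the claimed bound. First I would rewrite $\Delta$ in a form suitable for concentration. Let $A(I,J)$ be the indicator that every ``bridge'' edge in $\mathcal{E}(I \cup J) \setminus (\mathcal{E}(I) \cup \mathcal{E}(J))$ lies in $G$. Then, from \eqref{eq:mr} and \eqref{eq:defL}, both $M'(I,J)$ and $(E+L)(I,J)$ vanish when $A(I,J)=0$, and when $A(I,J) = 1$,
\[
  M'(I,J) \;=\; \beta(|I \cap J|) \cdot N(I,J), \qquad (E+L)(I,J) \;=\; \beta(|I \cap J|) \cdot \ex[N(I,J)],
\]
where $N(I,J) \defeq |\{T \supseteq I \cup J \,:\, |T| = 2r,\ \mathcal{E}(T) \setminus \mathcal{E}(I \cup J) \subseteq G\}|$. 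Hence $\Delta(I,J) = \beta(|I\cap J|) \cdot A(I,J) \cdot (N(I,J) - \ex[N(I,J)])$; note in particular that when $|I\cap J|=0$ the set $T = I \cup J$ is forced and $N(I,J) \equiv 1$, so the diagonal block indexed by disjoint pairs is identically zero.

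The heart of the argument is to control $|N(I,J) - \ex[N(I,J)]|$ via McDiarmid's inequality (Theorem~\ref{th:mcdiarmid}). Fix $i = |I \cap J| \geq 1$ and let $V' = [n] \setminus (I \cup J)$. Then $N(I,J)$ depends only on edges of $G$ incident to $V'$, and a single edge flip changes $N(I,J)$ by at most $\binom{n-2r+i-1}{i-1}$ if the edge crosses between $V'$ and $I \cup J$, and by at most $\binom{n-2r+i-2}{i-2}$ if it lies within $V'$. The $O(n)$ crossing edges dominate and give $\sum_e c_e^2 = O_r(n^{2i-1})$, so McDiarmid yields
\[
  \pr\!\left[\,|N(I,J) - \ex[N(I,J)]| \,>\, C\sqrt{r}\, n^{i-1/2} \sqrt{\log n}\,\right] \;\leq\; n^{-3r}.
\]
A union bound over the at most $n^{2r}$ pairs $(I,J)$ makes this deviation bound hold for all pairs simultaneously with probability at least $1 - 1/n$.

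On that event I would apply Gershgorin. Splitting the row sum by $i = |I \cap J|$ using $|\{J : |I \cap J| = i\}| = \binom{r}{i}\binom{n-r}{r-i} \leq 2^r n^{r-i}$ and $\beta(i) \leq k^{2r-i}$:
\begin{align*}
\sum_J |\Delta(I,J)|
&\leq \sum_{i=1}^r \binom{r}{i}\binom{n-r}{r-i} \cdot \beta(i) \cdot C\sqrt{r}\, n^{i-1/2}\sqrt{\log n} \\
&\leq 2^{O(r)} \cdot n^{r-1/2} \log n \cdot \sum_{i=1}^r k^{2r-i} \\
&\leq 2^{O(r^2)} \cdot k^{2r} \cdot n^{r-1/2} \log n,
\end{align*}
which yields the claimed bound $\|\Delta\| \leq 2^{O(r^2)} k^{2r} n^{r-1/2} \log n$.

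The main technical obstacle will be extracting the $n^{i-1/2}$ scaling from McDiarmid: the naive bound $\sum_e c_e^2 = O_r(n^{2i})$ would only give deviations of order $n^i$, comparable in size to $\ex[N(I,J)]$ itself and too weak to beat the Gershgorin sum. The crucial $\sqrt{n}$ saving rests on the asymmetry that the $O(n^2)$ within-$V'$ edges carry Lipschitz constants of order $n^{i-2}$, a factor of $n$ smaller than those of the $O(n)$ crossing edges, so that the crossing edges dominate the sum of squares. Once that pointwise deviation bound is in hand, the Gershgorin accounting is routine and the sub-exponential-in-$r$ factors are all absorbed by the $2^{O(r^2)}$ slack.
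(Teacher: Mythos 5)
Your argument is correct, and it takes a genuinely more elementary route than the paper for the concentration step. You rewrite $\Delta(I,J) = \beta(|I\cap J|)\,A(I,J)\,(N(I,J)-\ex[N(I,J)])$ and apply McDiarmid once, directly to $N(I,J)$, viewed as a function of the independent crossing-edge and within-$V'$ edge indicators. I checked the identity $(E+L)(I,J)=\beta(|I\cap J|)\,\ex[N(I,J)]$ when $A(I,J)=1$ (it reduces to $-r^2-\binom{i}{2}+(r-i)^2 = -\binom{2r}{2}+\binom{2r-i}{2}$), the Lipschitz constants, and the Gershgorin accounting; all are sound. Your key observations — that edges internal to $I\cup J$ never enter $N$, that $N$ is independent of $A$ so no conditioning issues arise, that the $i=0$ diagonal vanishes identically, and that the crucial $n^{-1/2}$ saving comes from the $O(n)$ crossing edges with Lipschitz $\sim n^{i-1}$ dominating the $O(n^2)$ within-$V'$ edges with Lipschitz $\sim n^{i-2}$ — are exactly right.

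The paper instead proves the pointwise bound via \clref{clm:degreebound} (Theorem~\ref{th:degreeconcentrationfinal}), which itself is a two-step argument: Theorem~\ref{th:cliquelargedeviation} (proved by the trace method developed in Section~\ref{sec:localrand}) controls the number of $(2r-i)$-cliques among the common-neighborhood set $A_{I\cup J}$ conditioned on $|A_{I\cup J}|$, and then McDiarmid controls $\binom{|A_{I\cup J}|}{2r-i}$. In the paper's split, the first step gives a $n^{-1}$ relative saving and the second a $n^{-1/2}$ relative saving, so the second dominates and the final deviation matches yours. Thus the two approaches exploit the same Lipschitz asymmetry, but the paper routes it through an intermediate conditional decomposition while you apply McDiarmid in one shot to the unconditional object $N(I,J)$. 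Your version avoids any use of the trace method for this lemma (that machinery is still essential for bounding $\|L\|$) and avoids the conditioning bookkeeping, at the cost of not separately exposing the clique-count concentration statement (Theorem~\ref{th:cliquelargedeviation}), which the paper presents as a result of independent interest. Both yield the claimed $2^{O(r^2)}k^{2r}n^{r-1/2}\log n$ bound.
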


The proof relies on the following bound on the individual entries of $\Delta$. 
\begin{lemma}\label{lm:Deltaentrywise}
For some universal constant $C$, and $n > C 2^{4r^2}$, with probability at least $1 - 1/n$ over the random graph $G$, for all $I,J \in \bnr$, with $i = |I \cap J|$,
$$|\Delta(I,J)| \leq 2^{C r^2} \cdot k^{2r-i} \cdot n^i \cdot \frac{\log n}{\sqrt{n}}.$$
\end{lemma}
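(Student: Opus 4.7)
My plan is to carry out a per-entry case analysis simplifying $\Delta(I,J)$, then prove a concentration bound for the remaining random quantity, and finally take a union bound over pairs.

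First, fix $I,J \in \bnr$ and set $i = |I \cap J|$. Call the ``cross edges'' the set $\mathcal{E}(I\cup J)\setminus(\mathcal{E}(I) \cup \mathcal{E}(J))$, i.e.\ the edges between $I \setminus J$ and $J \setminus I$. If some cross edge is missing from $G$, then every $T \supseteq I \cup J$ has a required edge missing, so $M'(I,J) = 0$; meanwhile $E(I,J) = \alpha(i)$ and $L(I,J) = -\alpha(i)$ by definition, hence $\Delta(I,J) = 0$. Otherwise, a direct calculation using $p(i) = 2^{-(r-i)^2}$ shows $E(I,J) + L(I,J) = \alpha(i)/p(i)$, and since $\ex[M'(I,J)] = \alpha(i) = p(i)\cdot \ex[M'(I,J)\mid \text{cross edges present}]$, this sum is exactly the conditional expectation. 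Writing $M'(I,J) = \beta(i)\cdot N(I,J)$ where $N(I,J)$ is the number of $i$-subsets $T' \subseteq [n]\setminus(I\cup J)$ that form a clique in $G$ all of whose vertices are adjacent to every vertex of $I \cup J$, this reduces the lemma to showing that with probability at least $1 - n^{-(2r+1)}$,
\[
  \bigl| N(I,J) - \mu_i \bigr| \;\leq\; 2^{O(r^2)} \cdot n^{i-1/2} \log n, \qquad \mu_i \;=\; \binom{n-2r+i}{i}\,2^{-\binom{i}{2} - i(2r-i)},
\]
after which a union bound over the $\binom{n}{r}^2 \leq n^{2r}$ pairs $(I,J)$, combined with $\beta(i) \leq k^{2r-i}$, gives the claim.

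For the concentration of $N(I,J)$, my approach is a two-step conditioning. Let $U = \{v \in [n]\setminus(I\cup J) : v \text{ is adjacent to every vertex of } I\cup J\}$. Then $|U|$ is binomial with parameters $(n-2r+i,\ 2^{-(2r-i)})$, and by Chernoff $\bigl||U|-\ex|U|\bigr| \leq C\sqrt{\ex|U|\cdot r \log n}$ with probability at least $1 - n^{-\Omega(r)}$. Conditioned on $U$, the remaining edges inside $U$ are i.i.d.\ $\text{Bernoulli}(1/2)$, and $N(I,J)$ equals exactly the number of $i$-cliques in the induced random graph $G[U]$. A standard covariance computation over pairs $(T',T'')$ with $|T'\cap T''| = j$ yields $\Var[N(I,J)\mid U] \leq 2^{O(r^2)}\cdot |U|^{2i-1}$, the dominant term being the $j=1$ contribution. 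A $q$-th moment calculation with $q = \Theta(r \log n)$ (equivalently, the Kim--Vu polynomial concentration inequality applied to the degree-$\binom{i}{2}$ polynomial defining $N(I,J)$ in the edges of $G[U]$) then gives subgaussian-type tails at scale $|U|^{i-1/2}\sqrt{\log n}$; using $|U| = O(n\cdot 2^{-(2r-i)})$ and absorbing factors of $2^{O(r^2)}$ gives the stated per-pair bound.

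The main obstacle is obtaining a tail bound strong enough to survive the union bound over $n^{2r}$ pairs. A direct application of McDiarmid (\tref{th:mcdiarmid}) is too weak by a factor of $\sqrt n$: each single edge-flip can alter $N(I,J)$ by up to $n^{i-1}$ and there are $\Theta(n^2)$ edges, so McDiarmid only yields $|N-\mu_i| \leq O(n^i \sqrt{r\log n})$. The true standard deviation is $n^{i-1/2}\cdot 2^{-\Omega(r^2)}$ because $N(I,J)$ is a sum of $\approx n^i$ rare and only mildly correlated indicators, and closing this gap requires either a dedicated polynomial concentration machinery or, as proposed here, the conditioning on $U$ that cleanly separates the ``degree'' randomness (handled by Chernoff) from the residual subgraph-count randomness (handled by higher moments or Kim--Vu). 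The resulting entrywise bound is then fed into Gershgorin (\lref{lm:Gershgorin}) in \lref{lm:Deltanorm}, where summing $|\Delta(I,J)|$ over $J$ contributes $n^{r-i}$ for each intersection size $i$, recovering the claimed $2^{O(r^2)} k^{2r} n^{r-1/2}\log n$ spectral bound.
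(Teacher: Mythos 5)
Your proof is correct and follows essentially the same route as the paper: condition on the cross edges of $I \cup J$ being present (noting $\Delta(I,J)=0$ otherwise), identify $E(I,J)+L(I,J)=\alpha(i)/p(i)$ as the conditional expectation of $M'(I,J)$, and reduce to concentration of the clique count $N(I,J)$, which both you and the paper (\clref{clm:degreebound}, proved as \tref{th:degreeconcentrationfinal}) split into a neighborhood-size fluctuation (your Chernoff on $|U|$; the paper's McDiarmid on $\binom{|A_I|}{2r-i}$) plus a residual clique-count concentration (your Kim--Vu/higher-moment bound; the paper's self-contained trace-method \tref{th:cliquelargedeviation}, which it explicitly notes is interchangeable with the literature you cite). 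One small slip worth flagging: once you condition on $U$, pairs $(T',T'')$ with $|T'\cap T''|=1$ share no edges of $G[U]$ and contribute nothing to the covariance, so $\Var[N(I,J)\mid U]$ is actually $O(2^{O(r^2)}|U|^{2i-2})$ with $j=2$ dominant --- your $|U|^{2i-1}$ with ``$j=1$ dominant'' is the \emph{unconditional} variance (where the shared adjacency to $I\cup J$ makes $j=1$ matter) --- but as this is still a valid upper bound and the $|U|$-fluctuation step is what sets the $n^{i-1/2}$ scale anyway, the argument is unaffected.
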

Before proving the lemma, we first use it to bound $\|\Delta\|$.
\begin{proof}[Proof of \lref{lm:Deltanorm}]
Suppose that the conclusion of the previous lemma holds. Then, for any $I \in \bnr$,
\begin{align*}
\|\Delta \|\leq  \sum_J |\Delta(I,J)| &= \sum_{i=0}^r \sum_{J: | I \cap J| = i} |\Delta(I,J)|\\
&\leq \frac{2^{C r^2} k^{2r} (\log n)}{\sqrt{n}} \sum_{i=0}^r \sum_{J: |I \cap J| = i} (n/k)^i\\
&\leq \frac{ 2^{Cr^2} k^{2r} (\log n)}{\sqrt{n}} \sum_{i=0}^r (n/k)^i\,2^r n^{r-i}\\
&\leq \frac{ 2^{Cr^2} k^{2r} (\log n) n^r}{\sqrt{n}}.
\end{align*}
The lemma now follows from the above bound and \lref{lm:Gershgorin}.
\end{proof}

\begin{proof}[Proof of \lref{lm:Deltaentrywise}]
Fix sets $I,J$ with $|I \cap J| = i$. Let $\mathcal{A}$ be the event that $\mathcal{E}(I \cup J) \setminus (\mathcal{E}(I) \cup \mathcal{E}(J)) \subseteq G$. 

Then, by the second case of \eref{eq:defL}, conditioned on $\neg\mathcal{A}$ we have $\Delta(I,J) = 0$. Thus, the claim holds trivially in this case. In the following we condition on $\mathcal{A}$. Observe that
$$\ex[M'(I,J) \mid \mathcal{A}] = E(I,J)/\pr[\mathcal{A}] = E(I,J)/p(i).$$

We next use the following claim that $\deg_G(I\cup J)$ is concentrated around its mean when conditioned on $I \cup J$ being a clique. At a high level, this follows from the fact that conditioned on $I \cup J$ being a clique, $\deg_G(I \cup J)$ can be written as a (structured) low-degree polynomial in the indicator variables of the edges not in $I \cup J$ with small variance. We defer the proof to the appendix.

\begin{claim}[See \tref{th:degreeconcentrationfinal} of the appendix]\label{clm:degreebound}
For some constant $C > 0$,
$$\pr\left[\left| \deg_G(I \cup J) - 2^{-\binom{2r}{2} + \binom{2r-i}{2}} \cdot \binom{n-2r+i}{i}  \right| > 2 (\ln(C/\epsilon))^2 n^{i-1/2} \mid (I \cup J \text{ a clique})\right] < \epsilon.$$
\end{claim}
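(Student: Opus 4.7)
}
The plan is to express $\deg_G(I \cup J)$, conditioned on $I \cup J$ being a clique, as a sum of indicator random variables in the \emph{independent} edge variables outside $I \cup J$, and then apply McDiarmid's inequality (\tref{th:mcdiarmid}) after a careful bounded-difference calculation. Set $s = |I \cup J| = 2r - i$ and let $\mathcal{A}$ be the event that $I \cup J$ is a clique. Conditioning on $\mathcal{A}$ only fixes the $\binom{s}{2}$ edges inside $I \cup J$; all remaining edge indicators $X_e \in \{0,1\}$ for $e \not\subseteq I \cup J$ stay mutually independent Bernoulli$(1/2)$. Under this conditioning,
$$\deg_G(I \cup J) \;=\; \sum_{T \in \binom{[n]\setminus(I\cup J)}{i}} Y_T, \qquad Y_T \;=\; \prod_{e \in E(T)} X_e,$$
where $E(T)$ denotes the $si + \binom{i}{2} = \binom{2r}{2} - \binom{2r-i}{2}$ new edges needed for $T \cup (I \cup J)$ to be a clique. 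Taking the conditional expectation immediately reproduces the stated centering $\binom{n-2r+i}{i} \cdot 2^{-\binom{2r}{2}+\binom{2r-i}{2}}$. When $i = 0$ the random variable is deterministically $1$, so the claim is vacuous; I assume $i \geq 1$ from now on.

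To apply McDiarmid I compute the Lipschitz constant $c_e$ -- the worst-case change in $\deg_G(I \cup J)$ when the single indicator $X_e$ is flipped -- separately for two types of edge. If $e = \{u,v\}$ has exactly one endpoint $v \notin I \cup J$, only the terms $Y_T$ with $v \in T$ can change, so $c_e \leq \binom{n-s-1}{i-1} \leq n^{i-1}$; there are at most $s(n-s) \leq 2rn$ such edges. If both endpoints of $e$ lie outside $I \cup J$, only the $Y_T$ containing both endpoints change, giving $c_e \leq \binom{n-s-2}{i-2} \leq n^{i-2}$; there are at most $\binom{n}{2}$ such edges. Hence
$$\sum_e c_e^2 \;\leq\; 2rn \cdot n^{2i-2} + n^2 \cdot n^{2i-4} \;\leq\; O(r) \cdot n^{2i-1},$$
using $i \geq 1$ for the second summand. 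McDiarmid then yields
$$\pr\!\left[\bigl|\deg_G(I\cup J) - \mathbb{E}[\deg_G(I\cup J) \mid \mathcal{A}]\bigr| > t \;\Big|\; \mathcal{A}\right] \;\leq\; 2\exp\!\left(-\Omega\!\left(t^2 / (r n^{2i-1})\right)\right),$$
and choosing $t = C \sqrt{r \log(2/\epsilon)} \cdot n^{i-1/2}$ makes the right-hand side at most $\epsilon$. Since $\sqrt{r\log(2/\epsilon)} \leq (\ln(C'/\epsilon))^2$ for a sufficiently large constant $C'$ that may depend on $r$, the claim follows (with room to spare -- McDiarmid actually gives a subgaussian tail, stronger than the stated $(\log)^2$ dependence).

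The main -- and only -- nontrivial step is the split in the bounded-difference calculation: using a uniform worst-case $c_e \leq n^{i-1}$ over all $\Theta(n^2)$ edges would give $\sum_e c_e^2 = O(n^{2i})$, producing a fluctuation bound of $n^i$, which is not strong enough. The improvement comes from noticing that the vast majority of edges (those with both endpoints outside $I \cup J$) have the smaller Lipschitz constant $n^{i-2}$, so their quadratic contribution is only $n^{2i-2}$; this is exactly the ingredient that yields the sharp $n^{i-1/2}$ standard-deviation scale that the downstream norm-bound for $\Delta$ in \lref{lm:Deltanorm} relies on.
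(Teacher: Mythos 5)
Your proof is correct and gives a genuinely different, more direct argument than the paper's. The paper proves \clref{clm:degreebound} (i.e.\ \tref{th:degreeconcentrationfinal}) in two stages: first it isolates $A_I$, the common neighborhood of $I \cup J$, so that $\deg_G(I\cup J)$ becomes the number of $(2r-i)$-cliques inside $A_I$; it then invokes the standalone clique-count concentration result (\tref{th:cliquelargedeviation}, itself proved via the constraint-graph/trace machinery of \tref{thm:productgraphtonorm}) for the clique count, and a separate McDiarmid step to control $\binom{|A_I|}{2r-i}$. You bypass the intermediate clique-count theorem entirely and hit $\deg_G(I\cup J)$ with a single application of McDiarmid to the edge indicators outside $I\cup J$, with the crucial split of the bounded-difference sum into the $O(rn)$ ``incident'' edges (Lipschitz $n^{i-1}$) and the $O(n^2)$ ``far'' edges (Lipschitz $n^{i-2}$). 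That split is exactly what the paper accomplishes implicitly through its two-stage decomposition, and you correctly flag it as the indispensable step — a uniform $n^{i-1}$ bound over all $\Theta(n^2)$ edges would only give fluctuation $n^{i}$, which is useless. The trade-offs: your bound is subgaussian in $\epsilon$ (scaling as $\sqrt{\log(1/\epsilon)}$ rather than $(\log(1/\epsilon))^2$), so strictly sharper in the $\epsilon$-dependence the claim cares about; on the other hand you pick up an extra $\sqrt{r}$ factor that is absent from the paper's bound, so to match the literal form $2(\ln(C/\epsilon))^2 n^{i-1/2}$ your constant $C$ must depend on $r$ (e.g.\ $C \geq e^{\sqrt{r}}$ suffices). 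You note this yourself, and it is harmless for the downstream use in \lref{lm:Deltaentrywise}, where losses of size $2^{O(r^2)}$ are already being absorbed; still, it is worth stating explicitly if one wants the claim exactly as written. Your approach is also self-contained in a pleasant way: it needs only McDiarmid and elementary counting, whereas the paper's route reuses the heavier trace-method machinery — though that reuse is arguably a feature of the paper's exposition, since \tref{thm:productgraphtonorm} is already built for bounding $\|L\|$.
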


As a consequence of the above claim we also get concentration for $M'(I,J) \mid \mathcal{A}$. This is because $M'(I,J) \mid \mathcal{A}$ is identically distributed as $M(I,J) \mid (I \cup J\text{ a clique})$. Therefore, taking $\epsilon = 1/n^{2r+1}$ and applying a union bound over all sets $I,J$ we get that with probability at least $1-1/n$, for all $I,J$ such that $\mathcal{E}(I \cup J) \setminus (\mathcal{E}(I) \cup \mathcal{E}(J)) \subseteq G$, and $|I \cap J| = i$,

$$\left| M_r'(I,J) - \beta(i) 2^{-\binom{2r}{2} + \binom{2r-i}{2}} \cdot \binom{n-2r+i}{i} \right| < C r 2^{2r^2} \cdot k^{2r-i} \cdot (\log n) \cdot n^{i-1/2}.$$

Finally, observe that 
$$\beta(i) 2^{-\binom{2r}{2} + \binom{2r-i}{2}} \cdot \binom{n-2r+i}{i} = \alpha(i)/p(i),$$
and conditioned on $\mathcal{A}$, $\Delta(I,J) = M'(I,J) - \alpha(|I \cap J|)/p(|I \cap J|)$.  The lemma now follows by combining the above two bounds.
\end{proof}

\subsection{Putting things together}\label{sec:wrapup}
We now prove \lref{lm:maintech} and use it to prove our main results.
\begin{proof}[Proof of \lref{lm:maintech}]
By \lref{lm:exmineigenvalue}, we have that $E \succeq 2^{-Cr^2} k^r n^r \mathbb{I}$. Therefore, by \lref{lm:Lnorm} and \lref{lm:Deltanorm}, with probability at least $1-2/n$, the least eigenvalue of $M'$ is at least
$$2^{-Cr^2} k^r n^r - 2^{O(r^2)} k^{2r} n^r \frac{\log n}{\sqrt{n}} = k^r n^r \left(2^{-O(r^2)} - \frac{2^{O(r^2)} k^r (\log n)}{\sqrt{n}}\right) \geq 0,$$
for $k$ as in the statement of the lemma for a sufficiently big constant $c$.  
\end{proof}

We bring the arguments from previous sections together to prove our main results \tref{th:mainpsd} and \tref{th:main}.

\begin{proof}[Proof of \tref{th:mainpsd}]
Follows immediately from \lref{lm:MprimePSDtoMPSD} and \lref{lm:maintech}.
\end{proof}

\begin{proof}[Proof of \tref{th:main}] 
Follows immediately from \lref{lm:dualcert}, \clref{clm:dualsol} and \tref{th:mainpsd}.
\end{proof}

Theorems \ref{th:mainhierarchy} and \ref{cor:mainhierarchy} follow immediately from our $\psd$-refutation lower bound using standard arguments. We defer these to the appendix.


\section{Bounding norms of locally random matrices}\label{sec:localrand}
In this section we shall develop tools for bounding the norms of \emph{locally random} matrices (recall their informal definition from \sref{sec:outline} and more formal one in \sref{sec:normboundL}) associated with random graphs $G \lfta G(n,1/2)$, proving \clref{clm:maintechnicalL}. The idea behind our bounds is to use the \emph{trace method}. Recall the trace method: for any matrix $M$, for any positive integer $q$, $||M|| \leq \sqrt[2q]{tr(({M^T}M)^{q})}$ so we can probabilistically bound $||M||$ by bounding $\ex\sbkets{tr(({M^T}M)^{q})}$.

Going back to \clref{clm:maintechnicalL} let us first look at the special case of $a=1$ to gain some intuition. In this case, the entries of $R_1$ are (essentially) independent, and so the trace method is easy to apply. More precisely, $R_1$ is a symmetric random matrix with zeros on the diagonal and the entries in the upper diagonal taking independent uniformly random $\pm{1}$ values. It is well known that $\|R_1\| = O(\sqrt{n})$ in this case (see \cite{Vershynin06} for instance). One can also prove the bound by the trace method as follows. We have that  
$$\ex\sbkets{tr(({{R_1}^T}{R_1})^{q})} = \ex\sbkets{tr({R_1}^{2q})} = \sum_{i_1,\cdots,i_{2q}}{\ex\sbkets{\prod_{j=1}^{2q}{{R_1}({i_j},i_{j+1})}}},$$ where $i_{2q+1} = i_1$. We can then look at which products $\prod_{j=1}^{2q}{{R_1}({i_j},i_{j+1})}$ have expectation $0$. 

Since each individual ${R_1}({i_j},i_{j+1})$ is an independent $\pm{1}$ random variable with expectation $0$, a term in the summation $\ex\sbkets{\prod_{j=1}^{2q}{{R_1}({i_j},i_{j+1})}} = 0$ unless every ${R_1}({i_j},i_{j+1})$ appears an even number of times in the product. Thus, the vast majority of the terms $\ex\sbkets{\prod_{j=1}^{2q}{{R_1}({i_j},i_{j+1})}}$ are $0$ and we can count the remaining terms to bound $\ex\sbkets{tr(R_1^{2q})}$.


One way to implement the above argument is to first look at terms $\ex\sbkets{\prod_{j=1}^{2q}{{R_1}({i_j},i_{j+1})}}$ which have non-zero expectation and observe that in all such terms, the number of distinct entries in $\{i_j: j =1,\ldots,2q\}$ is at most $q+1$. We can then bound the number of terms with non-zero expected value by the number of possible terms which contain at most $q+1$ distinct elements. This number can be easily bounded by $O((nq)^{q+1})$, and picking $q$ optimally results in showing that with high probability $\|R_1\| = O(\sqrt{n}\log n)$, a near-optimal bound. 


To handle higher $a$'s we first generalize the above argument based on constraint graphs to work with general locally-random matrices. However, unlike for $a=1$, distinct entries of the matrix are now dependent, which significantly complicates the structure of the terms and the associated count of the terms which have non-zero expectation. The rest of the section is devoted to this. While we apply our arguments to the particular locally-random matrices arising in our proof, these techniques should apply more generally to other locally-random matrices.  

\subsection{Constraint graphs}\label{constraintgraphs}
We next state our main technical result which gives us a way to bound traces of high powers of locally random matrices based on the structure of the individual terms. The advantage being that the conditions on the terms will be easier to ascertain in our applications. 

Here we use $V$ rather than $I$ for subsets because we will be viewing the individual elements of each $V$ as vertices. 

\begin{theorem}\label{thm:productgraphtonorm}
Assume that we have values $a, B>0$ and for every positive $q$, we have a function $p(G,2q)$ such that $p(G,2q) \geq 0$ and $p(G,2q)$ can be written in the form 
$$p(G,2q) = \sum_{\{V_1,\ldots,V_{2q}\}}{f(G,\{V_1,\ldots,V_{2q}\})}$$
where the following are true:
\begin{enumerate}
\item $V_j \subseteq V(G)$ and $|V_j| = a$. 
\item For every term $f(G,\{V_1,\ldots,V_{2q}\})$ with non-zero expected value, $|\cup_j V_j| \leq 2aq-qy+z$ for some integers $y$ and $z$ where $1 \leq y \leq 2a$ and $z \geq 0$. 
\item $\ex[f(G,\{V_1,\ldots,V_{2q}\})] \leq B^{2q}$.
\end{enumerate}
Then, if $n \geq 10$, for all $\epsilon \in (0,1)$, 
$$\pr\left[|\min_{q \in \mathbb{Z}^{+}}{\{\sqrt[2q]{p(G,2q)}\}}| > \frac{B}{a!} \cdot \left(2ea\bkets{\frac{\ln(n^z/\epsilon)}{2y} + 1}\right)^y \cdot n^{a-y/2}\right] < \epsilon.$$
\end{theorem}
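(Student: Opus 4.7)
The plan is to apply Markov's inequality to the non-negative random variable $p(G,2q)$, use the decomposition and hypotheses (2)--(3) to bound $\mathbb{E}[p(G,2q)]$ by a count of non-vanishing terms, and then optimize over the free parameter $q$. For any positive integer $q$,
\[
\pr\!\left[\sqrt[2q]{p(G,2q)} > s\right] = \pr\!\left[p(G,2q) > s^{2q}\right] \leq \frac{\mathbb{E}[p(G,2q)]}{s^{2q}},
\]
so the whole proof reduces to choosing $q$ such that $\mathbb{E}[p(G,2q)]/s^{2q} < \epsilon$ for $s$ at most the claimed bound; the minimum over $q \in \mathbb{Z}^+$ in the statement is then witnessed by this specific $q$.

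To bound $\mathbb{E}[p(G,2q)]$, I would use linearity of expectation together with hypotheses (2) and (3): only tuples $(V_1,\ldots,V_{2q})$ with $|\cup_j V_j| \leq v_{\max} := 2aq - qy + z$ contribute, and each such contribution is at most $B^{2q}$. The number of such tuples is at most $\binom{n}{v_{\max}}\binom{v_{\max}}{a}^{2q}$, obtained by first picking the union as a $v_{\max}$-subset of $[n]$ and then picking each $V_j$ independently as an $a$-subset of the chosen union. Substituting standard estimates such as $\binom{n}{v_{\max}}\leq n^{v_{\max}}$ and $\binom{v_{\max}}{a}\leq (2aq)^a/a!$, then taking the $2q$-th root and dividing by $\epsilon^{1/(2q)}$, brings the required threshold $s$ to the shape
\[
s \;\leq\; \frac{B}{a!} \cdot (\text{polynomial factor in } q) \cdot (n^z/\epsilon)^{1/(2q)} \cdot n^{a-y/2}.
\]

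The remaining step is to optimize $q$ over $\mathbb{Z}^+$. The tension is between the polynomial prefactor in $q$ and the factor $(n^z/\epsilon)^{1/(2q)}$, which decreases with $q$. Picking $q = \lceil \ln(n^z/\epsilon)/(2y)\rceil$ makes $(n^z/\epsilon)^{1/(2q)} \leq e^y$ while bounding $q$ above by $\ln(n^z/\epsilon)/(2y) + 1$; substituting this into the polynomial prefactor and collecting constants produces the advertised bound with $X = \ln(n^z/\epsilon)/(2y)+1$. The hypothesis $n \geq 10$ is used to ensure that this choice of $q$ is a legitimate positive integer and that the minimum in the statement is attained on $\mathbb{Z}^+$.

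The main obstacle is the counting step: a naive estimate $\binom{v_{\max}}{a}^{2q}\leq ((2aq)^a/a!)^{2q}$ using $v_{\max} \leq 2aq$ leaves an $a$-exponent in the polynomial prefactor, whereas obtaining the sharper $(2eaX)^y$ factor of the statement requires a more careful accounting that exploits $v_{\max} = 2aq - qy + z$ --- specifically, that at least $qy-z$ of the $2aq$ total vertex slots must reuse vertices introduced earlier, so the effective branching per new set is governed by $y$ rather than by $a$. Handling this interplay between the counting bound and the optimization of $q$ is the heart of the proof; the remainder is standard moment-method bookkeeping.
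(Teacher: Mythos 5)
Your overall strategy matches the paper's: apply Markov to $p(G,2q)$, bound $\ex[p(G,2q)]$ by (number of non-vanishing terms) $\times B^{2q}$, and optimize with $q = \lceil \ln(n^z/\epsilon)/(2y)\rceil$ so that $(n^z/\epsilon)^{1/2q} \leq e^y$. You also correctly diagnose that your first-pass count $\binom{n}{v_{\max}}\binom{v_{\max}}{a}^{2q}$ is too loose --- after a $2q$-th root it leaves a $(2aq)^a$ prefactor rather than the advertised $(2aq)^y$ --- and that the fix must exploit that $2aq - m = qy - z$ of the $2aq$ ordered vertex slots are forced to coincide with earlier ones. That is exactly what the paper does in \lref{tboundlemma}: it counts ordered $2aq$-tuples with at most $m$ distinct entries by first choosing a set of $2aq-m$ ``repeat'' positions ($\binom{2aq}{2aq-m}$ ways), then the $m$ fresh values ($n^m$ ways), then which earlier position each repeat copies ($m^{2aq-m}$ ways), and finally divides by $(a!)^{2q}$ since each $V_j$ is an unordered $a$-set. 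This yields $N(n,a,q,m) \leq (a!)^{-2q}\binom{2aq}{2aq-m}n^m m^{2aq-m}$, whose $2q$-th root is $\tfrac{1}{a!}(2aq\cdot m)^{(2aq-m)/2q}n^{m/2q} \leq \tfrac{1}{a!}(2aq)^y n^{a-y/2}\,n^{z/2q}$ once one substitutes $m = 2aq-qy+z$ and uses $m\leq 2aq$; your choice of $q$ then closes the estimate. So the one genuine gap in your write-up is that you state what the counting bound must achieve but do not prove it; everything else (Markov, linearity, the optimization over $q$, the role of $n\geq 10$ in making $q$ a valid positive integer) is as in the paper.
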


\begin{remark}
We will use this theorem with two types of functions $p$. When $p(G,2q) = tr(({M^T}M)^q)$ for some matrix $M$ depending on $G$, 
$||M|| \leq \sqrt[2q]{p(G,2q)}$ for all $q > 0$ so this theorem gives us a probabilistic bound on $||M||$. When $p(G,2q) = h(G)^{2q}$ for some function $h$, then $h(G) = \sqrt[2q]{p(G,2q)}$ for all $q > 0$ so this theorem gives us a probabilistic bound on $h(G)$.
\end{remark}
\begin{example}
In the case when $p(G,2q) = tr(R_1^{2q})$, $p(G,2q) = \sum_{i_1,\cdots,i_{2q}}{\prod_{j=1}^{2q}{{R_1}({i_j},i_{j+1})}}$. Each term here has expected value at most $1$ and it is easy to argue that for any term with non-zero expected value, the number of distinct elements is at most $q+1$. Applying \tref{thm:productgraphtonorm} with $y = z = 1$, and $B = 1$ we have that for all $n \geq 10$, and $\epsilon \in (0,1)$, 
$$\pr\left[||R_1|| > 2e\sqrt{n} (\ln{(n/\epsilon)} + 2) \right] < \epsilon$$
This bound is weaker (by a logarithmic factor) than the bounds in e.g. \cite{Vershynin06}, but is sufficient for our purposes.
\end{example}

Before proving the theorem we introduce the concept of {\sl constraint graphs} which are a useful way to visualize our calculations. While the statement of the above theorem does not involve constraint graphs, thinking in terms of constraint graphs is helpful in proving the conditions required to apply the theorem. 

\begin{definition}
Given a family of sets of vertices $\{V_i\}$, we define a corresponding constraint graph $C$ whose vertices are the sets $\{V_i\}$ and there is an edge between $V_i, V_j$, $i \neq j$, if $V_i \cap V_j \neq \emptyset$.
\end{definition}

The above definition is useful because of the following elementary lemma.
\begin{lemma}\label{lm:componentstounion}
For any collection of sets $\{V_1,\ldots,V_{\ell}\}$, if the corresponding constraint graph $C$ has $t$ connected components, then $|\cup_i V_i| < \sum_i |V_i| - \ell + t$.
\end{lemma}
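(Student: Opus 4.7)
\textbf{Proof plan for Lemma \ref{lm:componentstounion}.}

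The plan is to reduce to the single-component case and then use a spanning-tree argument. First I would observe that it suffices to prove, for each connected component $C$ of the constraint graph containing sets $V_{i_1},\ldots,V_{i_k}$, the bound
\[
\Bigl|\bigcup_{j=1}^k V_{i_j}\Bigr| \;\leq\; \sum_{j=1}^k |V_{i_j}| - (k-1).
\]
Assuming this, if the components have sizes $k_1,\ldots,k_t$ with $k_1+\cdots+k_t=\ell$, then by disjointness of unions across components (unions across distinct components are actually subsets of $V(G)$ but they can overlap—actually they do not, since overlap would merge components) we get
\[
\Bigl|\bigcup_i V_i\Bigr| \;\leq\; \sum_{c=1}^t \Bigl(\sum_{V_i\in c}|V_i|-(k_c-1)\Bigr) \;=\; \sum_i |V_i| - \ell + t,
\]
which is the desired bound. (A small caveat: the vertex sets of distinct components are disjoint by definition of the constraint graph, so this step is legitimate.)

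For a single connected component with $k$ sets, I would fix a spanning tree $T$ of the component and root it at an arbitrary set $V_{i_1}$. Processing the remaining sets in a BFS/DFS order consistent with $T$, every non-root set $V_{i_j}$ has a tree-edge to some previously processed set $V_{i_{j'}}$, which by definition of the constraint graph means $V_{i_j}\cap V_{i_{j'}}\neq\emptyset$. Hence when $V_{i_j}$ is added, it shares at least one vertex with $\bigcup_{s<j}V_{i_s}$ and contributes at most $|V_{i_j}|-1$ new elements. The root contributes $|V_{i_1}|$ elements, so
\[
\Bigl|\bigcup_{j=1}^k V_{i_j}\Bigr| \;\leq\; |V_{i_1}| + \sum_{j=2}^k (|V_{i_j}|-1) \;=\; \sum_{j=1}^k |V_{i_j}| - (k-1),
\]
as required. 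This is the whole content of the lemma.

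The only subtlety worth flagging is the strict versus non-strict inequality: the clean bound coming out of the spanning-tree argument is $\leq$, and it is tight when all intersections along the tree are single vertices. The statement in the paper writes $<$, but the bound is used later only as a numeric upper bound, so the argument above—yielding $\leq$—is what is actually needed and is what I would record. There is no real obstacle here; the lemma is essentially a packaging of the standard ``$k$ sets connected by $k-1$ overlaps cover at most $\sum|V_i|-(k-1)$ elements'' observation, specialized so that it can be fed into the vertex-count hypothesis of Theorem~\ref{thm:productgraphtonorm}.
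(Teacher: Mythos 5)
Your proof is correct and takes essentially the same approach as the paper's: order the sets so each new one meets a previously added one, and observe each addition contributes at most $|V_i|-1$ new elements; you merely phrase it per-component via spanning trees, while the paper starts with one seed per component and adds the rest. You are also right that the bound should read $\leq$ rather than $<$; this is a harmless typo since only the upper bound is used.
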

\begin{proof}
Let $V_{i_1},\ldots,V_{i_t}$ belong to the $t$ different connected components of $C$. Now add the remaining elements of $\{V_1,\ldots,V_\ell\}$ so that each new set is adjacent (in $C$) to at least one of the previously added sets (we can do this as the number of connected components is $t$). Then, each such step adding a set $V_i$ can increase the size of the union by at most $|V_i|-1$. Therefore, the size of the union is at most $\sum_i |V_i| - \ell + t$.
\end{proof}

\begin{proof}[Proof of \tref{thm:productgraphtonorm}]
In the following we use $\{V_i\}$ as a short form for $\{V_1,\ldots,V_{2q}\}$. We prove this result by obtaining an upper bound on the number of terms in $p(G,2q) = \sum_{\{V_{i}\}}{f(G,\{V_{i}\})}$ with nonzero expected value. This gives us a probabilistic upper bound for $p(G,2q)$, implying the upper bound on $\min_q{\{\sqrt[2q]{p(G,2q)}\}}$.
\begin{definition}
Define $N(n,a,q,m)$ to be the number of ways to choose subsets $\{V_{i}: i \in [2q]\}$ of $[n]$ such that $|\cup_{i}{V_{i}}| \leq m$ and for all $i$, $|V_{i}| = a$.
\end{definition}
\begin{lemma}\label{tboundlemma}
If $m \leq 2aq$, then 
$$N(n,a,q,m) \leq \bkets{\frac{1}{a!}}^{2q}{\binom{2aq}{2aq-m}}{n^{m}}{{m}^{2aq-m}}$$
\end{lemma}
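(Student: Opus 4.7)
}

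The plan is to count ordered tuples rather than set-tuples, and apply a ``marker position'' trick. Each family $\{V_1, \dots, V_{2q}\}$ of $a$-sets corresponds to $(a!)^{2q}$ ordered sequences $(e_1, \dots, e_{2aq}) \in [n]^{2aq}$, where $(e_{(i-1)a+1}, \dots, e_{ia})$ is an ordering of $V_i$ for each $i$. Hence
$$N(n,a,q,m) \;=\; \frac{1}{(a!)^{2q}} \cdot \#\{\text{such ordered sequences with } |\{e_1,\dots,e_{2aq}\}| \leq m\}.$$
It therefore suffices to upper bound the right-hand count by $\binom{2aq}{2aq-m} n^m m^{2aq-m}$, after which we may drop the block-distinctness constraint (it only reduces the count, which is fine for an upper bound).

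The main step is to bound the number of arbitrary sequences $(e_1, \dots, e_{2aq}) \in [n]^{2aq}$ using at most $m$ distinct values by $\binom{2aq}{m} n^m m^{2aq-m}$. The counting recipe is:
\begin{enumerate}
\item Choose a set $P \subseteq [2aq]$ of $m$ ``marker'' positions: $\binom{2aq}{m}$ choices.
\item Assign arbitrary values in $[n]$ to the marker positions: $n^m$ choices. Let $S$ be the (multi)set of values used; note $|S| \leq m$.
\item Assign to each non-marker position a value chosen from $S$: at most $m^{2aq-m}$ choices.
\end{enumerate}
To see this covers every valid sequence, take any sequence with distinct-value count $m' \leq m$, and let $Q = \{q_1 < \cdots < q_{m'}\}$ be the positions of first occurrences. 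Augment $Q$ arbitrarily to any $m$-subset $P \supseteq Q$. Then the values at non-marker positions are all ``old'' (each equals $e_{q_i}$ for some $i$, and $q_i \in P$), so they lie in $S$. Thus the sequence is produced by the recipe above.

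Combining the two observations and noting $\binom{2aq}{m} = \binom{2aq}{2aq-m}$ yields
$$N(n,a,q,m) \;\leq\; \frac{1}{(a!)^{2q}} \binom{2aq}{2aq-m}\, n^m\, m^{2aq-m},$$
which is exactly the claimed bound. The whole argument is really just bookkeeping, so I do not anticipate a technical obstacle; the one point requiring care is verifying that the ``marker augmentation'' genuinely reaches every valid sequence, which is what ensures we can replace a sum over $m' \leq m$ by the single term at $m' = m$.
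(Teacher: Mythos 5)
Your proof is correct and is essentially the same as the paper's: both pass to ordered $2aq$-tuples with the $(a!)^{2q}$ correction, then count by fixing a set of ``free'' positions and pointing the remaining positions at values already chosen. The only cosmetic difference is that you select the $m$ marker positions while the paper selects the $2aq-m$ duplicate positions, and $\binom{2aq}{m}=\binom{2aq}{2aq-m}$ makes these identical.
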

\begin{proof}
We can choose each ordered $2aq$-tuple $(v_1, \cdots, v_{2aq})$ of elements in $[n]$ which contains at most $m$ distinct elements as follows. 
There must be at least $2aq-m$ elements which are duplicates of other elements, so we can first choose a set $I$ of $2aq-m$ indices such that for all $i \in I$, $v_i = v_j$ for some $j \notin I$. There are $\binom{2aq}{2aq-m}$ choices for $I$. We then choose the elements $\{v_j: j \notin I\}$. There are no restrictions on these elements so there are $n^{m}$ choices for these elements. Finally, we choose the elements $\{v_i: i \in I\}$. To determine each $v_i$ it is sufficient to specify the $j \notin I$ such that $v_i = v_j$. For each $i$ there are $m$ choices for the corresponding $j$, so the number of choices for these elements is at most $m^{2aq-m}$. Putting everything together, the total number of choices is at most $\binom{2aq}{2aq-m}{n^{m}}{{m}^{2aq-m}}$. Now note that since we are choosing subsets $\{V_{i}: i \in [2q]\}$ of $[n]$ rather than one big ordered tuple, the order within each subset does not matter. Thus, there are $(a!)^{2q}$ different ordered tuples which give the same subsets of elements, so the total number of possibilities for the subsets $\{V_{i}\}$ is at most $(a!)^{-2q}{\binom{2aq}{2aq-m}}{n^{m}}{{m}^{2aq-m}}$, as needed.
\end{proof}
Now $\ex[p(G,2q)] = \sum_{\{V_{i}\}}{\ex[f(G,\{V_{i}\})]}$. For every nonzero term $\ex[f(G,\{V_{i}\})]$, we have that $|\cup_{i} V_{i}| \leq 2aq-qy+z$. If $q > \frac{z}{y}$ then applying \lref{tboundlemma} with $m = 2aq - qy + z$, the number of non-zero terms $\ex[f(G,\{V_{i}\})]$ is at most 
\begin{align*}
\left({\frac{1}{{a}!}}\right)^{2q}\binom{2aq}{2aq-m}{n^{m}}{{m}^{2aq-m}} &\leq  \left({\frac{1}{{a}!}}\right)^{2q}{(2aqm)}^{2aq-m}{n^m} .\\
\end{align*}
Moreover, by our assumptions, each of these nonzero terms $E[f(G,\{V_i\})]$ has value at most $B^{2q}$, so $$\ex[p(G,2q)] \leq \left({\frac{1}{{a}!}}\right)^{2q}{(2aqm)}^{2aq-m}{n^m} B^{2q}.$$
Now, by Markov's inequality applied to $p(G,2q)$, 
$$\pr\sbkets{\sqrt[2q]{p(G,2q)} > \sqrt[2q]{\ex[p(G,2q)]/\epsilon}} < \epsilon.$$ 
We next choose a value $q$ so as to minimize our estimate on $\sqrt[2q]{\ex[p(G,2q)]/\epsilon}$. Specifically, we set $q = \lceil \ln(n^z/\epsilon)/2y \rceil$ (we arrive at this value by minimizing the general estimate as a function of $q$ by setting the derivative to $0$ - we spare the reader the details). As long as $n\geq 10$, this guarantees that $q > z/y$ so that
\begin{align*}
\sqrt[2q]{\ex[p(G,2q)]/\epsilon} &\leq \frac{B}{a!} \cdot \frac{1}{\epsilon^{1/2q}} \cdot (2aqm)^{a-m/2q} \cdot n^{m/2q}\\
&= \frac{B}{a!} \cdot n^{a-y/2} \cdot \bkets{\frac{n^z}{\epsilon}}^{1/2q} (2aqm)^{a-m/2q}\\
&\leq \frac{B}{a!} \cdot n^{a-y/2} \cdot \bkets{\frac{n^z}{\epsilon}}^{1/2q} \cdot (2aq)^y\\
&\leq \frac{B}{a!} \cdot n^{a-y/2} \cdot e^y \cdot (2a)^y \cdot \bkets{\frac{\ln(n^z/\epsilon)}{2y} + 1}^y.
\end{align*}
The claim now follows by rearranging the above bound.
\end{proof}
\subsection{Bounds on $||R_a||$}\label{rabounds}
In this subsection, we prove \clref{clm:maintechnicalL} using \tref{thm:productgraphtonorm}. For convenience, we restate \clref{clm:maintechnicalL} here with more precise constants.
\begin{theorem}\label{th:rabounds}
If $n \geq 100$, for all $\epsilon \in (0,1)$, 
$\pr\left[||R_a|| > 2^{a^2+2a+2}\ln{(\frac{n}{\epsilon})}n^{a - \frac{1}{2}}\right] < \epsilon$.
\end{theorem}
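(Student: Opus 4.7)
The plan is to apply \tref{thm:productgraphtonorm} to $p(G,2q) = \operatorname{tr}(R_a^{2q})$, which upper bounds $\|R_a\|^{2q}$. Expanding the trace gives
$$p(G,2q) \;=\; \sum_{V_1,\ldots,V_{2q}\in \binom{[n]}{a}} \prod_{j=1}^{2q} R_a(V_j, V_{j+1}),$$
where subscripts are taken cyclically ($V_{2q+1}=V_1$). Every entry $R_a(V,W)$ has magnitude at most $2^{a^2}$, so each summand is bounded by $(2^{a^2})^{2q}$ in absolute value; I take $B=2^{a^2}$ for the theorem's third hypothesis.

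The next step is to show that whenever $\ex[\prod_j R_a(V_j,V_{j+1})]\neq 0$ we have $|\cup_j V_j|\leq 2aq - q + 1$, which corresponds to $y=1$, $z=1$. I will expand each factor in the $\pm 1$ Fourier basis $\chi_e = 2\,\mathds{1}[e\in G] - 1$, giving
$$R_a(V,W) \;=\; \prod_{e\in E(V,W)}(1+\chi_e) \;-\; 1 \;=\; \sum_{\emptyset\neq S\subseteq E(V,W)} \prod_{e\in S} \chi_e,$$
so the product $\prod_j R_a(V_j,V_{j+1})$ becomes a sum indexed by tuples $(S_1,\ldots,S_{2q})$ of nonempty subsets $S_j\subseteq E(V_j,V_{j+1})$. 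By independence of the $\chi_e$'s, the expectation of a summand is nonzero precisely when every edge of $G$ appears in an even number of the $S_j$'s, and consecutive sets $V_j, V_{j+1}$ must be disjoint or $R_a(V_j,V_{j+1})$ already vanishes.

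The combinatorial heart of the proof is then the claim that an ``edge-doubled'' configuration as above forces the constraint graph $C$ on $\{V_1,\ldots,V_{2q}\}$ (with an edge between any pair of sets that share a vertex) to have at most $q+1$ connected components. Granting this, \lref{lm:componentstounion} with $\ell=2q$ and $|V_j|=a$ yields $|\cup_j V_j|\leq 2aq - 2q + (q+1) = 2aq - q + 1$. I expect this component bound to be the main obstacle. The intuition is that every shared edge $e\in S_j\cap S_k$ simultaneously forces two vertex-intersections (either $V_j\cap V_k$ and $V_{j+1}\cap V_{k+1}$, or $V_j\cap V_{k+1}$ and $V_{j+1}\cap V_k$), so a careful charging argument should show that any ``extra'' connected component beyond $q+1$ would demand more distinct reused edges than the $2q$ transitions, each contributing at most $a^2$ edges, can supply. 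As a sanity check, the case $a=1$ reduces to the classical Wigner trace bound $|\cup_j V_j|\leq q+1$, and the configuration $V_1=\{v,a\}$, $V_2=\{x,y\}$, $V_3=\{v,c\}$, $V_4=\{p,q\}$ with $(S_1,S_2,S_3,S_4)=(\{vx\},\{vx\},\{vp\},\{vp\})$ shows the bound is tight at $|\cup V_j|=7$ for $a=q=2$.

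Finally, plugging $y=1$, $z=1$, and $B=2^{a^2}$ into \tref{thm:productgraphtonorm} gives, with probability at least $1-\epsilon$,
$$\|R_a\| \;\leq\; \frac{2^{a^2}}{a!}\cdot 2ea\left(\tfrac{1}{2}\ln(n/\epsilon) + 1\right)\cdot n^{a-1/2}.$$
For $n\geq 100$ and $\epsilon\in(0,1)$ one has $\tfrac{1}{2}\ln(n/\epsilon)+1 \leq \ln(n/\epsilon)$, and a direct check gives $\frac{2ea}{a!}\leq 2^{2a+2}$ for every $a\geq 1$, so the constants collapse into the claimed bound $2^{a^2+2a+2}\,\ln(n/\epsilon)\, n^{a-1/2}$.
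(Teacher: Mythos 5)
Your overall framework matches the paper's: apply \tref{thm:productgraphtonorm} to the trace of a high power of $R_a$, with $y=1$, $z=1$, $B=2^{a^2}$, after establishing that any term with nonzero expectation has $|\cup_j V_j| \leq 2aq - q + 1$. Your Fourier expansion $R_a(V,W) = \sum_{\emptyset \neq S \subseteq E(V,W)} \prod_{e \in S} \chi_e$ is correct and is an equivalent reformulation of the paper's ``independence-breaking cycle'' condition. The constant chasing at the end is also fine.

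The genuine gap is the combinatorial claim you flag yourself as ``the main obstacle'': that for any term with nonzero expectation, the constraint graph on $\{V_1,\ldots,V_{2q}\}$ has at most $q+1$ connected components. You offer a heuristic charging argument but no proof, and this is precisely where all the work lives. Worse, the route you chose makes the claim harder than what the paper actually proves. The paper does \emph{not} apply the trace method directly to $\operatorname{tr}(R_a^{2q})$: it first decomposes $R_a = 2^{2a-n}\sum_{(A,B)} R_{a,A,B}$ over bipartitions $(A,B)$ of $[n]$ so that, in the expansion of $\operatorname{tr}((R_{a,A,B}^T R_{a,A,B})^q)$, the odd-indexed sets live in $B$ and the even-indexed sets live in $A$. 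This forces every constraint edge $(W_i,W_j)$ to have even index difference $j-i$, and \lref{lem:connectedcomponentbound} is stated and proved only under that parity hypothesis. The parity is not cosmetic: the inductive step of \lref{lem:connectedcomponentbound} takes an isolated $W_i$, finds the forced $3$-cycle $(W_{i-1},W_i,W_{i+1})$ with constraint edge $(W_{i-1},W_{i+1})$, deletes $W_i$, and contracts $W_{i-1}, W_{i+1}$ into a single vertex $U$. If odd-difference constraint edges were allowed, a $4$-cycle could use constraint edges $(W_{i-2},W_{i+1})$ and $(W_{i-1},W_{i+2})$ without using $(W_{i-1},W_{i+1})$; after the contraction these become constraint edges between \emph{consecutive} indices of $H'$, which breaks the invariant (a constraint edge between consecutive sets is nonsensical, since $R_a$ vanishes on intersecting consecutive pairs) and invalidates the case analysis that preserves the independence-breaking-cycle property in $H'$. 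So your direct-trace version of the lemma is not a special case of what the paper proves; whether it even holds as stated for general constraint edges is not obvious, and you do not prove it.

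Concretely: to salvage the proposal you should either (i) insert the bipartite decomposition $R_a = 2^{2a-n}\sum_{(A,B)} R_{a,A,B}$ and apply the trace method to a single $R_{a,A,B}$ as in the paper --- this restores the parity invariant, lets you invoke \lref{lem:connectedcomponentbound} verbatim, and the $2^{2a}$ factor from \cref{cor:parttowhole} then accounts for part of the $2^{a^2+2a+2}$ rather than your $\frac{2ea}{a!} \leq 2^{2a+2}$ estimate --- or (ii) give a genuinely different proof of the $q+1$ component bound that does not use parity, which you have not done and which your intuitive charging argument does not supply.
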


The core of the proof will be to bound $|\cup_{j=1}^{2q} V_{i_j}|$ for any term $\prod_{j=1}^{2q} R_a(V_{i_j}, V_{i_{j+1}})$ with non-zero expectation which appear in the expansion of $tr((R_a^T R_a)^q)$. We will do so by arguing that the constraint graph associated with the term has at most $2aq-q+1$ connected components, which we do by inductively decomposing $R_a$ as follows. 
\begin{definition}
Given a partition $(A,B)$ of $[1,n]$, define $R_{a,A,B}(V_1,V_2) = R_{a}(V_1,V_2)$ if $V_1 \subseteq A$ and $V_2 \subseteq B$ and $0$ otherwise.
\end{definition}
\begin{proposition}
$\sum_{A,B}{R_{a,A,B}} = 2^{n-2a}R_a$
\end{proposition}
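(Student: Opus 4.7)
The proposition is a straightforward counting identity, so the plan is simply to evaluate both sides entry by entry. Fix any $V_1, V_2 \in \binom{[n]}{a}$ and compare $\bigl(\sum_{A,B} R_{a,A,B}\bigr)(V_1,V_2)$ with $2^{n-2a} R_a(V_1,V_2)$. Here the sum ranges over ordered partitions $(A,B)$ of $[1,n]$, i.e.\ pairs with $A \cup B = [1,n]$ and $A \cap B = \emptyset$.

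First I would dispose of the degenerate case. If $V_1 \cap V_2 \neq \emptyset$, then by definition $R_a(V_1,V_2) = 0$, so the right-hand side vanishes. On the left-hand side, no partition $(A,B)$ can simultaneously satisfy $V_1 \subseteq A$ and $V_2 \subseteq B$ (any common element would have to lie in both $A$ and $B$), so $R_{a,A,B}(V_1,V_2) = 0$ for every $(A,B)$, and the left-hand side vanishes too.

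For the main case, assume $V_1 \cap V_2 = \emptyset$. Then for a partition $(A,B)$, the condition $V_1 \subseteq A$ and $V_2 \subseteq B$ amounts to requiring $A$ to contain $V_1$ and be disjoint from $V_2$; once this is guaranteed, the remaining $n - 2a$ elements of $[1,n] \setminus (V_1 \cup V_2)$ can be assigned freely to $A$ or $B$. This gives exactly $2^{n-2a}$ valid partitions, and each of them contributes $R_a(V_1,V_2)$ to the sum. Hence $\bigl(\sum_{A,B} R_{a,A,B}\bigr)(V_1,V_2) = 2^{n-2a} R_a(V_1,V_2)$, and the identity follows.

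There is no real obstacle here; the only thing to be mindful of is keeping the two definitional cases of $R_a$ (nonzero only when $V_1 \cap V_2 = \emptyset$) aligned with the constraint $A \cap B = \emptyset$, which is precisely what makes the counting of admissible partitions clean.
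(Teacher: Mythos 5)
Your proof is correct and follows essentially the same approach as the paper's: compare the two sides entry by entry, note both vanish when $V_1 \cap V_2 \neq \emptyset$, and for disjoint $V_1, V_2$ count the $2^{n-2a}$ ordered partitions $(A,B)$ with $V_1 \subseteq A$ and $V_2 \subseteq B$. You simply spell out the counting in more detail than the paper's terse version.
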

\begin{proof}
$R_{a,A,B}(V_1,V_2) = R_a(V_1,V_2) = 0$ whenever $V_1$ and $V_2$ are not disjoint. For all disjoint $V_1$ and $V_2$, $R_{a,A,B}(V_1,V_2) = R_a(V_1,V_2)$ for $2^{n-2a}$ choices of $A$ and $B$ and is $0$ for the rest.
\end{proof}
\begin{corollary}\label{cor:parttowhole}
$||R_a|| \leq 2^{2a}\max_{A,B}{\{||R_{a,A,B}||\}}$
\end{corollary}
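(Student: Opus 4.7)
The plan is to deduce the corollary directly from the preceding proposition by the triangle inequality applied to the decomposition $\sum_{A,B} R_{a,A,B} = 2^{n-2a} R_a$.

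First I would rewrite the proposition as
$$R_a \;=\; 2^{-(n-2a)} \sum_{A,B} R_{a,A,B},$$
where the sum ranges over all ordered partitions $(A,B)$ of $[1,n]$ (so $A \sqcup B = [1,n]$). Applying the triangle inequality for the operator norm on both sides gives
$$\|R_a\| \;\leq\; 2^{-(n-2a)} \sum_{A,B} \|R_{a,A,B}\|.$$

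Next I would count the number of terms in the sum: each of the $n$ vertices is independently assigned to either $A$ or $B$, so there are exactly $2^n$ such ordered partitions. Bounding each summand by the maximum yields
$$\sum_{A,B} \|R_{a,A,B}\| \;\leq\; 2^n \cdot \max_{A,B} \|R_{a,A,B}\|.$$
Combining the two displays,
$$\|R_a\| \;\leq\; 2^{-(n-2a)} \cdot 2^n \cdot \max_{A,B} \|R_{a,A,B}\| \;=\; 2^{2a} \max_{A,B} \|R_{a,A,B}\|,$$
which is exactly the claim. There is no real obstacle here — the corollary is a one-line consequence of the proposition, with the only input being the count of ordered bipartitions; the heavy lifting is deferred to bounding $\|R_{a,A,B}\|$ for a fixed partition (which is where the constraint-graph machinery from Theorem \ref{thm:productgraphtonorm} presumably enters in the subsequent argument).
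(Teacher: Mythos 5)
Your proof is correct and matches the paper's own argument exactly: rewrite $R_a = 2^{2a-n}\sum_{A,B} R_{a,A,B}$, apply the triangle inequality, and bound the sum of $2^n$ terms by $2^n$ times the maximum. The only (cosmetic) difference is that you explicitly note the count of $2^n$ ordered bipartitions, which the paper leaves implicit.
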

\begin{proof}
Since $R_a = 2^{2a-n}\sum_{A,B}{R_{a,A,B}}$, $||R_a|| \leq 2^{2a-n}\sum_{A,B}{||R_{a,A,B}||} \leq 2^{2a}\max_{A,B}{\{||R_{a,A,B}||\}}$
\end{proof}
Now given $A$ and $B$, take 
\begin{align*}
p(G,2q) = tr((R^T_{a,A,B}R_{a,A,B})^q) &= \sum_{\{V_{ij}: i \in [1,q], j \in [1,2]\}}{\prod_{i=1}^{q}{R^T_{a,A,B}(V_{i1},V_{i2})R_{a,A,B}(V_{i2},V_{(i+1)1})}} \\
&= \sum_{{\{V_{ij}: i \in [1,q], j \in [1,2]\}: \atop \forall i, V_{i1} \subseteq B} \atop {\forall i, V_{i2} \subseteq A}}
{\prod_{i=1}^{q}{R_{a}(V_{i1},V_{i2})R_{a}(V_{i2},V_{(i+1)1})}}
\end{align*} 
where we take $V_{(q+1)1} = V_{11}$. 

To simplify this expression, rename the sets of vertices as follows.
\begin{definition} \ 
\begin{enumerate}
\item If $i \in [1,2q]$ and $i$ is odd then take $W_i = V_{(\frac{i+1}{2})1}$
\item If $i \in [1,2q]$ and $i$ is even then take $W_i = V_{(\frac{i}{2})2}$
\end{enumerate}
\end{definition}
We now have that 
\begin{equation}\label{eq:termab}
p(G,2q) = \sum_{{{\{W_i}: i \in [1,2q]\}: \atop \forall \text{ odd } i, W_{i} \subseteq B} \atop {\forall \text { even } i, W_{i} \subseteq A}}{\prod_{i=1}^{2q}{R_{a}(W_{i},W_{i+1})}},
\end{equation}
where we take $W_{2q + 1} = W_1$. 
To study which of these terms may have non-zero expectation, we first define a graph related to the corresponding constraint graph. 
\begin{definition}
Given a constraint graph $C$, let $H$ be a graph with two types of edges, product edges and constraint edges, such that 
\begin{enumerate}
\item $V(H) = \{W_i: i \in [1,2q]\}$
\item $E_{P}(H) = \{(W_i,W_{i+1}): i \in [1,2q]\}$
\item $E_{C}(H) = \{(W_i,W_j): i \neq j, W_i \cap W_j \neq \emptyset\}$
\end{enumerate}
\end{definition}
Now, each $R_{a}(W_{i},W_{i+1})$ is a random variable with expectation $0$, so if any $R_{a}(W_{i},W_{i+1})$ is independent from everything else, the product will have expectation $0$. Such dependencies arise due to the presence of edges from G occurring in (at least two) different ``elements" (say $(W_i, W_{i+1})$, $(W_j, W_{j+1})$ for $i \neq j$) of the term. Such repeated occurrences manifest in our constraint graphs (and the graph $H$ defined above) as (three or four) cycles in the graph, which we call {\sl independence breaking}. For a term to have non-zero expectation it must be that every element $(W_i,W_{i+1})$ is on some such cycle. This implies that each product $\prod_{i=1}^{2q}{R_{a}(W_{i},W_{i+1})}$ has zero expected value unless all of the product edges in the corresponding $H$ are part of independence-breaking cycles. This places restrictions on $H$ (see \lref{lem:connectedcomponentbound}) which in turn places restrictions on the constraint graph $C$, allowing us to use \tref{thm:productgraphtonorm}. We make these ideas precise below.
\begin{definition}
Given $q$ and $\{W_1, \cdots, W_{2q}\}$, we define $W_{i \pm 2q} = W_{i}$ for all $i \in [1,2q]$.
\end{definition}
\begin{definition} If $q \geq 2$, 
\begin{enumerate}
\item Define an independence breaking 3-cycle in $H$ to consist of product edges $(W_i,W_{i+1})$, $(W_{i+1},W_{i+2})$ 
and a constraint edge $((W_{i},j),(W_{i+2},j))$.
\item Define an independence breaking 4-cycle to consist of product edges $e_1 = (W_{i_1},W_{i_1+1})$, 
$e_2 = (W_{i_2},W_{i_2 \pm 1})$ and constraint edges $(W_{i_1},W_{i_2})$ and $(W_{i_1 + 1},W_{i_2 \pm 1})$.
\end{enumerate}
\end{definition}
\begin{proposition}
For all $W_1, \cdots, W_{2q}$ such that $W_i \subseteq B$ whenever $i$ is odd and $W_i \subseteq A$ whenever $i$ is even, if the corresponding $H$ has a product edge $(W_i,W_{i+1})$ which is not contained in any independence-breaking cycle then $E[\prod_{i=1}^{2q}{R_{a}(W_{i},W_{i+1})}] = 0$
\end{proposition}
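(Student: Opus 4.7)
The plan is to show that if the product edge $e^{\ast} = (W_i, W_{i+1})$ is not contained in any independence-breaking cycle, then the factor $R_a(W_i, W_{i+1})$ is independent of the remaining factors in the product, and hence the expectation of the whole product vanishes by the mean-zero property of $R_a$.

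The key observation is that, since $W_k \subseteq B$ for odd $k$ and $W_k \subseteq A$ for even $k$, each pair $(W_k, W_{k+1})$ is automatically disjoint, so $R_a(W_k, W_{k+1})$ is a deterministic function of only the $a^2$ potential $G$-edges in $W_k \times W_{k+1}$. Under $G \sim G(n, 1/2)$ these edge indicators are mutually independent Bernoulli$(1/2)$ variables, so two factors $R_a(W_i, W_{i+1})$ and $R_a(W_j, W_{j+1})$ are independent whenever their supporting bipartite edge sets $W_i \times W_{i+1}$ and $W_j \times W_{j+1}$ are disjoint as sets of pairs in $V(G) \times V(G)$.

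It therefore suffices to show that any non-empty overlap between the edge-sets of $e^{\ast}$ and another product edge $e' = (W_j, W_{j+1})$ forces $e^{\ast}$ to lie on an independence-breaking cycle. A shared edge $\{u,v\}$ requires its $A$-endpoint $u$ to lie in the intersection of the two $A$-endpoint sets and its $B$-endpoint $v$ to lie in the intersection of the two $B$-endpoint sets, which immediately produces constraint edges in $H$. I would split on cyclic distance and parity. When $j \equiv i \pm 1 \pmod{2q}$, the two product edges already share a vertex (say $W_{i+1}$), and the required intersection $W_i \cap W_{i+2} \neq \emptyset$ produces a constraint edge $(W_i, W_{i+2})$ that closes an independence-breaking $3$-cycle through $e^{\ast}$. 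When $|i - j| > 1$ cyclically, one gets two distinct constraint edges whose identity depends on the parity of $j - i$: if $i,j$ have the same parity, they are $(W_i, W_j)$ and $(W_{i+1}, W_{j+1})$, matching the $+1$ branch of the $4$-cycle definition with $i_1 = i, \, i_2 = j$; if $i,j$ have opposite parity, they are $(W_i, W_{j+1})$ and $(W_{i+1}, W_j)$, matching the $-1$ branch with $i_1 = i, \, i_2 = j+1$. Either way, $e^{\ast}$ sits on an independence-breaking $4$-cycle.

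Contrapositively, if $e^{\ast}$ appears on no independence-breaking cycle, then no other factor shares any underlying edge variables with $R_a(W_i, W_{i+1})$, so this factor is independent of the product of the remaining ones. The full expectation then factors as $\ex[R_a(W_i, W_{i+1})] \cdot \ex[\text{rest}] = 0 \cdot \ex[\text{rest}] = 0$. The only delicate point is the case analysis juggling parities and cyclic indices; once the bipartite edge structure is made explicit, no step presents a genuine obstacle.
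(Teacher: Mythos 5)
Your proof is correct and follows the same underlying idea as the paper's (very terse) argument: if $(W_i,W_{i+1})$ lies on no independence-breaking cycle, then no edge of $G$ between $W_i$ and $W_{i+1}$ appears in any other factor, so $R_a(W_i,W_{i+1})$ is independent of the rest and the expectation factors to zero. You simply supply the case analysis (adjacent versus non-adjacent, same versus opposite parity) that the paper leaves implicit, and the cases you identify do match the paper's definitions of independence-breaking $3$- and $4$-cycles.
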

\begin{proof}
If $(W_i,W_{i+1})$ is not contained in any independence-breaking cycle then no edge between $W_i$ and $W_{i+1}$ appears anywhere else so $R_{a}(W_{i},W_{i+1})$ is a random variable with expectation $0$ which is independent from everything else and thus $E[\prod_{i=1}^{2q}{R_{a}(W_{i},W_{i+1})}] = 0$.
\end{proof}
We now bound the number of connected components in $H$ with the following lemma.
\begin{lemma}\label{lem:connectedcomponentbound}
Let $q \geq 2$ and $H$ be a graph such that
\begin{enumerate}
\item Every product edge of $H$ is contained in an independence-breaking cycle.
\item Every constraint edge of $H$ is of the form $(W_i,W_{i+j})$ where $j$ is even.
\end{enumerate} 
Then, the number of connected components in the graph defined by only the constraint edges of $H$ is at most $q+1$.
\end{lemma}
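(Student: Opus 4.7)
The plan is to pass to a quotient and reduce the problem to a simple counting argument. Let $C$ denote the subgraph of $H$ consisting only of the constraint edges, and let $t$ be its number of connected components; this is what we want to bound by $q+1$. Form the quotient multigraph $G'$ on $t$ vertices by contracting each constraint-connected component to a single point, and project the $2q$ product edges of $H$ down to $G'$ (dropping any that become loops, if any, though in fact none will be loops by the parity hypothesis on the constraint edges). Because the product edges of $H$ form the single closed walk $W_1\to W_2\to\cdots\to W_{2q}\to W_1$ visiting every $W_i$, their images form a closed walk in $G'$ touching every vertex, so $G'$ is connected as a multigraph spanned by the images of the product edges.

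The key observation is that the independence-breaking hypothesis forces every product edge to have a \emph{twin}: another product edge mapping to the same unordered pair in $G'$. Indeed, fix a product edge $e=(W_i,W_{i+1})$ and an IB cycle containing it. If the cycle is a 3-cycle, it includes a partner product edge $e'=(W_{i+1},W_{i+2})$ (or $(W_{i-1},W_i)$) together with the constraint edge $(W_i,W_{i+2})$ (resp.\ $(W_{i-1},W_{i+1})$); that constraint edge collapses the two outer vertices to a single vertex of $G'$, so $e$ and $e'$ both become an edge from this collapsed vertex to the image of $W_{i+1}$ (resp.\ $W_i$). If the cycle is a 4-cycle with partner edge $e'=(W_{i_2},W_{i_2\pm 1})$ and constraint edges $(W_i,W_{i_2})$, $(W_{i+1},W_{i_2\pm1})$, the two constraint edges simultaneously identify both endpoints of $e$ with both endpoints of $e'$, so $e$ and $e'$ again project to the same edge of $G'$. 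Either way, $e$ admits a distinct twin product edge with the same image.

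Now a simple counting step finishes the job. Group the $2q$ product edges according to their image pair of endpoints in $G'$. Every group has size at least $2$ by the twin argument, so the number of distinct images, i.e.\ the number of edges of the underlying simple graph on $G'$, is at most $2q/2=q$. A connected (simple) graph on $t$ vertices must have at least $t-1$ edges, so $t-1\le q$, i.e.\ $t\le q+1$, as claimed.

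The main obstacle I expect is purely bookkeeping: carefully verifying that the definition of IB 3-cycles and IB 4-cycles really does produce a partner product edge distinct from $e$ with the claimed image identifications, and that the parity-of-$j$ condition on constraint edges is compatible with these IB cycles (it is, because the constraint edges arising in both IB cycle types join vertices whose indices differ by an even number). No finer structural property of the cycle or of $H$ is needed; in particular one does not need to construct an actual matching on product edges, only the weaker ``every edge has at least one twin'' statement used above.
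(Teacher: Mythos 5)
Your proof is correct, and it takes a genuinely different route from the paper's. The paper argues by induction on $q$: it shows that if some $W_i$ is isolated (with respect to constraint edges), then the independence-breaking hypothesis forces a constraint edge $(W_{i-1},W_{i+1})$, and one can delete $W_i$ and contract that constraint edge to obtain a smaller graph $H'$ still satisfying the hypotheses, reducing both $q$ and the component count by one; the base case $q=2$ is handled directly. Your argument is instead a direct counting argument on a quotient. The pivotal insight is the same in both proofs (independence-breaking cycles force constraint edges to ``pair up'' product edges, so there cannot be too few constraint edges), but your realization of it is cleaner: after collapsing constraint components, each IB 3-cycle or 4-cycle manifestly identifies the image of a product edge with that of a distinct partner, so the $2q$ product edges have at most $q$ distinct images in the quotient multigraph $G'$; since $G'$ is connected by the cyclic product walk, $t-1 \leq q$. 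Your approach avoids the case bookkeeping needed to verify the inductive hypotheses survive the deletion-contraction step, and it makes explicit why the parity condition on constraint edges matters (it prevents loops in $G'$ and forbids the degenerate $e_1 = e_2$ case in a 4-cycle, which you tacitly need for the twin to be a genuinely distinct edge — worth stating explicitly). The inductive argument in the paper is arguably closer in spirit to how one might tighten the bound further, but for the bound as stated your version is shorter and more transparent.
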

The intuitive idea behind this lemma is that if we add the constraint edges in the right order, every new constraint edge can put two product edges into independence breaking cycles. For example, a constraint edge between $W_{i-1}$ and $W_{i+1}$ puts the product edges $(W_{i-1},W_i)$ and $(W_i,W_{i+1})$ into an independence breaking 3-cycle. If we then add a constraint edge between $W_{i-2}$ and $W_{i+2}$, this puts the product edges $(W_{i-2},W_{i-1})$ and $(W_{i+1},W_{i+2})$ into an independence breaking 4-cycle. The final constraint edge can put 4 product edges into independence breaking cycles, so the number of constraint edges needed is $q-1$.

To make this argument work, we use an inductive proof. We note that if there is no $W_i$ which is isolated in $H$, we must have at least $q$ constraint edges. On the other hand, if there a $W_i$ which is isolated, there must be a constraint edge between $W_{i-1}$ and $W_{i+1}$. As noted above, this constraint edge puts the product edges $(W_{i-1},W_i)$ and $(W_i,W_{i+1})$ into an independence breaking 3-cycle. We take this to be the first constraint edge. We then argue that we can essentially delete $W_i$ and merge $W_{i-1}$ and $W_{i+1}$ which allows us to use the inductive hypothesis. We make these ideas rigorous below.
\begin{proof}[Proof of \lref{lem:connectedcomponentbound}]
We prove \lref{lem:connectedcomponentbound} by induction on $q$. The base case $q = 2$ is trivial, as we clearly need at least one constraint edge, so the number of connected components in $H$ is at most $3$. 
Now assume that $q = k \geq 3$ and the result is true for $q = k-1$.

First note that if there is no $W_i$ which is isolated (when looking only at constraint edges), then there are at most 
$q$ connected components in $H$. Thus, we may assume that $W_i$ is isolated for some $i$. Now 
note that for the product edge $(W_{i-1},W_{i})$, since $W_i$ is isolated, there are no independence breaking 3-cycles or 4-cycles where $W_i$ is the endpoint of a constraint edge. 
Thus, we must have that $(W_{i-1},W_{i})$ is part of an independence breaking 3-cycle consisting of $(W_{i-1},W_i)$, $(W_i,W_{i+1})$, and 
a constraint edge $(W_{i-1},W_{i+1})$.

Now form a new graph $H'$ as follows. Delete $W_i$ and contract the constraint edge between $W_{i-1}$ and $W_{i+1}$. More precisely, 
\begin{enumerate}
\item Take $V(H') = V(H) \setminus \{W_{i-1},W_i,W_{i+1}\} \cup \{U\}$
\item Take $E_{product}(H') = E_{product}(H) \setminus \{(W_{j},W_{j+1}): j \in [i-2,i+1]\} \cup \{(W_{i-2},U),(U,W_{i+2})\}$
\item Take \begin{align*}
E_{constraint}(H') &= E_{constraint}(H) \setminus \{(W_{i-1},W_{j}): (W_{i-1},W_{j}) \in E_{constraint}(H)\} \\
&\setminus \{(W_{i+1},W_{j}): (W_{i+1},W_{j}) \in E_{constraint}(H)\} \\
&\cup \{(U,W_{j}): (W_{i-1},W_{j}) \in E_{constraint}(H) \text{ or } (W_{i+1},W_{j}) \in E_{constraint}(H)\}
\end{align*}
\end{enumerate}
After doing this, rename $U$ as $W_{i-1}$ and rename each $W_{j}$ where $j > i+1$ as $W_{j-2}$. In going from $H$ to $H'$, we have effectively reduced both $q$ and the number of connected components by $1$. To complete the proof, we need to check that $H'$ satisfies the inductive hypotheses. Based on the reduction from $H$ to $H'$, we still have that every constraint edge is of the form $(W_i,W_{i+j})$ where $j$ is even. We check that every product edge is still part of an independence-breaking cycle case by case. 
\begin{enumerate}
\item Every independence-breaking cycle which did not contain the constraint edge $(W_{i-1},W_{i+1})$ in $H$ is preserved in $H'$ except that the vertices may have been renamed. The reason for this is that such an independence breaking cycle in $H$ cannot contain $W_i$ and can contain at most one of $\{W_{i-1},W_{i+1}\}$.
\item The independence-breaking 3-cycle in $H$ consisting of the product edges $(W_{i-1},W_{i})$, $(W_{i},W_{i+1})$ and the constraint edge $(W_{i-1},W_{i+1})$ is removed, but so are the product edges $(W_{i-1},W_{i})$ and $(W_{i},W_{i+1})$, so this is fine.
\item If we have an independence breaking 4-cycle in $H$ consisting of the product edges $(W_{i-2},W_{i-1})$, $(W_{i+1},W_{i+2})$ and the constraint edges $(W_{i-1},W_{i+1})$, $(W_{i-2},W_{i+2})$, this becomes an independence-breaking 3-cycle in $H'$ with product edges $(W_{i-2},W_{i-1})$, $(W_{i-1},W_{i})$ and a constraint edge $(W_{i-2},W_{i})$ (note that $W_{i-1}$ and $W_{i+1}$ are merged into $W_{i-1}$ in $H'$ and $W_{i+2}$ is renamed as $W_i$ in $H'$).
\end{enumerate}
$H'$ satisfies the inductive hypotheses, so looking only at the constraint edges, $H'$ has at most $(q-1)+1 = q$ connected components. $H$ has one more connected component than $H'$ (the vertex $W_i$ in $H$), so $H$ has at most $q+1$ connected components, as needed.
\end{proof}
The above lemma combined with \lref{lm:componentstounion} gives the following corollary.
\begin{corollary}
For all terms $\prod_{i=1}^{2q}R_a(W_i,W_{i+1})$ occurring in \eref{eq:termab} with nonzero expectation, $|\cup_{i=1}^{2q} W_i| \leq 2aq - (2q) + q + 1$. 
\end{corollary}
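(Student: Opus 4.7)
The plan is to combine the two preceding lemmas: use Lemma~\ref{lem:connectedcomponentbound} to control the number of connected components of the constraint graph on $\{W_1,\ldots,W_{2q}\}$, then feed that bound into Lemma~\ref{lm:componentstounion} to convert a bound on connected components into a bound on $|\cup_i W_i|$.

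First I would verify that the hypotheses of Lemma~\ref{lem:connectedcomponentbound} are satisfied. The assumption that every product edge is contained in an independence-breaking cycle follows from the proposition immediately preceding Lemma~\ref{lem:connectedcomponentbound}: if some product edge $(W_i, W_{i+1})$ were not in any independence-breaking cycle, then the factor $R_a(W_i,W_{i+1})$ would be independent of all the others and contribute a factor of $\ex[R_a(W_i,W_{i+1})] = 0$, contradicting nonzero expectation of the whole term. The second hypothesis—that every constraint edge has the form $(W_i, W_{i+j})$ with $j$ even—follows from the structure of the sum in~\eqref{eq:termab}: odd-indexed $W_i$ lie in $B$, even-indexed $W_i$ lie in $A$, and $A\cap B = \emptyset$, so a constraint edge $(W_i,W_j)$ (which requires $W_i\cap W_j\neq\emptyset$) forces $i$ and $j$ to have the same parity, hence $j-i$ is even.

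Once these hypotheses are checked, Lemma~\ref{lem:connectedcomponentbound} (for $q \geq 2$) gives that the constraint-edge subgraph of $H$ has at most $q+1$ connected components. Since this is exactly the constraint graph associated with the family $\{W_1,\ldots,W_{2q}\}$ in the sense of Lemma~\ref{lm:componentstounion}, I would apply that lemma with $\ell = 2q$, each $|W_i| = a$, and $t \leq q+1$ to conclude
\[
|\cup_{i=1}^{2q} W_i| \;\leq\; \sum_{i=1}^{2q} |W_i| - \ell + t \;\leq\; 2aq - 2q + (q+1) \;=\; 2aq - q + 1,
\]
which is exactly the stated bound. The case $q=1$ (not covered by Lemma~\ref{lem:connectedcomponentbound}) is handled separately but is immediate: $|W_1 \cup W_2| \leq 2a = 2a(1) - 1 + 1$.

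I do not expect significant obstacles here—this corollary is essentially a bookkeeping step that packages the combinatorial content of Lemma~\ref{lem:connectedcomponentbound} into a form usable by Theorem~\ref{thm:productgraphtonorm} (with $y = 1$ and $z = 1$). The only care required is matching up the two different notions of ``constraint graph'' used in the section (the graph $H$ with product and constraint edges, versus the bare constraint graph $C$ on the sets $\{W_i\}$) and checking that connected components of the constraint-edge subgraph of $H$ coincide with connected components of $C$. Once that identification is made, the chain of inequalities is immediate.
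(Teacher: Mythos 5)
Your proposal is correct and follows exactly the route the paper intends (the paper states the corollary without proof, simply remarking that it follows from combining Lemma~\ref{lem:connectedcomponentbound} and Lemma~\ref{lm:componentstounion}). You supply the details the paper leaves tacit: verifying hypothesis~(1) of Lemma~\ref{lem:connectedcomponentbound} via the preceding zero-expectation proposition, verifying hypothesis~(2) from the parity structure enforced by $A\cap B=\emptyset$ in \eqref{eq:termab}, identifying the constraint-edge subgraph of $H$ with the constraint graph $C$ of Lemma~\ref{lm:componentstounion}, and disposing of the $q=1$ case that the lemma excludes.
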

\ignore{We now use this bound on the number of connected components of $H$ to prove a corresponding bound on the number of connected components of $C$.
\begin{proposition}
For all constraint graphs $C$ and the corresponding graphs $H$, $|V(C)|$ minus the number of connected components of $C$ is at least as large as $|V(H)|$ minus the number of connected components of $H$.
\end{proposition}
\begin{proof}
Consider the process of starting with $V(H)$ and then adding edges of $H$ one at a time where each new edge reduces the number of connected components until this is no longer possible. The number of edges we add is $|V(H)|$ minus the number of connected components of $H$. Now note that if we start with $V(C)$ and add the corresponding edges in $C$, this already reduces the number of connected components by the number of edges we added and the result follows.
\end{proof}
\begin{corollary}
For all terms in \eref{eq:termab} with nonzero expectation, the corresponding constraint graph $C$ has at most $2aq - q + 1$ connected components 
\end{corollary}}
We can now prove \tref{th:rabounds}
\begin{proof}[Proof of \tref{th:rabounds}]
We can now apply \tref{thm:productgraphtonorm} with $y = 1$, and $z = 1$ by the above corollary. Every entry of $R_{a,A,B}$ has magnitude at most $2^{a^2}$ so we can take $B = 2^{a^2}$. By \tref{thm:productgraphtonorm}, if $n \geq 10$, for all $A$ and $B$, for every $\epsilon \in (0,1)$, 
$$\pr\left[||R_{a,A,B}|| > \frac{2^{a^2}}{a!}\left(2ea\Big(\frac{\ln{n} - \ln{\epsilon}}{2} + 1\Big)\right)
n^{a-1/2}\right] = 
\pr\left[||R_{a,A,B}|| > \frac{2^{a^2}ea}{a!}(\ln{n} - \ln{\epsilon} + 2)
n^{a - \frac{1}{2}}\right] < \epsilon$$
Since $4\ln{n} \geq e(\ln{n} + 2)$ for all $n \geq 100$ and $a! \geq a$, we have that for all $n \geq 100$, for all $A$ and $B$ and all $\epsilon \in (0,1)$, 
$$\pr\left[||R_{a,A,B}|| > 2^{a^2+2}\ln{(\frac{n}{\epsilon})}n^{a - \frac{1}{2}}\right] < \epsilon$$
Now by \cref{cor:parttowhole}, $||R_a|| \leq 2^{2a}\max_{A,B}{\{||R_{a,A,B}||\}}$ so $$\pr\left[||R_{a}|| > 2^{a^2 + 2a + 2}\ln{(\frac{n}{\epsilon})}n^{a - \frac{1}{2}}\right] < \epsilon$$
\end{proof}




\section{Concentration bounds for number of cliques and $deg_G(I)$}\label{sec:degreevariancecalculations}
We now prove large deviation bounds for $deg_G(\;)$ leading to \clref{clm:degreebound} which we state below in a more precise form.

\begin{theorem}\label{th:degreeconcentrationfinal}
If $n \geq 10$, and $\epsilon \in (0,1)$, then for all $I \subseteq [n]$, with $|I| = i \leq 2r$, 
$$\pr\sbkets{\left|deg_G(I) - 2^{-\binom{2r}{2} + \binom{i}{2}} \cdot \binom{n-i}{2r-i}\right| > 2(\ln(128/\epsilon))^2 n^{2r-i-1/2} \,\left|\right.\, \text{ ($I$ is a clique)}} \leq \epsilon.$$
\end{theorem}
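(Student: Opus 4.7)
The plan is to express $\deg_G(I)$, conditioned on $I$ being a clique, as a sum of indicator random variables, recover the claimed mean by direct computation, and then control the fluctuation by the moment method, ultimately invoking Theorem~\ref{thm:productgraphtonorm}. Conditioning on $I$ clique fixes the $\binom{i}{2}$ edges inside $I$ while leaving every other edge an independent $\mathrm{Bernoulli}(1/2)$. For each $V \in \binom{[n]\setminus I}{2r-i}$, let $Z_V = \mathbf{1}[\mathcal{E}(V) \cup \mathcal{E}(V,I) \subseteq G]$, so that $\deg_G(I) = \sum_V Z_V$. A direct calculation gives $\mathbb{E}[Z_V] = 2^{-\binom{2r-i}{2} - i(2r-i)} = 2^{-\binom{2r}{2}+\binom{i}{2}}$, and summing over the $\binom{n-i}{2r-i}$ choices of $V$ recovers the stated mean. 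Set $Y_V = Z_V - \mathbb{E}[Z_V]$ and $h(G) = \sum_V Y_V = \deg_G(I) - \mu$.

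I would bound $|h(G)|$ by estimating $\mathbb{E}[h(G)^{2q}] = \sum_{V_1,\ldots,V_{2q}} \mathbb{E}[\prod_j Y_{V_j}]$ and then applying Markov. The key structural observation is that if some $V_j$ is vertex-disjoint from every other $V_k$, then $Y_{V_j}$ depends only on edges incident to $V_j$, none of which appear in any other $Y_{V_k}$; so $Y_{V_j}$ is independent of the remaining factors and, since $\mathbb{E}[Y_{V_j}]=0$, the whole product has mean zero. Equivalently, any term with nonzero expectation has no isolated vertex in the constraint graph $C$ on $\{V_1,\ldots,V_{2q}\}$. Every component of $C$ then has size at least $2$, so $C$ has at most $q$ components, and Lemma~\ref{lm:componentstounion} yields $|\cup_j V_j| \leq 2q(2r-i) - q$.

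With this bound in hand, I would apply Theorem~\ref{thm:productgraphtonorm} to $p(G,2q) := h(G)^{2q} \geq 0$ with $a = 2r-i$, $y = 1$, a small $z$, and $B = 1$ (using $|Y_V| \leq 1$). The theorem delivers $|h(G)| \leq O_r(\mathrm{polylog}(1/\epsilon)) \cdot n^{2r-i-1/2}$ with probability at least $1-\epsilon$; absorbing the $r$-dependent constants into the generous $(\ln(128/\epsilon))^2$ slack in the statement then yields the claim. The main obstacle is cleanly justifying the "isolated vertex forces zero expectation" step, which requires unpacking that each $Y_V$ is a shifted monomial in mutually independent edge indicators and that the conditioning on $I$ being a clique does not couple any two $Y_V$'s with disjoint vertex support. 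Once this independence argument is in place, the combinatorial machinery of Section~\ref{sec:localrand} applies essentially verbatim, and the rest is routine bookkeeping of constants.
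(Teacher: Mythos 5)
Your proposal is correct and takes a genuinely different route from the paper. The paper's proof is two-step: it lets $A_I$ be the (random) set of common neighbors of $I$, observes that conditioned on $A_I$ the quantity $\deg_G(I)$ is the number of $(2r-i)$-cliques inside $A_I$, applies \tref{th:cliquelargedeviation} to control that count around $2^{-\binom{2r-i}{2}}\binom{|A_I|}{2r-i}$, and then applies McDiarmid's inequality to show $\binom{|A_I|}{2r-i}$ concentrates around $2^{-i(2r-i)}\binom{n-i}{2r-i}$, combining the two errors at the end. You instead apply \tref{thm:productgraphtonorm} directly to $h(G) = \deg_G(I)-\mu$ with a single constraint-graph argument. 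The key structural distinction is that your $Y_V$ depends on the cross-edges $\mathcal{E}(V,I)$ as well as the internal ones, so two sets sharing even a single vertex are dependent and the constraint graph must use single-vertex overlap; this gives $y=1$, hence $n^{a-1/2}$ in one shot. (Contrast with the paper's \tref{th:cliquelargedeviation}, where the $c_V$'s depend only on internal edges, so dependence requires sharing two vertices, giving $y=2$ and $n^{a-1}$; the paper then recovers the $n^{a-1/2}$ overall only from the McDiarmid step.) Your single-step argument is arguably cleaner and even gives a linear rather than quadratic dependence on $\ln(1/\epsilon)$, comfortably inside the stated bound.

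Two small points to tighten. First, \tref{thm:productgraphtonorm} is stated for $G \lfta G(n,1/2)$; you are working in the measure conditioned on $I$ being a clique, so you should note that the edges your $Y_V$'s depend on are disjoint from $\mathcal{E}(I)$, hence the conditioning does not change their joint law and the theorem applies verbatim. Second, for $a=2r-i \in \{0\}$ the theorem does not apply (and the constraint-graph argument degenerates), but in that case $\deg_G(I)$ is deterministic given the conditioning and the claim is trivial, so it is worth a sentence to dispose of it.
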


To prove the claim we first show a similar concentration bound for the number of cliques of a certain size in $G$.  While similar results appear in the literature, see for instance \cite{Rucinski88, Vu01, JLR11book}, we give a short direct proof based on \tref{thm:productgraphtonorm}. 
\begin{definition}
For a graph $G$, define $N_a(G)$ to be the number of $a$-cliques in $G$.
\end{definition}
\begin{theorem}\label{th:cliquelargedeviation}
For all $a$, for all $n \geq 10$ and $\epsilon \in (0,1)$, $E[N_a(G)] = 2^{-\binom{a}{2}}{\binom{n}{a}}$ and 
$$\pr\left[|N_a(G) - E[N_a(G)]| > (\ln(64/\epsilon))^2 \cdot n^{a-1}\right] < \epsilon.$$
\end{theorem}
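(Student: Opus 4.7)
The plan is to apply Theorem~\ref{thm:productgraphtonorm} with $p(G,2q) := (N_a(G) - \ex[N_a(G)])^{2q}$. The mean $\ex[N_a(G)] = 2^{-\binom{a}{2}}\binom{n}{a}$ is immediate by linearity. Writing $Y_S := \mathbf{1}[S\text{ is a clique}] - 2^{-\binom{a}{2}}$, we expand
\[
p(G,2q) = \sum_{S_1, \ldots, S_{2q} \in \binom{[n]}{a}} \prod_{i=1}^{2q} Y_{S_i},
\]
so each summand serves as a term $f(G,\{S_1,\ldots,S_{2q}\})$. Since $|Y_S| \leq 1$, we may take $B = 1$.

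The crucial step is to verify the hypothesis of Theorem~\ref{thm:productgraphtonorm} with $y = 2$ and $z = 0$, which yields the desired $n^{a-1}$ scaling. Expanding $Y_S$ in the edge-Rademacher basis with $\sigma_e := 2\mathbf{1}[e \in G] - 1$ and $\chi_F := \prod_{e \in F}\sigma_e$ gives $Y_S = 2^{-\binom{a}{2}} \sum_{\emptyset \neq F \subseteq \binom{S}{2}} \chi_F$, and hence
\[
\ex\!\left[\prod_{i=1}^{2q} Y_{S_i}\right] = 2^{-2q\binom{a}{2}} \cdot \bigl|\{(F_1,\ldots,F_{2q}) : \emptyset \neq F_i \subseteq \tbinom{S_i}{2},\ F_1 \triangle \cdots \triangle F_{2q} = \emptyset\}\bigr|.
\]
For this to be nonzero, every edge occurring in any $F_i$ must appear an even number of times across the tuple. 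Picking any $e \in F_i$, some other $F_j$ must contain $e$, forcing both endpoints of $e$ into $S_i \cap S_j$, so $|S_i \cap S_j| \geq 2$. Thus in every nonvanishing term each $S_i$ shares a full edge with some other $S_j$.

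This stronger pairwise intersection refines Lemma~\ref{lm:componentstounion}: if the graph on $\{S_1,\ldots,S_{2q}\}$ whose edges are pairs with $|S_i \cap S_j| \geq 2$ has $t$ connected components, then selecting one representative per component contributes at most $ta$ vertices, and each of the remaining $2q - t$ sets shares at least two vertices with something added earlier and hence contributes at most $a - 2$ new vertices. Since no $S_i$ is isolated, each component contains $\geq 2$ sets, so $t \leq q$, and
\[
\bigl|\cup_i S_i\bigr| \leq ta + (2q - t)(a - 2) = 2aq - 4q + 2t \leq 2aq - 2q,
\]
which is exactly the hypothesis of Theorem~\ref{thm:productgraphtonorm} with $y = 2$, $z = 0$. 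Feeding $a$, $y = 2$, $z = 0$, $B = 1$ into the theorem yields a tail bound of the form $\tfrac{1}{a!} \bkets{2ea\bkets{\tfrac{\ln(1/\epsilon)}{4} + 1}}^{2} n^{a-1}$, and the stated inequality follows by using $a^2/a! \leq 2$ for $a \geq 2$ (the case $a = 1$ is trivial since $N_1(G) = n$ is deterministic) and absorbing the resulting absolute constants into $\ln(64/\epsilon)$. The main conceptual obstacle is recognizing that the correct exponent is $y = 2$ rather than $y = 1$: the parity argument on Rademacher characters is what forces interacting sets to share a full edge, not merely a single vertex, and without it one would only obtain $n^{a-1/2}$.
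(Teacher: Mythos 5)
Your proof is correct and follows essentially the same route as the paper's: decompose $N_a(G) - \ex[N_a(G)]$ into centered clique indicators, argue that nonvanishing expectation forces each $S_i$ to share two vertices with some other $S_j$, bound $|\cup_i S_i| \le 2aq - 2q$ via connected components of the intersection graph, and apply Theorem~\ref{thm:productgraphtonorm} with $y=2$, $z=0$, $B=1$. The only real addition is that you explicitly justify the two-vertex-overlap claim via the edge-Rademacher expansion, whereas the paper states it without proof; the slight looseness in ``absorbing absolute constants'' into $\ln(64/\epsilon)$ mirrors a small discrepancy already present between the statement of Theorem~\ref{thm:productgraphtonorm} (which has $2ea$) and the way the paper invokes it here (with $ea$).
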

\begin{proof}
The first part of the theorem is trivial so we focus on the second part. Given a set of vertices $V$ of size $a$, define $c_V$ to be $1-2^{-\binom{a}{2}}$ if $V$ is a clique and $-2^{-\binom{a}{2}}$ otherwise. Then, 
$$N_a(G) - E[N_a(G)] = \sum_{V:|V| = a}{c_V}.$$
Now let's consider the function 
$p(G,2q) = (\sum_{V:|V| = a}{c_V})^{2q} =\sum_{W_1, \cdots, W_{2q}}{\prod_{i=1}^{2q}{c_{W_i}}}$.

Note that $E[\prod_{i=1}^{2q}{c_{W_i}}] = 0$ unless each set of vertices $W_i$ has two vertices in common with a different set of vertices $W_j$. Now consider a graph $C_2$ where the vertices are $\{W_1,\ldots,W_{2q}\}$ and an edge between $W_i, W_j$ if $|W_i \cap W_j| \geq 2$. Let $t$ be the number of connected components in $C_2$. We claim that $|\cup_i W_i| \leq 2aq - 4q + 2t$. For, as in the proof of \lref{lm:componentstounion}, first consider elements $W_{i_1},\ldots,W_{i_t}$ belonging to the $t$ different connected components. Now, add the remaining elements of $\{W_1,\ldots,W_{2q}\}$ so that each new element is adjacent to at least one of the previously added sets. When doing so, each step can increase the size of the union by at most $a-2$. Therefore, the size of the union is at most $a t + (a-2)(2q-t) = 2aq - 4q + 2t$. On the other hand, each connected component in $C_2$ must have at least two vertices, so $t \leq q$.  Therefore, $|\cup_i W_i| \leq 2aq - 2q$. 

We can now apply \tref{thm:productgraphtonorm} with $y=2$, $z=0$ and $B=1$ so that for $n \geq 10$, and $\epsilon \in (0,1)$, 
$$\pr\left[|N_a(G) - E[N_a(G)]| > \frac{1}{a!} \cdot \left(ea\Big(\frac{-\ln{\epsilon}}{4} + 1\Big)\right)^2 \cdot n^{a-1}\right] < \epsilon.$$
Using the facts that $e^2 < 8$ and $\frac{m^2}{m!} \leq 2$ for all nonnegative integers $m$, we have that 
$$\pr\left[|N_a(G) - E[N_a(G)]| > (\ln(64/\epsilon))^2 \cdot n^{a-1}\right] < \epsilon.$$
\end{proof}

We are now ready to prove \tref{th:degreeconcentrationfinal}. The idea is as follows. Let $A_I$ be the collection of vertices which are adjacent to all the vertices in $I$. Then, conditioned on $I$ being a clique, $deg_G(I)$ is just the number of cliques of size $2r-i$ in the vertices $A_I$ which is primarily determined by $|A_I|$. This is because the edges between vertices of $A_I$ are independent of the edges involving vertices in $I$ so that we can apply \tref{th:cliquelargedeviation} to $A_I$. 

\begin{proof}[Proof of \tref{th:degreeconcentrationfinal}]
Let $A_I$ be as above and let us condition on $I$ being a clique. Then, $deg_G(I)$ is just the number of cliques of size $2r-i$ among the vertices in $A_I$. Therefore, by \tref{th:cliquelargedeviation}, with probability at least $1-\epsilon/2$, 
$$\left| deg_G(I) - 2^{-\binom{2r-i}{2}} \binom{|A_I|}{2r-i} \right| \leq (\ln(128/\epsilon))^2 \cdot n^{2r-i-1}.$$

We next argue that $\binom{|A_I|}{2r-i}$ is concentrated around its mean. For $j \notin I$, let $X_j$ be the indicator random variable that is $1$ if the $j$'th vertex is adjacent to all the vertices in $I$ and $0$ otherwise. Then, $|A_I| = \sum_{j \notin I} X_j$ and
$$\binom{|A_I|}{2r-i} = \sum_{J \subseteq [n] \setminus I, |J| = 2r-i} \prod_{j \in J} X_j \equiv f(\{X_j: j \notin I\}).$$
Observe that the random variables $X_j$ are independent of each other and that 
$$\ex[f(\{X_j:J \notin I\})] = 2^{-i(2r-i)} \binom{n-i}{2r-i}.$$
We next apply McDiarmid's inequality to the function $f$. Note that changing any single coordinate of the inputs to $f$ can change its value by at most $n^{2r-i-1}$. Therefore, by \tref{th:mcdiarmid}, with probability at least $1-\epsilon/2$, 
$$\left|\binom{|A_I|}{2r-i} - 2^{-i(2r-i)} \binom{n-i}{2r-i} \right| \leq \sqrt{\ln(4/\epsilon)} \cdot n^{2r-i-.5}.$$
Combining the above equations, we get that with probability at least $1-\epsilon$,
\begin{multline*}
\left|deg_G(I) - 2^{-\binom{2r-i}{2} - i(2r-i)} \binom{n-i}{2r-i}\right| \leq (\ln(128/\epsilon))^2 \cdot n^{2r-i-1} + 2^{-\binom{2r-i}{2}} \cdot \sqrt{\ln(4/\epsilon)} \cdot n^{2r-i-.5} \leq \\
2 \ln((128/\epsilon)^2) \cdot n^{2r-i-.5}.
\end{multline*}
The theorem now follows as $\binom{2r-i}{2} + i(2r-i) = \binom{2r}{2} - \binom{i}{2}$. 
\end{proof}

\section{Conclusion and future work}
In this work we showed a lower bound for the maximum clique problem on random $G(n,1/2)$ graphs in the $\sos$ hierarchy and positivstellensatz proof system. Besides the specific application to clique lower bounds, the PSD'ness of the matrix $\mm$ from \eref{eq:mainmatrixdef} seems to carry further information that could be potentially useful elsewhere, perhaps for studying various sub-graph statistics. Further, the arguments related to association schemes and bounding the norm of locally random matrices could also be useful elsewhere, especially for other $\sos$ hierarchy lower bounds. One natural and interesting candidate is the densest subgraph problem.

For planted clique itself, the most obvious open problem is to tighten the gap between the current upper bound of $O(\sqrt{n}/2^r)$ and our lower bound of $2^{-O(r)}(\sqrt{n}/\log n)^{1/r}$ for $r$ rounds of the SOS hierarchy. In particular, can a constant number of rounds of $\sos$ beat the square-root barrier and identify planted cliques of size $o(n^{1/2})$? 
Kelner\footnote{Personal comminication} showed that our dual certificate $\mm$ actually is not PSD for $k$ roughly $O(n^{1/(r+1)})$. Thus one needs to come up with a different dual certificate to approach the upper bound of $\sqrt{n}$ even for $r=2$. 

\paragraph{Acknowledgements} We thank Boaz Barak, Siu-on Chan, Jonathan Kelner, Robert Krauthgamer, James Lee, Nati Linial, David Steurer, Madhu Sudan and Amir Yehudayoff for several useful comments.

\bibliographystyle{alpha}
\bibliography{proofcomplexity}

\newcommand{\etalchar}[1]{$^{#1}$}
\begin{thebibliography}{AAK{\etalchar{+}}07}

\bibitem[AAK{\etalchar{+}}07]{AlonAKMRX07}
Noga Alon, Alexandr Andoni, Tali Kaufman, Kevin Matulef, Ronitt Rubinfeld, and
  Ning Xie.
\newblock Testing k-wise and almost k-wise independence.
\newblock In {\em Proceedings of the 39th Annual {ACM} Symposium on Theory of
  Computing, San Diego, California, USA, June 11-13, 2007}, pages 496--505,
  2007.

\bibitem[ABBG10]{AroraBBG10}
Sanjeev Arora, Boaz Barak, Markus Brunnermeier, and Rong Ge.
\newblock Computational complexity and information asymmetry in financial
  products.
\newblock In {\em ICS}, pages 49--65, 2010.

\bibitem[ABS10]{AroraBS10}
Sanjeev Arora, Boaz Barak, and David Steurer.
\newblock Subexponential algorithms for unique games and related problems.
\newblock In {\em FOCS}, pages 563--572, 2010.

\bibitem[ABW10]{AppleBW10}
Benny Applebaum, Boaz Barak, and Avi Wigderson.
\newblock Public-key cryptography from different assumptions.
\newblock In {\em Proceedings of the forty-second ACM symposium on Theory of
  computing}, pages 171--180. ACM, 2010.

\bibitem[AKS98]{AlonKS98}
Noga Alon, Michael Krivelevich, and Benny Sudakov.
\newblock Finding a large hidden clique in a random graph.
\newblock {\em Random Struct. Algorithms}, 13(3-4):457--466, 1998.

\bibitem[Art27]{Artin27}
Emil Artin.
\newblock Uber die zerlegung deﬁniter funktionen in quadrate.
\newblock {\em Abhandlungen aus dem Mathematischen Seminar der Universitat
  Hamburg}, 5(1):100--115, 1927.

\bibitem[BBH{\etalchar{+}}12]{BarakBHKSZ12}
Boaz Barak, Fernando G. S.~L. Brand{\~a}o, Aram~Wettroth Harrow, Jonathan~A.
  Kelner, David Steurer, and Yuan Zhou.
\newblock Hypercontractivity, sum-of-squares proofs, and their applications.
\newblock In {\em STOC}, pages 307--326, 2012.

\bibitem[BCR98]{BCR98}
Jacek Bochnak, Michel Coste, and Marie-Francoise Roy.
\newblock {\em Real algebraic geometry}.
\newblock Springer, 1998.

\bibitem[BCV{\etalchar{+}}12]{BhaskaraCVGZ12}
Aditya Bhaskara, Moses Charikar, Aravindan Vijayaraghavan, Venkatesan
  Guruswami, and Yuan Zhou.
\newblock Polynomial integrality gaps for strong {SDP} relaxations of densest
  {\it k}-subgraph.
\newblock In {\em SODA}, pages 388--405, 2012.

\bibitem[BR13]{BerthetR13}
Q.~Berthet and P.~Rigollet.
\newblock Complexity theoretic lower bounds for sparse principal component
  detection.
\newblock {\em J. Mach. Learn. Res., W and CP}, 30:1046--1066 (electronic),
  2013.

\bibitem[BRS11]{BarakRS11}
Boaz Barak, Prasad Raghavendra, and David Steurer.
\newblock Rounding semidefinite programming hierarchies via global correlation.
\newblock In {\em FOCS}, pages 472--481, 2011.

\bibitem[DGGP14]{DGY14}
YAEL DEKEL, ORI GUREL-GUREVICH, and YUVAL PERES.
\newblock Finding hidden cliques in linear time with high probability.
\newblock {\em Combinatorics, Probability and Computing}, 23:29--49, 1 2014.

\bibitem[DM15]{DeshpandeM15}
Yash Deshpande and Andrea Montanari.
\newblock Improved sum-of-squares lower bounds for hidden clique and hidden
  submatrix problems.
\newblock {\em CoRR}, abs/1502.06590, 2015.

\bibitem[FGR{\etalchar{+}}13]{FeldmanGRVX13}
Vitaly Feldman, Elena Grigorescu, Lev Reyzin, Santosh Vempala, and Ying Xiao.
\newblock Statistical algorithms and a lower bound for detecting planted
  cliques.
\newblock In {\em STOC}, pages 655--664, 2013.

\bibitem[FK00]{FeigeK00}
Uriel Feige and Robert Krauthgamer.
\newblock Finding and certifying a large hidden clique in a semirandom graph.
\newblock {\em Random Struct. Algorithms}, 16(2):195--208, 2000.

\bibitem[FK03]{FeigeK03}
Uriel Feige and Robert Krauthgamer.
\newblock The probable value of the {L}ov{\'a}sz--{S}chrijver relaxations for
  maximum independent set.
\newblock {\em SIAM J. Comput.}, 32(2):345--370, 2003.

\bibitem[FK08]{FK08}
Alan~M. Frieze and Ravi Kannan.
\newblock A new approach to the planted clique problem.
\newblock In {\em {IARCS} Annual Conference on Foundations of Software
  Technology and Theoretical Computer Science, {FSTTCS} 2008, December 9-11,
  2008, Bangalore, India}, pages 187--198, 2008.

\bibitem[God]{GodsilNotes2}
Chris Godsil.
\newblock Association schemes.
\newblock Lecture Notes available at
  http://quoll.uwaterloo.ca/mine/Notes/assoc1.pdf.

\bibitem[Gri01a]{Grigoriev01}
Dima Grigoriev.
\newblock Complexity of positivstellensatz proofs for the knapsack.
\newblock {\em Computational Complexity}, 10(2):139--154, 2001.

\bibitem[Gri01b]{Grigoriev01b}
Dima Grigoriev.
\newblock Linear lower bound on degrees of positivstellensatz calculus proofs
  for the parity.
\newblock {\em Theor. Comput. Sci.}, 259(1-2):613--622, 2001.

\bibitem[GS11]{GuruswamiS11}
Venkatesan Guruswami and Ali~Kemal Sinop.
\newblock Lasserre hierarchy, higher eigenvalues, and approximation schemes for
  graph partitioning and quadratic integer programming with {PSD} objectives.
\newblock In {\em FOCS}, pages 482--491, 2011.

\bibitem[GV01]{GrigorievV01}
Dima Grigoriev and Nicolai Vorobjov.
\newblock Complexity of null-and positivstellensatz proofs.
\newblock {\em Ann. Pure Appl. Logic}, 113(1-3):153--160, 2001.

\bibitem[GVL96]{Golub1996}
G.H. Golub and C.F. Van~Loan.
\newblock {\em Matrix Computations}.
\newblock Johns Hopkins Studies in the Mathematical Sciences. Johns Hopkins
  University Press, 1996.

\bibitem[Jer92]{Jerrum92}
Mark Jerrum.
\newblock Large cliques elude the metropolis process.
\newblock {\em Random Struct. Algorithms}, 3(4):347--360, 1992.

\bibitem[JLR11]{JLR11book}
S.~Janson, T.~Luczak, and A.~Rucinski.
\newblock {\em Random Graphs}.
\newblock Wiley Series in Discrete Mathematics and Optimization. Wiley, 2011.

\bibitem[Kar76]{Karp76}
R.~M. Karp.
\newblock Probabilistic analysis of some combinatorial search problems.
\newblock {\em In: Algorithms and Complexity: New Directions and Recent
  Results}, pages 1--19, 1976.

\bibitem[Kri64]{Krivine64}
Jean-Louis Krivine.
\newblock Anneaux preordonn´es.
\newblock {\em Journal d'Analyse Mathematique}, 12(1):307--326, 1964.

\bibitem[Kuc95]{Kucera95}
Ludek Kucera.
\newblock Expected complexity of graph partitioning problems.
\newblock {\em Discrete Applied Mathematics}, 57(2-3):193--212, 1995.

\bibitem[Las01]{Lasserre01}
Jean~B. Lasserre.
\newblock Global optimization with polynomials and the problem of moments.
\newblock {\em SIAM Journal on Optimization}, 11(3):796--817, 2001.

\bibitem[LS91]{LovaszS91}
L.~Lov\'asz and A.~Schrijver.
\newblock Cones of matrices and set-functions and $0$-$1$ optimization.
\newblock {\em SIAM Journal on Optimization}, 1(2):166--190, 1991.

\bibitem[MW13]{MekaW13}
Raghu Meka and Avi Wigderson.
\newblock Association schemes, non-commutative polynomial concentration, and
  sum-of-squares lower bounds for planted clique.
\newblock {\em CoRR}, abs/1307.7615, 2013.

\bibitem[OZ13]{ODonnellZ13}
Ryan O'Donnell and Yuan Zhou.
\newblock Approximability and proof complexity.
\newblock In {\em SODA}, pages 1537--1556, 2013.

\bibitem[Par00]{Parrilo00}
Pablo Parrilo.
\newblock Structured semidefinite programs and semialgebraic geometry methods
  in robustness and optimization, 2000.
\newblock PhD thesis, California Institute of Technology.

\bibitem[PS{\etalchar{+}}00]{Pevzner00}
Pavel~A Pevzner, Sing-Hoi Sze, et~al.
\newblock Combinatorial approaches to finding subtle signals in dna sequences.
\newblock In {\em ISMB}, volume~8, pages 269--278, 2000.

\bibitem[Put93]{Putinar93}
Mihai Putinar.
\newblock Positive polynomials on compact semi-algebraic sets.
\newblock {\em Indiana University Mathematics Journal}, 42(3):969--984, 1993.

\bibitem[Ruc88]{Rucinski88}
Andrzej Ruciński.
\newblock When are small subgraphs of a random graph normally distributed?
\newblock {\em Probability Theory and Related Fields}, 78(1):1--10, 1988.

\bibitem[SA90]{SheraliA90}
H.~Sherali and W.~Adams.
\newblock A hierarchy of relaxations between the continuous and convex hull
  representations for zero-one programming problems.
\newblock {\em SIAM Journal on Discrete Mathematics}, 3(3):411--430, 1990.

\bibitem[Sch91]{Schmudgen91}
Konrad Schmudgen.
\newblock The k-moment problem for compact semi-algebraic sets.
\newblock {\em Mathematische Annalen}, 289(1):203--206, 1991.

\bibitem[Sch08]{Schoenebeck08}
Grant Schoenebeck.
\newblock Linear level {L}asserre lower bounds for certain k-{CSP}s.
\newblock In {\em FOCS}, pages 593--602, 2008.

\bibitem[Ste73]{Stengle73}
Gilbert Stengle.
\newblock A nullstellensatz and a positivstellensatz in semialgebraic geometry.
\newblock {\em Mathematische Annalen}, 207(2):87--97, 1973.

\bibitem[Tul09]{Tulsiani09}
Madhur Tulsiani.
\newblock {CSP} gaps and reductions in the {L}asserre hierarchy.
\newblock In {\em STOC}, pages 303--312, 2009.

\bibitem[Ver]{Vershynin06}
Roman Vershynin.
\newblock Lecture 6: Norm of a random matrix.
\newblock Lecture Notes on Non-Asymptotic Random Matrix Theory - Available
  online.

\bibitem[vLW01]{VanLintW01}
J.H. van Lint and R.M. Wilson.
\newblock {\em A Course in Combinatorics}.
\newblock Cambridge University Press, 2001.

\bibitem[Vu01]{Vu01}
Van Vu.
\newblock A large deviation result on the number of small subgraphs of a random
  graph.
\newblock {\em Combinatorics, Probability and Computing}, 10:79--94, 1 2001.

\end{thebibliography}
\section{Hierarchy Gaps and Positivstellensatz Refutations}\label{sec:psdtogaps}
For a detailed discussion of the hierarchies and $\psd$-refutations we refer the reader to the discussions in \cite{ODonnellZ13}. The basic principle is that, typically, $\psd$-refutations are more robust and stronger than the hierarchy formulations. 

The $\sos$ (or Lasserre) relaxation for maximum clique is stated in \fref{fig:lasserresdp} (cf.~\cite{Tulsiani09}). Although, the formulation itself is not in terms of an SDP, it is a standard fact that as the program only involves inner products of vectors, the optimization can be done by semi-definite programming. 
\begin{figure}\label{fig:lasserresdp}
  \begin{mybox}
$\sos$-relaxation for Max-Clique. Input: Graph $G = (V,E)$, $r$ - number of rounds. Variables of the SDP are vectors $\bm{U}_S$, where $S \subseteq [n], |S| \leq r$.
$$\text{maximize } \sum_{i \in V} \|\bm{U}_{\{i\}}\|_2^2.$$
\begin{align*}
  \text{such that }\qquad \iprod{\bm{U}_{\{i\}}}{\bm{U}_{\{j\}}} &= 0, &&\forall i,j,\;\; \{i,j\} \notin E\\
\iprod{\bm{U}_{S_1}}{\bm{U}_{S_2}} &= \iprod{\bm{U}_{S_3}}{\bm{U}_{S_4}},  &&S_1 \cup S_2 = S_3 \cup S_4, |S_1 \cup S_2| \leq r\\
\iprod{\bm{U}_{S_1}}{\bm{U}_{S_2}} &\in [0,1], &&|S_1|, |S_2| \leq r\\
 \|\bm{U}_\emptyset\|_2^2 &= 1&&
\end{align*}
  \end{mybox}
\caption{$r$-round $\sos$-relaxation for Maximum Clique}
\end{figure}
The connection between \fref{fig:lasserresdp} and $\psd$-refutations comes from the following straightforward lemma stating that a certificate for $\psd$-refutations is simply a primal solution to the standard $r$-round $\sos$-relaxation of the problem.
\begin{lemma}
Let $G = (V,E)$ be a graph and let $\clq(G,k)$ denote the clique axioms as defined by Equations \ref{eq:maxclique}. Suppose that there exists a dual certificate $\m:\pnr \rgta \reals$ for $\clq(G,k)$ as defined in \dref{def:dualcert}. Then, the value of the $r$-round $\sos$-relaxation for maximum clique given by \fref{fig:lasserresdp} is at least $k$.
\end{lemma}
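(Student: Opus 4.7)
The plan is to convert a dual certificate $\m$ into an explicit feasible primal solution to the SDP of \fref{fig:lasserresdp} achieving objective value $\geq k$. First, since the defining conditions of a dual certificate (PSD-ness and vanishing on products of axioms) are linear in $\m$, we may rescale and assume $\m(1) = 1$; the case $\m(1) = 0$ can be ruled out since $\m(1) = \m(1^2) = 0$ together with PSD-ness (Cauchy--Schwarz for the bilinear form $(P,Q) \mapsto \m(PQ)$) would force $\m \equiv 0$, which cannot satisfy $\m(\sum_i x_i - k) = 0$ unless $k = 0$.

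Next, form the moment matrix $\mathbf{M} \in \rnlr$ by $\mathbf{M}[S_1,S_2] = \m(X_{S_1 \cup S_2})$. The PSD property of $\m$, applied to polynomials $P = \sum_{|S|\leq r} a_S X_S$ (using the multilinear extension enforced by the Boolean axioms so that $P^2$ reduces to a sum of monomials $X_{S_1\cup S_2}$), shows $a^\top \mathbf{M} a = \m(P^2) \geq 0$. Hence $\mathbf{M}$ is PSD, and admits a Gram decomposition $\mathbf{M} = U U^\top$; set $\bm{U}_S$ to be the row of $U$ indexed by $S$, so that $\iprod{\bm{U}_{S_1}}{\bm{U}_{S_2}} = \m(X_{S_1\cup S_2})$.

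Now verify the constraints of the SDP one by one. The consistency constraint $\iprod{\bm{U}_{S_1}}{\bm{U}_{S_2}} = \iprod{\bm{U}_{S_3}}{\bm{U}_{S_4}}$ whenever $S_1\cup S_2 = S_3\cup S_4$ is immediate since both sides equal $\m(X_{S_1\cup S_2})$. The non-edge constraint $\iprod{\bm{U}_{\{i\}}}{\bm{U}_{\{j\}}} = 0$ for $\{i,j\}\notin E$ follows from $\m(x_i x_j \cdot 1) = 0$, which is one of the clique axioms. The normalization $\|\bm{U}_\emptyset\|_2^2 = \m(1) = 1$ holds by construction. Finally, the objective computation uses the knapsack axiom: $\m\bigl((\sum_i x_i - k)\cdot 1\bigr) = 0$ gives $\sum_i \m(x_i) = k\cdot \m(1) = k$, and using $\m(x_i^2) = \m(x_i)$ from the Boolean axioms we conclude $\sum_i \|\bm{U}_{\{i\}}\|_2^2 = \sum_i \m(x_i) = k$.

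The one step that takes a little care is the range constraint $\iprod{\bm{U}_{S_1}}{\bm{U}_{S_2}} \in [0,1]$. Nonnegativity follows by writing $\m(X_{S_1\cup S_2}) = \m(X_{S_1\cup S_2}^2) \geq 0$, using multilinearity. For the upper bound, apply Cauchy--Schwarz for the PSD bilinear form with $P = X_{S_1\cup S_2}$ and $Q = 1$:
\[
  \m(X_{S_1\cup S_2})^2 \;=\; \m(P\cdot Q)^2 \;\leq\; \m(P^2)\,\m(Q^2) \;=\; \m(X_{S_1\cup S_2})\cdot \m(1) \;=\; \m(X_{S_1\cup S_2}),
\]
yielding $\m(X_{S_1\cup S_2}) \leq 1$. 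This is the only place where one must use both the Boolean axioms (to get $\m(X_S^2) = \m(X_S)$) and PSD-ness of $\m$ (to invoke Cauchy--Schwarz), and is the main (mild) obstacle in the argument; everything else is a direct translation between the language of linear functionals and that of vector-valued SDP solutions.
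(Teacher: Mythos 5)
Your proof follows essentially the same route as the paper's: build the moment matrix, take a Gram decomposition, read off the SDP vectors as rows, and verify the constraints one by one. You are actually somewhat more thorough than the paper on the range constraint --- the paper only checks $\|\bm{U}_S\|_2 \leq 1$ (via $\iprod{\bm{U}_S}{\bm{U}_S} = \iprod{\bm{U}_S}{\bm{U}_\emptyset} \leq \|\bm{U}_S\|_2$), whereas you verify $\iprod{\bm{U}_{S_1}}{\bm{U}_{S_2}} \in [0,1]$ for all pairs $S_1,S_2$.

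One small slip in your side remark about ruling out $\m(1) = 0$: you correctly deduce via Cauchy--Schwarz that $\m(1)=0$ forces $\m \equiv 0$, but then claim this ``cannot satisfy $\m(\sum_i x_i - k) = 0$ unless $k = 0$.'' That last step is wrong: the zero functional satisfies $\m(\sum_i x_i - k) = \sum_i \m(x_i) - k\,\m(1) = 0 - 0 = 0$ for every $k$, since $\m(k) = k\,\m(1)$. The zero map \emph{is} a dual certificate under \dref{def:dualcert}, so $\m(1) = 0$ is not actually ruled out by the axioms alone. The paper sidesteps this by writing ``without loss of generality suppose $\mm(\emptyset,\emptyset) = 1$,'' which is justified in context because the explicit certificate constructed in \eref{eq:dualsol} has $\m(1) = \deg_G(\emptyset) > 0$ with high probability; a cleaner way to present your sanity check would be to observe that the lemma is vacuous for the zero certificate and that any \emph{nonzero} dual certificate must have $\m(1) > 0$ by the same Cauchy--Schwarz argument, after which the normalization goes through. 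The main body of your argument is correct and mirrors the paper's.
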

\begin{proof}
Let $\m:\pnr \rgta \reals$ be the dual certificate and $\mm \in \rnlr$ be the corresponding PSD matrix. Without loss of generality suppose that $\mm(\emptyset,\emptyset) = 1$. Let $\mm = U U^\dagger$, where $U = \reals^{\bnlr \times N}$ for some $N$. Finally, for $S \in \bnlr$, let $\bm{U}_S$ be the $S$'th row of $U$. We claim that the collection $(\bm{U}_S$, $|S| \leq r)$ gives a feasible solution for the SDP in \fref{fig:lasserresdp}. 

Observe that for any two subsets $S_1, S_2 \in \bnlr$, 
$$\iprod{\bm{U}_{S_1}}{\bm{U}_{S_2}} = \mm(S_1, S_2) = \m(X_{S_1 \cup S_2}).$$
Therefore, the vectors $(\bm{U}_S: |S| \leq r)$ satisfy the first two constraints of \fref{fig:lasserresdp} as $\m$ is a dual certificate. Further, $\|\bm{U}_\emptyset\|^2 = \mm(\emptyset,\emptyset) = 1$ and for any set $S$,
$$\|\bm{U}_S\|_2^2 = \iprod{\bm{U}_S}{\bm{U}_S} = \iprod{\bm{U}_S}{\bm{U}_\emptyset} \leq \|\bm{U}_S\|_2,$$
so that $\|\bm{U}_S\| \leq 1$. Thus, $(\bm{U}_S$: $|S| \leq r)$ give a feasible solution for the program in \fref{fig:lasserresdp}. Finally, the value of the solution is 
$$\sum_{i \in V} \|\bm{U}_{\{i\}}\|_2^2 = \sum_{i \in V} \m(X_{\{i\}}) = k.$$
This proves the lemma.
\end{proof}

Our main theorems now follow.
\begin{proof}[Proof of \tref{th:mainhierarchy}]
Let $G \lfta G(n,1/2)$. Then, from the above lemma and the proof of \tref{th:main} (where we showed the existence of a dual certificate for the clique axioms), the value of the $r$-round $\sos$-relaxation for max-clique on $G$ is at least $n^{1/2r}/C^{r} (\log n)^{1/r}$ with high probability. The claim follows as the integral value is $(2+o(1))\log_2 n$ with high probability. 
\end{proof}

\begin{proof}[Proof of \cref{cor:mainhierarchy}]
The value of the relaxation in \fref{fig:lasserresdp} is clearly monotone with respect to adding edges. Therefore, from the above argument, for $G \lfta G(n,1/2,t)$ the value of the $r$-round $\sos$-relaxation for max-clique on $G$ is at least $n^{1/2r}/C^{r} (\log n)^{1/r}$ with high probability. The claim follows as the integral value is $t$ with high probability.
\end{proof}

\end{document}